\definecolor{darkblue}{rgb}{0,0.4,0.9}
\definecolor{gray10}{rgb}{0.1,0.1,0.1}
\definecolor{gray20}{rgb}{0.2,0.2,0.2}
\definecolor{gray30}{rgb}{0.3,0.3,0.3}
\definecolor{gray40}{rgb}{0.4,0.4,0.4}
\definecolor{gray60}{rgb}{0.6,0.6,0.6}
\definecolor{gray80}{rgb}{0.8,0.8,0.8}
\definecolor{gray90}{rgb}{0.9,0.9,.9}
\definecolor{gray95}{rgb}{0.95,0.95,.95}
\definecolor{gray96}{rgb}{0.96,0.96,.96}
\definecolor{lgreen} {RGB}{180,210,100}
\definecolor{dblue}  {RGB}{20,66,129}
\definecolor{ddblue} {RGB}{11,36,69}
\definecolor{lred}   {RGB}{220,0,0}
\definecolor{nred}   {RGB}{224,0,0}
\definecolor{norange}{RGB}{230,120,20}
\definecolor{nyellow}{RGB}{255,221,0}
\definecolor{ngreen} {RGB}{98,158,31}
\definecolor{dgreen} {RGB}{78,138,21}
\definecolor{nblue}  {RGB}{28,130,185}
\definecolor{jblue}  {RGB}{20,50,100}
\definecolor{nnyellow}{RGB}{235,200,0}
\definecolor{purple}{RGB}{150, 0, 120}
\definecolor{sgGreen} {RGB}{20, 180, 50}
\definecolor{revised}{rgb}{0,0,0.9}
\newtheorem{definition}{Definition}
\newtheorem{theorem}{Theorem}
\newtheorem{assumption}{Assumption}
\newtheorem{lemma}{Lemma}
\newcommand{\openr}{\hbox{${\rm I\kern-.2em R}$}}
\newcommand{\openn}{\hbox{${\rm I\kern-.2em N}$}}
\newcommand{\indep}{\perp \!\!\! \perp}
\title{Adaptive Sequential Surveillance with Network and Temporal Dependence}
\author{\name Ivana Malenica* \email 
imalenica@berkeley.edu \\
       \addr Division of Biostatistics\\
       University of California, Berkeley
       \AND
       \name Jeremy R. Coyle* \email 
       jeremyrcoyle@gmail.com  \\
       \addr Preva Group\\
       WA, USA
       \AND
       \name Mark J. van der Laan \email 
       laan@berkeley.edu \\
       \addr Division of Biostatistics\\
       University of California, Berkeley
       \AND
       \name Maya L. Petersen \email 
       mayaliv@berkeley.edu \\
       \addr Division of Biostatistics\\
       University of California, Berkeley}
\begin{document}

\maketitle

\begin{abstract}

Strategic test allocation plays a major role in the control of both emerging and existing pandemics (e.g., COVID-19, HIV). Widespread testing supports effective epidemic control by (1) reducing transmission via identifying cases, and (2) tracking outbreak dynamics to inform targeted interventions. However, infectious disease surveillance presents unique statistical challenges. For instance, the true outcome of interest 
--- one's positive infectious status, is often a latent variable. In addition, 
presence of both network and temporal dependence reduces the data to a single observation. As testing entire populations regularly is neither efficient nor feasible, standard approaches to testing recommend simple \textit{rule-based} testing strategies (e.g., symptom based, contact tracing), without taking into account individual risk. In this work, we study an adaptive sequential design 
which allows for unspecified dependence among individuals and across time. Our causal target parameter is the mean latent outcome we would have obtained after one time-step, if, starting at time $t$ given the observed past, we had carried out a stochastic intervention that maximizes the outcome under a resource constraint. We propose an Online Super Learner for adaptive sequential surveillance that learns the optimal choice of tests strategies over time while adapting to the current state of the outbreak. Relying on a series of working models, the proposed method learn across samples, through time, or both: based on the underlying (unknown) structure in the data. We present an identification result for the latent outcome in terms of the observed data, and demonstrate the superior performance of the proposed strategy in a simulation modeling a residential university environment during the COVID-19 pandemic.

\end{abstract}
{\bf Key words:} adaptive sequential design, epidemics, infectious disease, optimal individualized treatment, surveillance, TMLE

\section{Introduction}\label{sec::sec1}

Effective testing strategies play a major role in epidemic control, for both newly emerging and continuing pandemics. This is particularly true for infectious diseases such as COVID-19 and HIV, because asymptomatic transmission is common. Testing strategies have several interrelated effects on epidemic control. First, by identifying cases, testing can directly trigger interventions at the individual level to prevent onward transmission --- isolation in COVID-19 or initiation of effective antiretroviral therapy in HIV (an intervention that both benefits the individual and dramatically reduces onward transmission risk). Second, testing provides key insights on epidemic dynamics --- including information on individual risk in the context of shifting epidemic trajectory, that can be used to guide subsequent testing decisions. 

In realistic epidemic control settings, testing an entire population at regular intervals is unlikely to be most effective or efficient. Instead, the objective is to make the most effective use of a constrained testing capacity for reducing subsequent cases. While the benefit of diagnosing a single case may vary between individuals (due to heterogeneity in individual risk of onward transmission resulting from factors such as contact rates, infectiousness, and network position), numbers of new cases identified provides one common metric for guiding testing deployment. To ensure effective use of limited testing resources, testing would be prioritized to individuals with a high risk of infection. However, in practice, an individual's risk of infection is unknown and must be learned from the data at hand; this is complicated by the fact that i) the outcome of interest (infection status) is latent except for those individuals who are tested, and ii) risk profiles are likely to change over time as the epidemic evolves. 

In the face of these challenges, approaches to targeted infectious disease surveillance often rely on simple testing strategies. We differentiate between \textit{rule-based} testing, in which simple deterministic (and often static) rules are used to guide test allocation (e.g., based on symptoms, location, timing or network) and \textit{risk-based} testing, where estimated risk is learned from the data and may evolve over time (e.g., testing individuals at an estimated higher risk of infection). Approaches to COVID-19 testing employed during the first few years of the pandemic provides an example. Testing strategies for residential university campuses focused mostly on \textit{rule-based} testing: symptom tracking and contact tracing, as well as scheduled on-campus screening with varying frequency \citep{Gressman2020, Bradley2020, Martin2020, Paltiel2020, Bahl2021, Rennert2021, vanderscaaf2021, schultes2021, Chang2021, Poole2021, Lopman2021, Ghaffarzadegan2021}. In general, the predominant infectious disease testing recommendation made by the World Health Organization suggested assigning tests to individuals having (i) symptoms consistent with COVID-19, (ii) contact with confirmed or suspected COVID-19 cases and (iii) evidence of recent travel history \citep{who2020}. Alternate suggestions advocate for fast and frequent random population testing \citep{larremore2021} and scheduled screening with repeated tests \citep{walke2020}. While contact tracing via efficient tracking system can be advantageous, its implementation is costly and not comprehensive enough as the spread of infectious disease advances \citep{gardner2021}. Other simple \textit{rule-based} strategies tend to miss asymptomatic infections (e.g., symptom-based), or require significant financial burden and compliance for a large and heterogeneous population (e.g., frequent random testing). Experiences with COVID-19 showed that in many cases it is not clear how to distribute tests across different prioritization groups. Analogous \textit{rule-based} approaches are often applied in HIV testing, including  ``key population testing" among groups known to be at higher risk and index partner testing. While such approaches can be effective in finding some individuals living with HIV, they miss others who lack a clear known risk profile \citep{balzer2019}.

Other concentrated efforts consisted of finding optimal testing strategies that inform epidemic dynamics \citep{chatzimanolakis2020} and helping to reduce disease spread \citep{biswas2020, jonnerby2020,gregg2021,du2021}. In particular, \cite{jonnerby2020} focuses on optimal allocations designed as a combination of group and segmented testing; segments of the population based on occupation, age and geographical location are given testing priority. Both \cite{biswas2020} and \cite{gregg2021} advocate for contextual bandits as a possible approach to the optimal testing allocation, with \cite{biswas2020} additionally suggesting an utility-based active learning solution. On the other hand, \cite{du2021} develop a probabilistic framework accounting for resource limitations, imperfect testing and the need for prioritizing higher risk patient populations. However, all of the proposed strategies impose strong modeling assumptions --- either on the type of dependence allowed (e.g. assuming a homogeneous Markov Decision Process), modeling conditional probabilities necessary to estimate the number of positive tests, or assuming which strata of the population constitutes at-risk profile. Finally, none of the proposed strategies provide an universal approach for emerging and existing epidemics. 

In this work, we propose an adaptive sequential design for a setting with network and temporal dependence where the goal is to optimize a short term outcome. The statistical problem is handled within a fully nonparametric model, respecting the true (unknown) dependence structure. While the proposed method is very general, it is particularly suited for infectious disease surveillance and epidemic control. We consider a longitudinal structure following $n$ individuals over a trajectory until time $\tau$. At each time point $t$ for sample $i$, one observes the exposure variable $A_i(t)$ (e.g., indicator of testing), outcome $Y_i(t)$ (e.g., health status) and other time-varying covariates $L_i(t)$ (e.g., network structure, location, symptoms). For an infectious disease surveillance, a decision maker/experimenter is in charge of assigning a test $A_i(t)$ to sample $i$ at time $t$, then collecting a vector of measurements $L_i(t)$ for the same individual, including the outcome. The exposure of interest is defined as a known stochastic intervention, where each treatment denotes a specific testing design (e.g., \textit{rule-based} or \textit{risk-based} testing). We study a setting in which the same decision maker can also adapt treatment assignment over time in response to past observations. Structuring the test allocation problem as an adaptive sequential design is paramount --- one can adapt the testing strategy as the infectious disease trajectory changes and other variants become dominant. 

In a setting where the goal is surveillance and control, it is natural to define performance of a treatment rule in terms of a short-term average over samples. Our causal target parameter is defined as the point prevalence of infection we would have obtained after one time-step, if, starting at time $t$ given the observed past, we had carried out a stochastic intervention $g_{t}^*$. The main goal is to optimize the next time-point outcome  (i.e. minimize the number of individuals infected) under $g_{t}^*$, at each $t$, under a possible resource constraint. Alternatively, one can also seek to optimize the short-term outcome under stochastic intervention as an average over time, therefore targeting the entire trajectory. This history-adjusted optimal choice for a single time point intervention then defines a new adaptive design over time, which we denote the \textit{Online Super Learner (SL) for adaptive sequential surveillance}. The regret minimization objective of the proposed design ensures that we assign tests such that as many infectious individuals as possible are subsequently diagnosed. As the design is adaptive, it learns the optimal choice of test strategies over time, responding to the current state of the epidemic. 

The proposed adaptive sequential design has crucial advantages over competing methods which make it particularly effective in the infectious disease context. The key strengths of our method are that we do not have to make strong (conditional) independence assumptions, or model network and time dependence. Instead of imposing unrealistic assumptions on the statistical model, the proposed method selects among adaptive designs with a short term performance Online Super Learner \citep{benkeser2018, malenica2021personalized}. As such, it imposes an honest benchmark to choose the best performing estimate in terms of the adaptive design performance. The necessary parts of the design (e.g., conditional expectation of outcome given the past) are estimated via an Online Super Learner, which relies on working models for its dependence structure. Therefore, the proposed method decides whether to learn across samples, through time, or both, based on the underlying (unknown) structure in the data. This is in contrast to previously described adaptive sequential designs, which rely on conditional independence assumptions (across time or samples) in order to deal with unknown dependence \citep{malenica2021adaptive, bibaut2021}. Secondly, as the true infectious status is unknown, the proposed target parameter is defined in terms of a latent outcome. We show that the average of true latent infectious status at time $t$ can be identified as the average of observed outcomes. As such, the statistical target parameter is defined in terms of the observed outcome, delineated as a function of the stochastic intervention we implement.

The article structure is as follows. In Section \ref{sec::formulation} we formally define the statistical estimation problem in the most general case, consisting of specifying notation, likelihood, and the statistical model. In subsection \ref{working_models}, we describe all the relevant working models, including assumptions underlying each. We define the target parameter, causal assumptions, and provide identification results in subsection \ref{target_parameter}. In Section \ref{sec::proposed_design}, we proceed to describe the proposed adaptive design, denoted as the Online Super Learner for adaptive sequential surveillance. Section \ref{sec::proposed_design} includes various proposed selectors, aimed at learning the optimal testing strategy for the sake of adaptive design performance. Section \ref{sec::sim_results} contains simulation results based on the proposed agent-based model for moderate size residential campus. We provide details on the agent-based model of SARS-CoV-2 transmission used for simulations, as well as how each testing strategy considered can be described as a stochastic intervention, in Appendix Section \ref{sec::fullmodel}. We conclude with a short discussion in Section \ref{sec::discussion}.

\section{Formulation of the Estimation Problem}\label{sec::formulation}
\subsection{Data and the Causal Model}\label{section1}

Consider a random variable denoted as $O_i$ for $i = 1, \ldots, n$, where $O_i$ is a sample $i$ trajectory. For each individual $i$, we define the following longitudinal data structure where
\begin{equation*}
    O_i = (L_i(0), A_i(1), L_i(1), \ldots, A_i(\tau), L_i(\tau)), 
\end{equation*}
corresponding to observations from time $t = 0$ to the final time point $t= \tau$.  Within time point $t$, we arbitrarily order data points by increasing sample index $i$, such that
\begin{equation*}
    (A_1(t), \ldots, A_n(t), L_1(t), \ldots, L_n(t))
\end{equation*}
reflects the unit ordering. We further decompose the trajectory of a given sample $i$ into baseline and time-varying parts. In particular, we define $O_i(0) = L_i(0)$ as a vector of baseline covariates which, by definition, are initiated at $t=0$. For infectious disease surveillance, $L_i(0)$ includes baseline infectious status, as well as other covariates (e.g., demographic information, initial network structure). The time-varying part of sample $i$ trajectory decomposes as $O_i(t) = (A_i(t), L_i(t))$, for $t = 1, \ldots, \tau$; it includes the treatment status occurring before the response variable and time-varying covariates, all indexed by time $t$. In particular, we define the exposure variable $A_i(t)$, corresponding to an indicator of being tested in an infectious disease surveillance design at time $t$. We define $L_i(t)$ as a vector of time-varying covariates, with the first component being the response variable --- infectious status for sample $i$ at time $t$. In addition to outcome, $L_i(t)$ also tracks the risk profile of unit $i$, as well as information on other units $\{1,\ldots, n\} \setminus \{i\}$ that belong to the network of sample $i$. The network of a given individual $i$ contained in $L_i(t)$ is denoted as $F_i(t)$, which reflects all the samples connected to individual $i$ at time $t$. In particular, we allow $|F_i(t)|$ to vary in $i$, but assume that this number is bounded by some known global constant $K$ that does not depend on $n$. Finally, we emphasize that the true infectious status for each sample and at each time point is typically not observed. Hence, we define the true latent outcome, notably the infectious status for sample $i$ at time $t$, as $Y^l_i(t)$. The observed outcome for sample $i$ at time point $t$ is denoted as $Y_i(t)$, where $L_i(t) = (Y_i(t), \ldots, F_i(t), \ldots)$.

For $n$ observed trajectories, we write $O = O^{\tau, n} = \{O_i\}_{i=1}^{n}$, where $O$ is a simplified notation that does not make dependence on $\tau$ and $n$ explicit. Under this notation, data observed throughout the course of the trial is $O$, with $O(t)=\{O_i(t)\}_{i=1}^{n}$ being the collection of $n$ time $t$-specific points. Similarly, let $L(t)$ and $A(t)$ denote $n$ dimensional time-specific vectors, effectively including time $t$-specific information across all $n$ collected samples; with that, we have that $L(t) = (L_1(t), \ldots, L_n(t))$ and $A(t) = (A_1(t), \ldots, A_n(t))$. Further, we write $Pa(O(t)) = \bar{O}(t-1) = (O(0), \ldots, O(t-1))$ to represent history of all samples up to time $t$. The complete histories until node $L(t)$ and $A(t)$ are denoted as $Pa(L(t)) = (\bar{O}(t-1), A(t))$ and $Pa(A(t)) = (\bar{O}(t-1))$, which are time $t$ histories of all $n$ samples. 
We also let time and unit-specific histories $Pa({A}_i(t))$ and $Pa({L}_i(t))$ denote all observations that come before $A_i(t)$ and $L_i(t)$, according to both time and sample ordering. In particular, let $Pa({L}_i(t)) = (Pa(L(t)), L_{1}(t), \ldots, L_{i-1}(t))$ and $Pa({A}_i(t)) = (Pa(A(t)), A_{1}(t), \ldots, A_{i-1}(t))$, where $i-1$ denotes sequential samples until sample $i$. Consequently, we let $Pa({O}_i(t)) = \bar{O}_i(t) = (\bar{O}(t-1), O_{1}(t), \ldots, O_{i-1}(t))$, where $Pa({O}_i(t))$ includes all history until time $t-1$ and $t$-specific samples until individual $i$.



\subsection{Statistical Model}

Let $\mathcal{M}$ denote the \textit{statistical model} for the probability distribution of the data, which is nonparametric, beyond possible knowledge of the treatment mechanism (i.e., the testing allocation strategy used). The more we know, or are willing to assume about the experiment that produces the data, the smaller the model. Let $P_0 \in \mathcal{M}$ denote the true probability distribution of $O$, such that $O \sim P_0$, and let $P$ denote any probability distribution where $P \in \mathcal{M}$. We let $p_o$ denote the density of $P_0$ with respect to (w.r.t) a dominating measure $\mu$. The likelihood of $o$ can be factorized according to the time-ordering as follows:
\begin{align}\label{likelihood}
    p_0(o) 
    &=  \prod_{i=1}^n p_{0,l(0)}(l_i(0)) \prod_{t=1}^{\tau} p_{0,a_i(t)}(a_i(t)|  Pa({a}_i(t)))  p_{0,l_i(t)}(l_i(t)| Pa({l}_i(t))) \\
    &=  \prod_{i=1}^n p_{0,l(0)}(l_i(0)) \prod_{t=1}^{\tau} g_{0,i,t}({a}_i(t) \mid Pa({a}_i(t))) q_{0,i,t}({l}_i(t) \mid Pa({l}_i(t))) \nonumber ,
\end{align}
where we let $a_i(t) \mapsto p_{0,a_i(t)}(a_i(t) \mid Pa({a}_i(t)))$ and $l_i(t) \mapsto p_{0,l_i(t)}(l_i(t) \mid Pa({l}_i(t)))$ denote conditional densities w.r.t. the dominating measures $\mu_A$ and $\mu_L$, respectively. We use shorthand notation for conditional densities and distributions of the relevant nodes. In particular, we write $q_{0,i,t}$ as the true $(i,t)$-specific conditional density of $L_i(t)$ based on the observed past until time $t$, $Pa({L}_i(t))$. The corresponding true conditional distribution of $L_i(t)$ conditional on $Pa({L}_i(t))$ is written as $Q_{0,i,t}$. At time $t$, $g_{0,i,t}$ reflects the true probability of drawing the testing indicator $A_i(t)$ conditional on the past until time $t$. We consider testing deployment settings in which $g_{0,i,t}$ is known and in control of the experimenter for most testing allocations; practically, it denotes a particular sampling and testing design implemented for sample $i$ at time $t$. We define ${\bar{Q}}_{0,i,t}(Pa({L}(t))) = {\bar{Q}}_{0,i,t}(A(t), Pa({O}(t))) \equiv \mathbb{E}_{0}[{Y}(t) | A(t), Pa({O}(t))]$ as the true conditional expectation of ${Y}(t)$ given the observed past. In order to emphasize the dependence on the treatment mechanism $g_{i,t}$, we might also write ${\bar{Q}}^{g_{i,t}}_{0,i,t}$ as the conditional expectation of outcome given the observed past under $g_{i,t}$. 

Due to the data ordering and considered application of the proposed method, we make the following two remarks on the network dependence structure:

\vspace{3mm}
\noindent
\textbf{Remark 1} (No dependence among treatment at time $t$).
\textit{For every $t \in [\tau]$ and $i \in [n]$, $A(t) = (A_1(t), \ldots, A_n(t))$ are independent conditional on 
$Pa(A(t))$.
}

\vspace{3mm}
\noindent
\textbf{Remark 2} (No dependence among outcomes at time $t$).
\textit{For every $t \in [\tau]$ and $i \in [n]$, $Y(t) = (Y_1(t), \ldots, Y_n(t))$ are independent conditional on 
$Pa(L(t))$.
}

\vspace{3mm}
\noindent
Both Remark (1) and (2) follow from the time and sample ordering, and are simply emphasized here. Testing is allocated based on all of the observed past and not influenced by other tests at time $t$.  The observed outcome is a direct consequence of tested individuals and observed past, but not influenced by other observed infectious individuals at time $t$. Remark (1) implies there is no interference among units, as test of subject $i$ does not have an effect on the observed outcome of subject $j$. Similarly, a positive observed outcome for $i$ does not affect the outcome of sample $j$ at $t$, but can at $t+1$, for example. As such, network dependence at $t$ strives from peer effects propagated via past until time $t-1$, including network at $t-1$ and infected individuals until $t-1$. Note that by Remark (1) and (2) we can write 
${\bar{Q}}_{0,i,t}(Pa({L}(t))) = \mathbb{E}_{0}[{Y}_i(t) | A_i(t), Pa({O}(t))]$ and $g_{0,i,t}({a}_i(t) \mid Pa({A}(t)))$. Finally, recall that $Pa({O}(t)) = \bar{O}(t-1)$ denotes observed history of $O(t)$ until time $t$. With that in mind, we write $P_{O_i(t) \mid \bar{O}(t-1)}$ (shorthand, $P_{\bar{O}(t-1)}$) as the time $t$ conditional distribution of $O_i(t)$ given the observed past $\bar{O}(t-1)$, which captures both temporal and network dependence.

Note that the decomposition presented in likelihood expression \eqref{likelihood} makes no assumptions regarding the interdependence of individuals in the network or across time. Therefore, the data reduces to a dependent observation with temporal and network dependence, and we observe only a single draw from $P_0$. In order to learn relevant parts of the data generating distribution, we would have to put some restrictions on the statistical model $\mathcal{M}$. In the following, we discuss several possible working models that enable us to learn parts of the likelihood, without assuming any of them explicitly. Via the proposed working models, depending on the extent to which they hold, we can take advantage of the different types of dependence structures without explicitly assuming any. As such, one can learn through time (therefore assuming some level of conditional stationarity), learn through the number of individuals (therefore assuming independence of samples given a known network), or both. We emphasize that one of the strengths of the proposed method is that it does not impose any direct assumptions on the statistical model $\mathcal{M}$. In the following, we describe all considered working models, and motivation behind each. 

\subsubsection{Working Models}\label{working_models}

We start by restricting the complexity of dependence allowed by supposing that each $L_i(t)$ can depend on the past only through a fixed dimensional summary measure of history, instead of the entire observed history. As such, we assume that $q_{0,i,t}({L}_i(t) \mid Pa({L}_i(t)))$ depends on the past only through a fixed dimensional summary measure $C_{L_i}(t)$, where $C_{L_i}(t)$ is a function of the observed history. Therefore, for every $t \in [\tau]$ and $i \in [N]$, $L_i(t)$ is independent of its past conditional on $C_{L_i}(t)$ and $q_{i,t}({L}_i(t) \mid Pa({L}_i(t))) = q_{i,t}({L}_i(t) \mid C_{L_i}(t))$. For some applications, the summary measure might cover a limited history, and the dependent process has a finite memory allowing us to learn through time. A particular example of summary measures are fixed dimensional extractions from the complete history, such that $C_{L_i}(t) \ = \ h_{L_i}(Pa({L}_i(t))) \in \mathbb{R}^{k}$ is a $(k)$-dimensional extraction of the form $C_{L_i}(t) \ = \ \{(L_j(s),A_j(s)) \ : \ s = t, t-1, t-2, \ldots, t-k, j \in [n]\}$. For other applications, the fixed dimensional summary measure might be a function of the sample $i$'s network; as such, we might know that conditional probability of $L_i(t)$ depends only on the history of $j$ samples, where $j \in F_i(t) \cup i$. In case of both time and network dependence, $C_{L_i}(t)$ could be a function of both sample $i$'s network and previous past time-points where $C_{L_i}(t) \ = \ \{(L_j(s), A_j(s)) \ : \ s = t, t-1, t-2, \ldots, t-k, j \in F_i(t) \cup i\}$; then $C_{L_i}(t)$ is a summary measure of the history over last $k$ steps of a set $F_i(t)$ of at most $K$ friends. We note that, if $F_i(t) = \emptyset$, our formulation reduces to an i.i.d. setting across samples. In order to formally present our target parameter under a working model, we make the following assumption on the decomposition of the fixed dimensional summary measure, as stated below. 
\begin{assumption}[Decomposition of the fixed dimensional summary]\label{decompose} 
\textit{For every $t \in [\tau]$ and $i \in [n]$, the fixed dimensional summary measure $C_{L_i}(t)$ can be written as
\begin{equation*}
   C_{L_i}(t) = (A_i(t), C_{A_i}(t)),
\end{equation*}
where $C_{A_i}(t) = h_{A_i}(Pa({A}_i(t))) = h_{A_i}(Pa({O}_i(t)))$.}
\end{assumption}

\begin{assumption}[Conditional independence given a summary measure]\label{fixed_dim_summary}
\textit{For every $t \in [\tau]$ and $i \in [n]$, 
\begin{equation*}
    q_{i,t}({l}_i(t) \mid  c_{L_i}(t)) = q_{i,t}({l}_i(t) \mid Pa({l}_i(t)))
\end{equation*}
where $c_{L_i}(t)$ is the observed fixed dimensional summary of the past until time $t$.}
\end{assumption}

The following key assumption is a modeling assumption on the conditional density of $L_i(t)$ given the observed past. Consistent with Assumption 4 in \cite{bibaut2021}, we might assume that the conditional distribution of $L_i(t)$ given the observed fixed dimensional summary of the history is a constant function across samples and time. As such, there exists a common in $i$ and $t$ conditional density $q$ such that $q_{i,t} = q$. Drawing from the reinforcement learning literature, this assumption is analogous to the homogeneity assumption for the Markov Decision Process \citep{alagoz2010}. Under Assumption \ref{fixed_dim_summary} and allowing for a common in $i$ and $t$ conditional density of $L_i(t)$ given the history, we can rewrite the likelihood from equation \eqref{likelihood} as:
\begin{align}\label{likelihood_tn}
    p(o) =  \prod_{i=1}^n p_{l(0)}(l_i(0)) \prod_{t=1}^{\tau} g_{0,i,t}({a}_i(t) \mid Pa({a}_i(t))) q({l}_i(t) \mid c_{L_i}(t)).
\end{align}
Note that, since $g_{0,i,t}$ is known, we don't need to put any restrictions on the treatment mechanism given the past. We emphasize that assuming a conditional density $q$ which is common in $i$ and $t$ still allows for a rich network and time-dependent structure given $C_{L_i}(t)$. The proposed formulation allows us to learn and measure factors which result in changes over time and network, captured with varying $C_{L_i}(t)$. For example, we could have that $C_{L_i}(t) \ = \ h_{L_i}(Pa({L}_i(t))) \in \mathbb{R}^{k \times j}$ is a $(k,j)$-dimensional extraction of the form $C_{L_i}(t) \ = \ \{(L_j(s),A_j(s)) \ : \ s = t, t-1, t-2, \ldots, t-k, \ \ j \in F_i(t) \cup \ i\}$ where $F_i(t)<K$ and $K$ is not a function of $n$. As such, our working model covers finite memory time dependence and network structure where each individual has a limited number of contacts, both of which could possibly vary as the trajectory advances. Alternatively, the proposed working model could cover dependence structures described by summary measures of the time series pattern (e.g.: moving average, finite memory, features related to STL decomposition of the series, spectral entropy, Hurst coefficient) and summary measures of the current state of the network (e.g.: current state of the epidemic, percent isolated, percent wearing masks). 
Overall, in the adaptive sequential surveillance design, such modeling assumptions equate to conditional stationarity of the outcome mechanism over the entire trajectory (common in time) and for each sample (common across samples) given a fixed dimensional summary of the past. Instead of assuming a common conditional density of $L_i(t)$, we might alternatively only need to assume a common conditional expectation. Then, under the common in $(i,t)$-working model, we have that $\bar{Q}_{i,t} = \bar{Q}$ instead of the full $q = q_{i,t}$. We write assumptions on the conditional expectation of $Y(t)$ given the observed past under the common in $(i,t)$ model as Assumption \ref{common_it}. With that, Assumption \ref{decompose}, \ref{fixed_dim_summary} and $\ref{common_it}$ constitute working model $\mathcal{M}^{tn}$.

\begin{assumption}[Common in $i$ and $t$ conditional expectation of the outcome]\label{common_it}
\textit{There exists a common across samples ($i$) and time ($t$) conditional expectation of $Y_i(t)$ given the observed past, such that  $\bar{Q}_{i,t} = \bar{Q}$ for every $t \in [\tau]$ and $i \in [n]$ where}
\begin{equation*}
{\bar{Q}}_{i,t}(A_i(t), C_{A_i}(t)) = {\bar{Q}}(A_i(t), C_{A_i}(t)).
\end{equation*}
\end{assumption}

\begin{definition}[Working model $\mathcal{M}^{tn}$]\label{work_model_1}
\textit{We define a working model $\mathcal{M}^{tn}$ as the set of distributions $P$ over the domain $\mathcal{O}$ that satisfy Assumptions \ref{decompose}, \ref{fixed_dim_summary} and \ref{common_it}.}
\end{definition}

Alternatively, we may assume that the conditional expectation of $Y(t)$ given the past is a $t$-common mechanism given the history, allowing for a possibly very dense network structure which might be observed during a highly contagious epidemic. Conditional on the observed fixed dimensional summary $C_{L_i}(t)$, $q_{i,t}$ is a common in $t$ density smooth enough to be learned through time. This working model assumption is analogous to models previously described in the time-series literature, extended to multiple trajectories \citep{vanderLaan2018onlinets, malenica2021adaptive,malenica2021personalized}. We can rewrite the likelihood from equation \eqref{likelihood} under common-in-$t$ density as follows
\begin{align}\label{lh::likelihood_t}
    p(o) =  \prod_{i=1}^n p_{l(0)}(l_i(0)) \prod_{t=1}^{\tau} g_{0,i,t}({a}_i(t) \mid Pa({a}_i(t))) q_{i}({l}_i(t) \mid c_{L_i}(t)).
\end{align}
In terms of the conditional expectation, we emphasize that the functional form of $\bar{Q}_{i,t} = \bar{Q}_{i}$ is unspecified, with the only assumption being that $\bar{Q}_i$ is common in time conditional on a fixed dimensional summary. In the current application, such modeling assumptions would equate to conditional stationarity of the expected outcome over the entire trajectory (common in time), but not common across samples. We denote the working model described by the Assumptions \ref{decompose}, \ref{fixed_dim_summary} and \ref{common_t} as $\mathcal{M}^t$. 


\begin{assumption}[Common in $t$ conditional expectation of the outcome]\label{common_t}
\textit{There exists a common across time ($t$) conditional expectation of $Y_i(t)$ given the observed past, such that $\bar{Q}_{i,t} = \bar{Q}_{i}$ for every $t \in [\tau]$ and $i \in [n]$ where}
\begin{equation*}
 {\bar{Q}}_{i,t}(A_i(t), C_{A_i}(t)) = {\bar{Q}}_i(A_i(t), C_{A_i}(t)).
\end{equation*}
\end{assumption}

\begin{definition}[Working model $\mathcal{M}^{t}$]\label{work_model_2}
\textit{We define a working model $\mathcal{M}^{t}$ as the set of distributions $P$ over the domain $\mathcal{O}$ that satisfy Assumptions \ref{decompose}, \ref{fixed_dim_summary} and \ref{common_t}.}
\end{definition}

Instead of learning across time, one might instead rely on asymptotics in the number of individuals. An important ingredient of this modeling approach is to assume that any dependence of unit $i$ can be fully described by a function of the known network over time. Let $F_i(t) \leq K$ denote the network for sample $i$ at time $t$. Then, there is a common in $i$ density, $q_t$, allowing for possibly very long and elaborate time-dependence. Similarly, there is a common-in-$i$ expectation conditional on a fixed dimensional summary measure $C_{L_i}(t)$. In contrast to decomposition presented in \eqref{likelihood}, likelihood under common-in-$i$ density is written as follows
\begin{align}\label{likelihood_n}
    p(o) =  \prod_{i=1}^n p_{l(0)}(l_i(0)) \prod_{t=1}^{\tau} g_{0,i,t}({a}_i(t) \mid Pa({a}_i(t))) q_{t}({l}_i(t) \mid c_{L_i}(t)).
\end{align}
Under no conditional stationarity assumption, one could use the recent estimates of $\bar{Q}_t$ in order to optimize the next sampling mechanism w.r.t the status of the epidemic a few time points in the future. This implies that it is possible to learn the common-in-$i$ expectation $\bar{Q}_t$ from a draw $O$ as $n \rightarrow \infty$, resulting in a well-defined statistical estimation problem. For the adaptive surveillance problem, this formulation allows us to learn across samples, as dynamics of the trajectory are not stationary over time, but possibly evolving. Assumptions \ref{decompose}, \ref{fixed_dim_summary} and \ref{common_i} constitute the working model $\mathcal{M}^n$.


\begin{assumption}[Common in $i$ conditional expectation of the outcome]\label{common_i}
\textit{There exists a common across samples ($i$) conditional expectation of $Y_i(t)$ given the observed past, such that $\bar{Q}_{i,t} = \bar{Q}_{t}$ for every $t \in [\tau]$ and $i \in [n]$ where}
\begin{equation*}
 {\bar{Q}}_{i,t}(A_i(t), C_{A_i}(t)) = {\bar{Q}}_t(A_i(t), C_{A_i}(t)).
\end{equation*}
\end{assumption}

\begin{definition}[Working model $\mathcal{M}^{n}$]\label{work_model_3}
\textit{We define a working model $\mathcal{M}^{n}$ as the set of distributions $P$ over the domain $\mathcal{O}$ that satisfy Assumptions \ref{decompose}, \ref{fixed_dim_summary} and \ref{common_i}.}
\end{definition}

\subsection{Target Parameters}\label{target_parameter}

In the following, we describe a counterfactual scenario in which the initial testing mechanism is replaced by user-defined conditional distributions, and define the corresponding target parameter of interest. 
Our main aim is to describe adaptive sequential surveillance for infectious disease under unknown network and time dependence. This entails defining the time $t$-specific testing strategy which optimizes the short term outcome among a set of proposed testing schemes. The optimal testing strategy maximizes the number of detected cases, with the target parameter under a resource constraint being of particular interest in practice. Instead of focusing only on the time $t$-parameter, we also define an average over the entire trajectory as a target parameter of interest. 

\subsubsection{Structural Equation Model}

In the previous section, we discuss the distribution of the observed data. Given a dataset, we can estimate parameters of this distribution. However, without more structure, statistical parameters do not have a causal interpretation. In order to translate the scientific question of interest into a formal causal quantity, we additionally specify a structural equation model (SEM; equivalently, structural causal model (SCM)) \citep{pearl2009}.
By specifying a SEM, we assume that each component of the data structure is a function of the observed endogenous variables and an unmeasured exogenous error term \citep{pearl2009}. We encode the time-ordering of the variables using the following SEM for each $t$:
\begin{align}\label{NPSEM0}
    L_i(0) &= z_{L_{i}(0)}(U_{L_{i}}(0)) \\ 
    A_i(t) &= z_{A_{i}(t)}(\bar{O}(t-1),U_{A_{i}}(t)) \nonumber \\
    L_i(t) &= z_{L_{i}(t)}((\bar{O}(t-1),A_i(t)),U_{L_{i}}(t)), \nonumber
\end{align}
where $U := (U_A, U_L)$ with $U_A := (U_{A_{i}}(t): t \in [\tau], i \in [n])$ and $U_L := (U_{L_{i}}(t): t \in [\tau], i \in [n])$. The unmeasured exogenous variables are sampled from $P_U$, such that $U \sim P_U$. Given an input $(U,O)$, structural equations $z_{A_{i}(t)}$ and $z_{L_{i}(t)}$ for each time $t \in [\tau]$ and sample $i \in [n]$ deterministically assign a value to each of the nodes. While we have a specification of $z_{A_{i}(t)}$ in a randomized trial, the structural equations  $z_{L_{i}(t)}$ do not restrict the functional form of the causal relationships for any $t$ or $i$. The SEM defines a collection of distributions $(U,O)$ representing the full data model, here defined in terms of $U$ and observed data $O$. Let $P_{0}^F$ denote the true probability distribution of $(U,O)$; in the remainder of the article, we will use the subscript ``0'' to indicate true probability distributions or components thereof. Here we emphasize that any distribution $P^F$ on the domain of the full data fully determines a corresponding distribution $P$ on the domain of the observed data. Finally, we denote the model for $P_{0}^F$ as $\mathcal{M}^F$, known as the \textit{causal model}. Here we emphasize that all dependence across units is not due to dependence of errors, but solely due to the interdependence between units as described by the SEM which allows for individuals' treatment and outcome to depend the history of other individuals. 

We can also define the \textit{time- and history- specific causal model}. Let $\mathcal{M}_t^F(\bar{o}(t-1))$ denote the set of conditional probability distributions $P_{\bar{O}(t-1)}^F$, which condition on the observed history by time $t$, $\bar{o}(t-1)$. In particular, $\mathcal{M}_t^F(\bar{o}(t-1))$ is compatible with the structural equations model \eqref{NPSEM0} by imposing $\bar{O}(t-1) = \bar{o}(t-1)$:
\begin{align}\label{NPSEM1}
    L_i(0) &= z_{L_{i}(0)}(U_{L_{i}}(0)) \\ 
    A_i(t) &= z_{A_{i}(t)}(\bar{o}(t-1),U_{A_{i}}(t)) \nonumber \\
    L_i(t) &= z_{L_{i}(t)}((\bar{o}(t-1),A_i(t)),
    U_{L_{i}}(t)). \nonumber
\end{align}

\subsubsection{Target Parameter on the SEM and Identifiability}

The causal model allows us to define counterfactual random variables as functions of $(U,O)$ corresponding with arbitrary interventions. In particular, we can replace the observed data generating distribution for the treatment mechanism by user-specified hypothetical conditional distributions; such non-degenerate choices of intervention distributions are referred to as \textit{stochastic interventions} \citep{diaz2012}. Let $g_{i,t}^*$ denote a stochastic intervention at time $t$ identified as a conditional distribution of $A_i^*(t)$ given the observed past. We write $g_{t}^* = \{g_{1,t}^*, \ldots, g_{n,t}^*\}$ for all $n$ interventions at time $t$. With that, $O^*(t)$ is the counterfactual full data generated from the SEM described in \eqref{NPSEM1} by replacing the equation associated with the exposure node by the counterfactual intervention $g_{i,t}^*$ at time $t$,
\begin{align}\label{NPSEM2}
    L_i(0) &= z_{L_{i}(0)}(U_{L_{i}}(0)) \\ 
    A_{i,g_{i,t}^*}(t) &\sim g_{i,t}^*(\cdot | \bar{o}(t-1)) \nonumber \\
    L_{i,g_{i,t}^*}(t) &= z_{l_{i}(t)}((\bar{o}(t-1), A_{i,g_{i,t}^*}(t)),U_{L_{i}}(t)). \nonumber
\end{align}
We write $(U(t), O^*(t))$ as the full post-intervention data at time $t$, with the post-intervention distribution denoted as $P_{\bar{O}(t-1)}^{F*}$.
Consequently, the counterfactual latent outcome under $g^*_{i,t}$ is written as ${Y}_{i,g_{i,t}^*}^{l}(t)$ for the sample $i$ at time $t$. 
We define our causal parameter of interest as
\begin{equation}\label{causal_psi_t}
    {\Psi}_{t,g^{*}_{t}}^{F}(P_{\bar{O}(t-1)}^F) =\mathbb{E}_{P_{\bar{O}(t-1)}^{F*}} \left[ \frac{1}{n}\sum_{i=1}^n {Y}_{i,g_{i,t}^*}^{l}(t) \right],
\end{equation}
which is the expectation of the counterfactual random variable ${Y}_{i,g_{i,t}^*}^{l}(t)$ generated by the
modified SEM as written in equation \eqref{NPSEM2}. Our causal target parameter is the mean latent outcome we would have obtained after one time-step, if, starting at time $t$ given the observed past, we had carried out intervention $g^{*}_{t}$.


By defining the causal quantity of interest in terms of stochastic interventions on the SEM and providing a link between the causal model and the observed data, we lay the groundwork for addressing identifiability through $P_0$. In order to express ${\Psi}_{t,g^{*}_{t}}^F(P_{\bar{O}(t-1)}^F)$ as a parameter of the distribution $P_{\bar{O}(t-1)}$ of the observed data $O$, we add two key assumptions on the SEM: the sequential randomization assumption (Assumption \ref{randomization}, which automatically holds by design) and the positivity assumption (Assumption \ref{positivity}).

\begin{assumption}[Sequential Randomization]\label{randomization}
\textit{For any $t \in [\tau]$ and $i \in [n]$, 
$$A_i(t) \indep {Y}_{i,g_{i,t}^*}^{l}(t) \mid Pa(A(t)) \ \ \text{and} \ \ A_{i,g_{i,t}^*}(t) \indep {Y}_{i,g_{i,t}^*}^{l}(t) \mid Pa(A(t)).$$}
\end{assumption}

\begin{assumption}[Positivity]\label{positivity}
\textit{For any $t \in [\tau]$ and $i \in [n]$ with $P(Pa(A(t)) = Pa(a(t))) > 0$, 
$$g_o(A_i(t) \mid Pa(A(t)) = Pa(a(t))) > 0.$$}
\end{assumption}


\begin{theorem}\label{lemma:identify}
Assume assumptions \ref{randomization} and \ref{positivity} hold. Under consistency, we denote the time $t$ value under the stochastic intervention $g^*_{t}$ as
\begin{align*}
    {\Psi}_{t,g^{*}_{t}}^F(P_{\bar{O}(t-1)}^F) = {\Psi}_{t,g^{*}_{t}}(P_{\bar{O}(t-1)}) &= \int_{a} \frac{1}{n}\sum_{i=1}^n \mathbb{E}_P [{Y}_i^l(t) \mid a, \bar{o}(t-1)] g_{i,t}^*(a \mid \bar{o}(t-1)) d\mu_a(a) \\ \nonumber &= \frac{1}{n}\sum_{i=1}^n \mathbb{E}_{\bar{Q}_{i,t},g_{i,t}^*}[Y_i(t) \mid \bar{o}(t-1)] 
\end{align*}
where the observed outcome is defined as $Y_i(t) = A_i(t)Y_i^l(t)/g_{i,t}(A_i(t) \mid \bar{O}(t-1))$ and $\psi_{t,g^{*}_{t}} = {\Psi}_{t,g^{*}_{t}}(P_{\bar{O}(t-1)})$.
\end{theorem}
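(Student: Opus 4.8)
The statement is an instance of the sequential $g$-computation (truncated factorization) identification result, with one extra ingredient needed to pass from the \emph{latent} outcome $Y_i^l(t)$ to the \emph{observed} inverse-probability-weighted outcome $Y_i(t)$. Since both the causal parameter and the claimed expression are averages $\frac1n\sum_{i=1}^n$ of per-unit quantities, by linearity of expectation it suffices to fix $i\in[n]$ and identify $\mathbb{E}_{P_{\bar o(t-1)}^{F*}}[Y_{i,g^*_{i,t}}^l(t)]$ from the observed-data law $P_{\bar O(t-1)}$.

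First I would unwind the stochastic intervention at the level of the SEM. By construction \eqref{NPSEM2}, the counterfactual $Y_{i,g^*_{i,t}}^l(t)$ is obtained by drawing $A_{i,g^*_{i,t}}(t)\sim g^*_{i,t}(\cdot\mid\bar o(t-1))$ and then evaluating the (unchanged) structural equation $z_{L_i(t)}$; hence, conditioning on $\bar O(t-1)=\bar o(t-1)$ and using the tower property,
\[
  \mathbb{E}_{P_{\bar o(t-1)}^{F*}}\big[Y_{i,g^*_{i,t}}^l(t)\big]
  = \int_a \mathbb{E}_{P^F}\big[Y_{i,a}^l(t)\mid \bar o(t-1)\big]\, g^*_{i,t}(a\mid \bar o(t-1))\, d\mu_a(a),
\]
where $Y_{i,a}^l(t)$ is the counterfactual under the deterministic assignment $A_i(t)=a$. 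The next step identifies $\mathbb{E}_{P^F}[Y_{i,a}^l(t)\mid\bar o(t-1)]$: by consistency $Y_{i,a}^l(t)=Y_i^l(t)$ on $\{A_i(t)=a\}$, and by Assumption~\ref{randomization} we have $A_i(t)\indep Y_{i,a}^l(t)\mid Pa(A(t))$; Remarks~(1)--(2) let me replace the conditioning set $Pa(A(t))$ (and $Pa(L(t))$) by $\bar O(t-1)$, since the within-time unit ordering introduces no further dependence into $A_i(t)$ or $Y_i(t)$. Together with Assumption~\ref{positivity}, which guarantees the event $\{A_i(t)=a,\ \bar O(t-1)=\bar o(t-1)\}$ has positive probability whenever $\bar o(t-1)$ does, this yields $\mathbb{E}_{P^F}[Y_{i,a}^l(t)\mid\bar o(t-1)] = \mathbb{E}_P[Y_i^l(t)\mid a,\bar o(t-1)]$, i.e.\ the first displayed equality of the theorem.

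It remains to rewrite this through the observed outcome. Plugging in the definition $Y_i(t)=A_i(t)Y_i^l(t)/g_{i,t}(A_i(t)\mid\bar O(t-1))$ and integrating over $A_i(t)$ against the mechanism in force, iterated expectations together with Assumption~\ref{randomization} factor the result as $\mathbb{E}\big[A_i(t)/g_{i,t}(A_i(t)\mid\bar O(t-1))\mid\bar o(t-1)\big]\cdot\mathbb{E}\big[Y_i^l(t)\mid\bar o(t-1)\big]$, and positivity forces the inverse-probability weight to average to one. Consequently the conditional mean $\bar Q_{i,t}$ of $Y_i(t)$ integrated against $g^*_{i,t}$ agrees with $\int_a\mathbb{E}_P[Y_i^l(t)\mid a,\bar o(t-1)]\,g^*_{i,t}(a\mid\bar o(t-1))\,d\mu_a(a)$, which is $\mathbb{E}_{\bar Q_{i,t},g^*_{i,t}}[Y_i(t)\mid\bar o(t-1)]$. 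Averaging over $i$ and noting that every step expresses the quantity through $P_{\bar O(t-1)}$ alone gives both equalities, including $\Psi^F_{t,g^*_t}(P^F_{\bar O(t-1)})=\Psi_{t,g^*_t}(P_{\bar O(t-1)})$.

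The routine parts are the tower-property manipulations. The part that needs care is the bookkeeping around the within-time sample ordering --- checking that, thanks to Remarks~(1)--(2), all the relevant conditional independencies hold with $\bar O(t-1)$ in place of the finer histories $Pa(A_i(t))$, $Pa(L_i(t))$ --- together with stating precisely the consistency link between the stochastic-intervention counterfactual $Y^l_{i,g^*_{i,t}}(t)$ and the deterministic-level counterfactuals $Y^l_{i,a}(t)$. The genuinely non-standard ingredient, relative to textbook $g$-computation, is the final latent$\to$observed passage: the inverse-probability weight built into $Y_i(t)$ is exactly what makes the observed conditional mean reproduce the latent one, so that positivity --- needed anyway for identification --- also supplies the weight-averages-to-one identity that closes the argument.
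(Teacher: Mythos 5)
Your first step is exactly the paper's: the g-computation chain that unwinds the draw from $g^*_{i,t}$, uses Assumption \ref{randomization} to strip the conditioning on the intervened treatment, uses consistency to pass from $Y^l_{i,a}(t)$ to $Y^l_i(t)$ conditional on $A_i(t)=a$, and invokes Assumption \ref{positivity} only so that the conditional expectations are well defined. The per-unit reduction and the remark about the within-time ordering are harmless (indeed $Pa(A(t))=\bar O(t-1)$ by the paper's notation, so little replacement is needed). The divergence is in the latent-to-observed passage, and there you have a gap as written.

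You factor $\mathbb{E}\bigl[A_i(t)Y_i^l(t)/g_{i,t}(A_i(t)\mid\bar O(t-1))\mid\bar o(t-1)\bigr]$ into $\mathbb{E}\bigl[A_i(t)/g_{i,t}(A_i(t)\mid\bar O(t-1))\mid\bar o(t-1)\bigr]\cdot\mathbb{E}\bigl[Y_i^l(t)\mid\bar o(t-1)\bigr]$, citing Assumption \ref{randomization}. But that assumption asserts independence of $A_i(t)$ from the \emph{counterfactual} outcome $Y^l_{i,g^*_{i,t}}(t)$, not from the factual latent outcome $Y^l_i(t)$; in the SEM, $L_i(t)$ is allowed to depend on $A_i(t)$, so $A_i(t)\indep Y^l_i(t)\mid\bar O(t-1)$ does not follow from the stated assumptions alone --- it requires the additional premise that the time-$t$ test does not alter the time-$t$ latent status (i.e., $Y^l_{i,a}(t)=Y^l_i(t)$ for all $a$), after which randomization plus consistency do give the factual independence. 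Relatedly, your factorized result is the unconditional mean $\mathbb{E}[Y^l_i(t)\mid\bar o(t-1)]$, whereas the quantity you must match is $\int_a\mathbb{E}[Y^l_i(t)\mid a,\bar o(t-1)]\,g^*_{i,t}(a\mid\bar o(t-1))\,d\mu_a(a)$; these coincide exactly when that same independence holds, so the step as cited is circular rather than justified. The paper's own computation avoids the factorization entirely: it conditions on $A_i(t)$ and lets $g_{i,t}(1\mid\bar o(t-1))$ cancel the inverse-probability weight, yielding $\mathbb{E}[\bar Y(t)\mid\bar o(t-1)]=\frac{1}{n}\sum_{i=1}^n\mathbb{E}[Y^l_i(t)\mid A_i(t)=1,\bar o(t-1)]$ directly, and then writes the target as $\frac{1}{n}\sum_{i=1}^n\mathbb{E}_{\bar Q_{i,t},g^*_{i,t}}[Y_i(t)\mid\bar o(t-1)]$ with $\bar Q_{i,t}(a,\cdot)=\mathbb{E}[Y_i(t)\mid A_i(t)=a,\bar o(t-1)]$ integrated against $g^*_{i,t}$, never needing $A_i(t)\indep Y^l_i(t)$. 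To close your argument, either state and use the no-effect-of-testing-on-latent-status premise explicitly, or restructure this step as the paper does, by conditioning on $A_i(t)$ and cancelling the weight against $g_{i,t}$.
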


\begin{proof}
We allocate the derivation to the Appendix section \ref{sec:identify}.
\end{proof}

\noindent
Note that Theorem \ref{lemma:identify} identifies the causal parameter in terms of both the latent and observed outcome at each time point $t$. 
As per Theorem \ref{lemma:identify}, the statistical target parameter (estimand) is denoted as 
\begin{align}\label{eq::psit}
     \psi_{t,g^{*}_{t}} =  {\Psi}_{t,g^{*}_{t}}(P_{\bar{O}(t-1)}) = \frac{1}{n}\sum_{i=1}^n \mathbb{E}_{\bar{Q}_{i,t},g_{i,t}^*}[Y_i(t) \mid \bar{O}(t-1)]. 
\end{align}
With a slight abuse of notation, we write ${\Psi}_{t,g^{*}_{t}}(P_{\bar{O}(t-1)})$ as ${\Psi}_{t,g^{*}_{t}}(\bar{Q}_{t})$. Instead of focusing on just the time $t$-target ${\Psi}_{t,g^{*}_{t}}(P_{\bar{O}(t-1)})$, we can additionally define a time- and sample- specific estimand as 
\begin{equation}\label{eq::psiit}
    \psi_{i,t,g^*_{i,t}} = {\Psi}_{i,t,g^{*}_{i,t}}(P_{\bar{O}(t-1)}) = \mathbb{E}_{\bar{Q}_{i,t},g_{i,t}^*}[Y_i(t) \mid \bar{O}(t-1)], 
\end{equation}
where $\psi_{t,g^*_{t}} = 1/n \sum_{i=1}^n \psi_{i,t,g^*_{i,t}}$. We emphasize here that $\psi_{t,g^{*}_{t}}$ is $(t,g^{*}_{t})$- specific and $\psi_{i,t,g^{*}_{i,t}}$ is $(i,t,g^{*}_{i,t})$- specific. We are also interested in a target parameter that is defined over the entire trajectory. In terms of an infectious disease outbreak, the target parameter defined in Equation \eqref{eq::psi} would equate to continuous surveillance and epidemic control over the entire observational period $\tau$:
\begin{equation}\label{eq::psi}
    \psi^{\tau} = {\Psi}{\tau}(P) = \frac{1}{\tau} \sum_{t=1}^{\tau} \Psi_{t,g^{*}_{t}}(P_{\bar{O}(t-1)}).
\end{equation}
We refer to all three in the following sections, with a particular focus on parameters in Equation \eqref{eq::psit} and \eqref{eq::psi}. 

Finally, as testing resources are often constrained in the context of infectious disease testing (either due to shortages of tests themselves, as occurred early in the COVID-19 pandemic, or due to resources available to support testing), we assume a fixed testing capacity at each time-point until the end of the epidemic. As such, it is necessary to provide an optimal allocation of the available testing resources, analogous to the resource-constrained optimal individualized treatment literature \citep{luedtke2016resource}. Suppose that the number of available tests are limited at each time point $t$, so that at most $k \in (0,1)$ proportion of the population can get tested. Our ultimate interest might be in optimizing Equation \eqref{eq::psi} under a resource constraint, meaning that we want to optimize the true number of infected individuals by the end of the trajectory. The more positive cases we can detect at each $t$ under the $k$ testing constraint, which would then be isolated and treated, the fewer incidence of downstream transmission can occur, resulting in a greater infection control. 

\section{Online Super Learner for Adaptive Surveillance}\label{sec::proposed_design}


Let $\{g^*_{t,1}, \ldots, g^*_{t,S}\} \in \mathcal{G}$ denote a collection of $S$ user-specified stochastic interventions for all samples $i \in [n]$ at $t$.  Note that all considered testing schemes are elements of the space $\mathcal{G}$, which consists of a finite number of testing strategies considered at each time point. Therefore, $g^*_{t,s}$ is a $s$-specific conditional distribution of $A(t)$ given the observed past $\bar{O}(t-1)$ at time $t$. For the $s$-specific stochastic intervention, it then follows that
\begin{equation*}
    \psi_{t,g^{*}_{t,s}} = {\Psi}_{t,g^{*}_{t,s}}(\bar{Q}_{t}) = \frac{1}{n}\sum_{i=1}^n \mathbb{E}_{\bar{Q}_{i,t},g_{i,t,s}^*}[Y_i(t) \mid \bar{O}(t-1)],
\end{equation*}
and we have a separate $\psi_{t,g^{*}_{t,s}} = {\Psi}_{t,g^{*}_{t,s}}(\bar{Q}_{t}) $ for each $s \in \{1, \ldots, S\}$ at $t$. As infectious disease progression evolves over time, we want the proposed adaptive sequential design to be able to respond to the current state of the epidemic. At the beginning of the disease trajectory, catching the few infected individuals and testing their proximal network might be enough to control the spread. However, as the contagion reaches the state of an epidemic, identifying individuals at high risk (but possibly asymptomatic) might be crucial. While one of the $s$-specific stochastic interventions might be optimal at the beginning of the trajectory, another one might be optimal at later points. The enforced adaptive sequential surveillance should therefor evolve and adapt over time in response to the current state of the infectious disease progression. The problem then becomes a matter of selecting amongst stochastic interventions in $\mathcal{G}$, over the entire trajectory, while not imposing assumptions on the statistical model $\mathcal{M}$. In the following, we describe an \textit{Online Super Learner for adaptive sequential surveillance} which uses different selectors to pick the optimal stochastic intervention $s$ at time $t$, over the entire trajectory. 

\subsection{Loss-based selector}\label{sec::loss-selector}

We can define an adaptive sequential design for surveillance as an online algorithm that at each time point $t$ fits the conditional distribution of treatment given the past observations. As such, it's an online mapping of past data into $g_{t}^*(A(t) \mid \bar{O}(t-1))$, while learning over time how to adapt $g_{t}^*$ in order to optimize a short term outcome. With that it mind, we can formulate the problem at hand within the loss-based estimation paradigm \citep{dudoit2003a,dudoit2003b,vaart2006,laan2006oracle}. In the following, we proceed to define key concepts necessary for establishing an Online Super Learner --- including a valid loss, risk, cross-validation scheme, and the discrete Super Learner \citep{sl2007,benkeser2018,malenica2021personalized}.

To start, let $P_{n,t}$ denote the empirical distribution of $n$ time-series collected until time $t$. We define the estimator mapping, $\hat{\Psi}_{t,g^{*}_{t}}$, as a function from the empirical distribution to the parameter space. In particular, let $P_{n,t} \mapsto \hat{\Psi}_{t,g^{*}_{t}}$ represent a mapping from $P_{n,t}$ into a function $\hat{\Psi}_{t,g^{*}_{t}}(P_{n,t})$. Then, $\hat{\Psi}_{t,g^{*}_{t}}(P_{n,t})(\bar{O}(t-1))$ denotes the target function evaluated at the observed past. Similarly, the estimator mapping $\hat{\bar{Q}}_{i,t}$ is defined as a function of the empirical distribution. We can write $\hat{\bar{Q}}_{i,t}(P_{n,t})(A_i(t),\bar{O}(t-1))$ as the predicted outcome for unit $i$ of the estimator $\hat{\bar{Q}}_{i,t}(P_{n,t})$ at time $t$, based on $(A_i(t),\bar{O}(t-1))$. In Section \ref{sec::osl}, we elaborate on the loss-based parameter definition and estimation given the past under working models described in Section \ref{working_models}. 

Let $C(i,m)$ denote the unit $i$-
and time $m$-specific collection where $C(i,m) = (Y_i(m), A_i(m), \bar{O}(m-1))$; similarly, we write $C(m) = (Y(m), A(m), \bar{O}(m-1))$ as the time $m$- specific record. Let $L(\hat{\Psi}_{t,g_{t}}(P_{n,t}))(C(m))$ define a loss function for the time- specific target, such that $L(\hat{\Psi}_{t,g_{t}}(P_{n,t})) : \mathcal{C} \rightarrow \mathbb{R}$. By construction, a valid loss for a given parameter of interest is defined as a function whose true conditional mean is minimized by the true value of the target. For instance, for the time $t$-specific target we then have that
\begin{equation*}
    P_{0,\bar{O}(t-1)} L(\hat{\Psi}_{t,g_{0,t}}(\bar{Q}_{0,t}))(C(t)) = \min_{\hat{\Psi}_{t,g_{t}}(\bar{Q}_{t})} P_{0,\bar{O}(t-1)} L(\hat{\Psi}_{t,g_{t}}(\bar{Q}_{t}))(C(t)).
\end{equation*}
For a binary outcome, we can further define $L(\hat{\Psi}_{t,g_{t}}(P_{n,t}))$ as the inverse weighted mean squared error function (MSE), which is the loss we are trying to minimize
\begin{equation}\label{eqn::weighted_loss}
    \frac{1}{n} \sum_{i=1}^n \frac{1}{g_{i,t}(A_i(t) \mid \bar{O}(t-1))} \left( Y_{i,g_{i,t}}(t) - \hat{\Psi}_{i,t,g_{i,t}}(P_{n,t})(\bar{O}(t-1)) \right)^2.
\end{equation}


The true risk is defined as the expected value of the loss evaluated w.r.t the true distribution. As such, it establishes the true measure of performance of the target parameter with respect to the specified loss --- however, it is an unattainable quantity, as the truth is unknown. In order to obtain an unbiased estimate of the true risk, we use cross-validation. Let $P_{n,t}^0$ denote the empirical distribution of the training sample until time $t$, with $P_{n,t}^1$ the corresponding empirical distribution of the validation set. In general, we use different cross-validation schemes to evaluate how well an estimator trained on specific samples' past is able to predict an outcome for samples in the future, which is reflected in different empirical distributions $P_{n,t}^0$ and $P_{n,t}^1$. 
For an infectious disease, we might expect its natural trajectory to vary over time, but have a similar profile across proximal time points. Therefore, we let $P_{n,t}^0$ be the empirical distribution of all the data until time $t$, with $P_{n,t}^1$ consisting of samples at the next time step $t+1$.  The cross-validated risk over all times then corresponds to
\begin{equation*}
    R_{CV,m} =   
    \frac{1}{\tau - m} \sum_{m=t+1}^{\tau}
    L(\hat{\Psi}_{m-1,g_{m-1}}(P_{n,m-1}^0))(C(m)).
\end{equation*}

Let $\hat{\Psi}_{t,g^*_{t,s}}(P_{n,t})$ denote the estimator of the target parameter under design $g^*_{t,s}$, where we have a separate $\hat{\Psi}_{t,g^*_{t,s}}(P_{n,t})$ for each $s \in S$. We can evaluate the performance of each stochastic intervention $g^*_{t,s}$ using the loss-based framework. The proposed evaluation therefore proceeds as follows: with each new $C(t+1)$, we evaluate the loss $L(\hat{\Psi}_{t,g^*_{t,s}}(P_{n,t}^0))(C(t+1))$ for each $s \in S$; add this loss to the current estimate of the online CV risk; update each online estimator $\hat{\Psi}_{t,g^*_{t,s}}$ into $\hat{\Psi}_{t+1,g^*_{t+1,s}}$ using $C(t+1)$. Upon observing the next batch of data, $C(t+2)$, the process is repeated. The Online CV risk gives us an estimated performance of the adaptive design over time. We can use the full online CV risk if we are interested in optimizing over the entire trajectory, or an average over a more recent window for time $t$-specific target parameter. We define the discrete SL design $s_t$ as the design which minimizes the online CV risk at time $t$:
\begin{align*}
    s_t &= \min_s \frac{1}{\tau_w - m} \sum_{m=t+1}^{\tau_w} L(\hat{\Psi}_{m,g^*_{m,s}}(P_{n,m}))(C(m)), 
\end{align*}
where $\tau_w$ is a future time point based on the window size $w$, and the loss function is defined as in Equation \eqref{eqn::weighted_loss}.

\subsection{TMLE- and TMLE-CI-based selector}\label{sec::tmle-selector}

Continuing with the definition of an adaptive sequential design being an online algorithm, we use past data in order to fit relevant parts of the likelihood of $O$. At each $t$, we run a simulation under a different design $g_{t,s}^*$ (and current estimate of the relevant parts of the likelihood), and select the design which optimizes a short term mean outcome. 
The loss-based selector in the previous section optimizes over a window of recent losses (e.g., inverse weighted MSE over a window of time points). We could instead optimize for the MSE such that we would also have inference for the target parameter --- allowing us to pick a design by taking into account uncertainty in the point estimate as well. This motivates a new selector, based on the Targeted Minimum Loss Estimation (TMLE) \citep{Rubin_2006, book2011,book2018}. The estimated mean outcome under each of the $s$ designs is a TMLE based on a working model, optimized for MSE with inference. Here we emphasize that reliance on working models is a necessary step in order to obtain a ranking of designs based on the TMLE, but our proposed method does not rely on assumptions imposed by the working models. 

For the TMLE-based selector, we want to obtain a TMLE of each $s$-specific target parameter. The standard TMLE, as originally defined by \cite{Rubin_2006}, first computes an initial estimator of $\bar{Q}_{0,i,t}$. In general, the functional form of $\bar{Q}_{0,i,t}$ is unknown, with arbitrary dependence structure. In order to avoid unnecessary assumptions, we resort to data-adaptive predictive methods such as the Online Super Learner under various working models; as such, we allow for flexibility in the specification of the functional form and dependence structure. 
Consistent estimation of $\bar{Q}_{0,i,t}$ is key for achieving asymptotic efficiency of the target parameter \citep{book2011,book2018}. We denote the initial estimator of the conditional mean outcome given the past as $\hat{\bar{Q}}_{i,t}(P_{n,t}^0)$, trained on the training data available until time $t$, $P_{n,t}^0$.

The initial estimator of the conditional mean outcome given the past is then updated in such a way that the efficient influence function (EIF) estimating equation is zero when computed at the updated estimate. The TML estimators defined in this way generally require optimizing a loss function iteratively for the likelihood of the observed data. Achieving a solution to the EIF estimating equation guarantees, under regularity assumptions, that the estimator enjoys optimality properties such as double robustness and local efficiency \citep{Rubin_2006,book2011,book2018}. We solve the estimating equation by fitting the following logistic model 
\begin{equation*}
    \text{logit} \ \hat{\bar{Q}}_{t,\epsilon}(A(t),C_{A}(t))  = 
    \text{logit} \ \hat{\bar{Q}}_{t}(A(t),C_{A}(t)) + \epsilon,
\end{equation*}
with weights defined as $w_t = g_{t}^*(A(t) \mid C_{A}(t))/g_{t}(A(t) \mid C_{A}(t))$. We emphasize that $g_{t}(A(t) \mid C_{A}(t))$ denotes the treatment mechanism generating the data so far for all $n$ samples at time $t$. The estimate of $\epsilon$ is written as $\epsilon_t$, with the updated initial estimator of $\bar{Q}_{0,i,t}$ evaluated at $(A_i(t),C_{A_i}(t))$ denoted as $\hat{\bar{Q}}^*_{i,t}(A_i(t),C_{A_i}(t)) = \hat{\bar{Q}}_{i,t,\epsilon_t}(A_i(t),C_{A_i}(t))$. The targeted estimate $\hat{\bar{Q}}^*_{i,t}(A_i(t),C_{A_i}(t))$ then solves the following EIF estimating equation,
\begin{equation*}
    \frac{1}{n}\sum_{i=1}^n \frac{g_{t,i}^*(A_i(t) \mid C_{A_i}(t))}{g_{t,i}(A_i(t) \mid C_{A_i}(t))}(Y_i(t) - \hat{\bar{Q}}^*_{i,t}(A_i(t),C_{A_i}(t))) = 0.
\end{equation*}
The TMLE of the $s$-specific stochastic intervention is defined as the plug-in estimator under the targeted estimate $\hat{\bar{Q}}^*_{i,t}$ and $g_{i,t,s}^*$,
\begin{equation*}
    {\Psi}_{t,g^{*}_{t,s}}(\hat{\bar{Q}}^*_{t}) = \frac{1}{n}\sum_{i=1}^n \mathbb{E}_{\hat{\bar{Q}}^*_{i,t},g_{i,t,s}^*}[Y_i(t) \mid C_{A_i}(t)].
\end{equation*}
\noindent
In the following, we refer to Equation \eqref{eq::psiit} denoting the time- and sample- specific target parameter in order to more easily define the canonical gradient and the first order expansion; our target parameter is defined in Equation \eqref{eq::psit}. The desired target parameter and the subsequent analysis is then defined as an average over samples at time $t$, under the working model $\mathcal{M}^{n}(\bar{O}(t-1))$. We present the canonical gradient, first order expansion and asymptotic normality of the TMLE results in Lemma \ref{lemma::gradient} and Theorem \ref{theorem::normality_fin}. Here, network structure is known at time $t$ (alas, some latent structure might be allowed as in statistical models described in \cite{laan2012networks} and \cite{ogburn2022}). The asymptotic results are in the number of samples and specific to $t$. The corresponding TMLE-CI algorithm under working model $\mathcal{M}^{n}$ is presented in Appendix Algorithm \ref{alg::V2}. In order to address the full trajectory-based parameter, we also present analysis under working model $\mathcal{M}^{tn}(\bar{O}(t-1))$ in Appendix Section \ref{sec:tmle_proofs}. The target parameter defined in Equation \eqref{eq::psi} is appropriate for smoother transitions in testing designs across time and shorter trajectories. 

\begin{lemma}[Canonical gradient and first order expansion]\label{lemma::gradient}
Let $\bar{Q}_{i,t}=\bar{Q}_t$ denote the common across $i$ conditional expectation under working model $\mathcal{M}^{n}$. The canonical gradient of $\Psi_{t,g_t}$ w.r.t. $\mathcal{M}^{n}$ at $P_{\bar{O}(t-1)}$ is
\begin{equation*}
    D_{\bar{O}(t-1)}^*(\bar{Q}_t)(o(t)) =  \frac{1}{n} \sum_{i=1}^n \frac{g_{i,t}^*(A_i(t)| C_{A_i}(t))}{g_{i,t}(A_i(t)| C_{A_i}(t))}(Y_i(t) - {\bar{Q}_t}(A_i(t),C_{A_i}(t))).
\end{equation*}
The time-specific target parameter admits the following first order expansion:
\begin{align*}
    {\Psi}_{t,g_t}(\bar{Q}_t) - {\Psi}_{t,g_t}(\bar{Q}_{0,t}) &= -P_{0,\bar{O}(t-1)} D_{\bar{O}(t-1)}^*(\bar{Q}_t) + R(\bar{Q}_t, \bar{Q}_{0,t}, g_t, g_{0,t}),
\end{align*}
where $R$ is a second order remainder that is doubly-robust, with
$R(\bar{Q}_t, \bar{Q}_{0,t}, g_t, g_{0,t})=0$ if either $\bar{Q}_t = \bar{Q}_{0,t}$ or $g_t = g_{0,t}$.
\end{lemma}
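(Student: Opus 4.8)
The plan is to treat the conditional law $P_{\bar{O}(t-1)}$ of $O(t)=(O_1(t),\dots,O_n(t))$ given the past as the relevant data-generating distribution, and to mimic the classical analysis of a mean counterfactual outcome under a stochastic intervention with a known treatment mechanism (see \citet{book2011,book2018}), with the $n$ time-$t$ records $O_i(t)$ playing the role of the sample. By Remarks 1 and 2 the $O_i(t)$ are conditionally independent given $\bar{O}(t-1)$; Assumptions \ref{decompose} and \ref{fixed_dim_summary} reduce the relevant history of unit $i$ to the fixed-dimensional summary $(A_i(t),C_{A_i}(t))$; and Assumption \ref{common_i}, i.e.\ working model $\mathcal{M}^n$, supplies a common conditional mean $\bar{Q}_t$ and a common summary-conditional treatment mechanism, so the $O_i(t)$ may be regarded as draws from one common conditional law given their respective covariates $C_{A_i}(t)$. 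A key simplification is that, conditionally on $\bar{O}(t-1)$, each $C_{A_i}(t)$ is degenerate; hence the tangent space has no covariate-marginal component, and the canonical gradient will contain only the inverse-probability-weighted residual term rather than the additional covariate-marginal summand that appears in the unconditional version of this parameter.

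For the canonical gradient I would fluctuate the conditional density of $Y_i(t)$ given $(A_i(t),C_{A_i}(t))$ along the one-parameter path $q_{t,\epsilon}(y\mid a,c)=q_t(y\mid a,c)\,\bigl(1+\epsilon\,h(a,c)\,(y-\bar{Q}_t(a,c))\bigr)$, which induces $\bar{Q}_{t,\epsilon}(a,c)=\bar{Q}_t(a,c)+\epsilon\,h(a,c)\,\mathrm{Var}_P(Y\mid a,c)$. Because $\bar{Q}_t$ is common in $i$, this path perturbs all $n$ unit-likelihoods at once, with conditional score $S=\sum_{i=1}^n h(A_i(t),C_{A_i}(t))\,(Y_i(t)-\bar{Q}_t(A_i(t),C_{A_i}(t)))$. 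Differentiating $\Psi_{t,g_t}(\bar{Q}_{t,\epsilon})$ at $\epsilon=0$ gives $\tfrac{1}{n}\sum_i\int g^*_{i,t}(a\mid C_{A_i}(t))\,h(a,C_{A_i}(t))\,\mathrm{Var}_P(Y\mid a,C_{A_i}(t))\,d\mu_a(a)$, and a short computation --- using conditional independence across units so that all cross terms vanish --- shows this equals $P_{\bar{O}(t-1)}\bigl[D^*_{\bar{O}(t-1)}(\bar{Q}_t)\,S\bigr]$ for $D^*_{\bar{O}(t-1)}(\bar{Q}_t)(o(t))=\tfrac{1}{n}\sum_i \tfrac{g^*_{i,t}(A_i(t)\mid C_{A_i}(t))}{g_{i,t}(A_i(t)\mid C_{A_i}(t))}\,(Y_i(t)-\bar{Q}_t(A_i(t),C_{A_i}(t)))$. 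Since $D^*$ is itself of the form $\sum_i h_i(A_i(t),C_{A_i}(t))(Y_i(t)-\bar{Q}_t(\cdot))$ with $h_i=\tfrac{1}{n}\,g^*_{i,t}/g_{i,t}$ a function of $C_{A_i}(t)$, it lies in the tangent space of $\mathcal{M}^n(\bar{O}(t-1))$ and is therefore the canonical gradient; its conditional variance also supplies the efficiency bound invoked in Theorem \ref{theorem::normality_fin}.

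The first-order expansion I would obtain by direct algebra rather than an abstract Taylor argument. Using $\mathbb{E}_0[Y_i(t)\mid A_i(t),\bar{O}(t-1)]=\bar{Q}_{0,t}(A_i(t),C_{A_i}(t))$ (Theorem \ref{lemma:identify}) and $A_i(t)\sim g_{0,i,t}(\cdot\mid C_{A_i}(t))$ under $P_0$, one evaluates $P_{0,\bar{O}(t-1)}D^*_{\bar{O}(t-1)}(\bar{Q}_t)=\tfrac{1}{n}\sum_i\int \tfrac{g^*_{i,t}(a\mid C_{A_i}(t))}{g_{i,t}(a\mid C_{A_i}(t))}\,g_{0,i,t}(a\mid C_{A_i}(t))\,(\bar{Q}_{0,t}-\bar{Q}_t)(a,C_{A_i}(t))\,d\mu_a(a)$; adding this to $\Psi_{t,g_t}(\bar{Q}_t)-\Psi_{t,g_t}(\bar{Q}_{0,t})=\tfrac{1}{n}\sum_i\int g^*_{i,t}(a\mid C_{A_i}(t))(\bar{Q}_t-\bar{Q}_{0,t})(a,C_{A_i}(t))\,d\mu_a(a)$ and collecting terms gives
\begin{equation*}
R(\bar{Q}_t,\bar{Q}_{0,t},g_t,g_{0,t})=\frac{1}{n}\sum_{i=1}^n\int \frac{g^*_{i,t}(a\mid C_{A_i}(t))}{g_{i,t}(a\mid C_{A_i}(t))}\,\bigl(g_{i,t}-g_{0,i,t}\bigr)(a\mid C_{A_i}(t))\,\bigl(\bar{Q}_t-\bar{Q}_{0,t}\bigr)(a,C_{A_i}(t))\,d\mu_a(a),
\end{equation*}
a product of the $\bar{Q}$-discrepancy and the $g$-discrepancy, hence zero whenever $\bar{Q}_t=\bar{Q}_{0,t}$ or $g_t=g_{0,t}$; in the intended design, where $g_t$ is controlled and thus equals $g_{0,t}$, one even has $R\equiv 0$.

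The step requiring the most care --- and the main obstacle --- is making the conditional, non-i.i.d.\ formulation rigorous: verifying that Remarks 1--2 and Assumptions \ref{decompose}, \ref{fixed_dim_summary}, \ref{common_i} genuinely license treating $O(t)\mid\bar{O}(t-1)$ as $n$ conditionally independent records from one summary-conditional law, that the summaries $C_{A_i}(t)$ are degenerate given the conditioning event so that the covariate-tangent contribution drops out, and that the Riesz representer $h_i=\tfrac{1}{n}g^*_{i,t}/g_{i,t}$ of the pathwise derivative indeed lies in the (common-in-$i$) tangent space of $\mathcal{M}^n(\bar{O}(t-1))$ rather than only in a larger ambient model. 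Once this bookkeeping is settled, both the gradient identification and the remainder computation are routine.
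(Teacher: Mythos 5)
Your proposal is correct, and for the canonical gradient it follows essentially the paper's route: the paper also works conditionally on $\bar{O}(t-1)$, puts a one-dimensional parametric submodel through the common conditional density $q_t$ (the generic multiplicative path $q_{t,\epsilon}=(1+\epsilon s)q_t$), differentiates $\Psi_{t,g_t^*}$ under the intervention density, and reads off the representer via Riesz representation, exactly as you do with your mean-fluctuation path. The one technical difference there is that your path only generates scores of the form $h(a,c)\bigl(y-\bar{Q}_t(a,c)\bigr)$, a strict subspace of the tangent space of $q_t$; to conclude that $D^*_{\bar{O}(t-1)}$ is a gradient you should either use generic scores $s(y,a,c)$ with conditional mean zero (as the paper does) or add the one-line observation that, since the parameter depends on $q_t$ only through $\bar{Q}_t$, any score component orthogonal to your residual subspace contributes zero both to the pathwise derivative and to the inner product with $D^*_{\bar{O}(t-1)}$, so checking your subfamily suffices. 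Where you genuinely depart from the paper is the expansion: the paper simply invokes the von Mises identity, defines $R$ as ${\Psi}_{t,g_t}(\bar{Q}_t)-{\Psi}_{t,g_t}(\bar{Q}_{0,t})+P_{0,\bar{O}(t-1)}D^*_{\bar{O}(t-1)}(\bar{Q}_t)$, and asserts its second-order, doubly robust character, whereas you compute $P_{0,\bar{O}(t-1)}D^*_{\bar{O}(t-1)}(\bar{Q}_t)$ directly and exhibit the remainder as the explicit product
\begin{equation*}
R=\frac{1}{n}\sum_{i=1}^n\int \frac{g^*_{i,t}(a\mid C_{A_i}(t))}{g_{i,t}(a\mid C_{A_i}(t))}\bigl(g_{i,t}-g_{0,i,t}\bigr)(a\mid C_{A_i}(t))\bigl(\bar{Q}_t-\bar{Q}_{0,t}\bigr)(a,C_{A_i}(t))\,d\mu_a(a),
\end{equation*}
which actually proves the double-robustness claim (and the vanishing of $R$ in the randomized design where $g_t=g_{0,t}$) rather than merely stating it; this makes your treatment of the expansion more complete than the paper's, at the cost of being specific to this parameter rather than quoting the general von Mises machinery.
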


\noindent
Since we are in a randomized trial and the treatment mechanism is known, the second order remainder in Lemma \ref{lemma::gradient} is zero. All further theoretical analysis relies on the fact that the difference between the TML estimator and the target can be decomposed as the average of a centered sequence, as shown in Theorem \ref{theorem::normality_fin} and \ref{theorem::normality_fin2}. 

\begin{theorem}[Asymptotic Normality of the time $t$ TMLE]\label{theorem::normality_fin} Let $\bar{Q}_{0,i,t}=\bar{Q}_{0,t}$ denote the truth and $\hat{\bar{Q}}_{i,t}^*=\hat{\bar{Q}}_{t}^*$ the TMLE which is common across $i$ under working model $\mathcal{M}^{n}$. We denote  
$\psi_{t,g_t} = \frac{1}{n} \sum_{i=1}^{n} {\Psi}_{i,t,g_{i,t}}(\hat{\bar{Q}}_t^*)$ and equivalently $\psi_{0,t,g_t} = \frac{1}{n} \sum_{i=1}^{n} {\Psi}_{i,t,g_{i,t}}(\hat{\bar{Q}}_{0,t})$ . Under weak conditions we have that 
\begin{equation*}
    \psi_{t,g_t} - \psi_{0,t,g_t} \xrightarrow{d} \mathcal{N}(0, \sigma_{t,\infty}^2),
\end{equation*}
where $\sigma_{t,\infty}^2$ is the asymptotic variance of a limit distribution $\lim_{n \rightarrow \infty} \frac{1}{n} \sum_{i=1}^{n} D_{\bar{O}(t-1)}^*(\bar{Q}_{0,t})$.
\end{theorem}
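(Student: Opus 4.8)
The plan is to combine the first-order expansion from Lemma~\ref{lemma::gradient} with a central limit theorem for the triangular array of centered, conditionally-independent-by-Remark~(2) summands $D^*_{i,\bar O(t-1)}(\bar Q_{0,t})$, exploiting that the TMLE solves the EIF estimating equation exactly. First I would write, using the plug-in form of the TMLE and the exact score equation $\frac1n\sum_i \frac{g^*_{i,t}}{g_{i,t}}(Y_i(t)-\hat{\bar Q}^*_{i,t})=0$ together with the first-order expansion of Lemma~\ref{lemma::gradient} applied at $\hat{\bar Q}^*_t$, the identity
\begin{equation*}
\psi_{t,g_t}-\psi_{0,t,g_t} = (P_{n,t}-P_{0,\bar O(t-1)})\,D^*_{\bar O(t-1)}(\hat{\bar Q}^*_t) + R(\hat{\bar Q}^*_t,\bar Q_{0,t},g_t,g_{0,t}),
\end{equation*}
where $R=0$ since $g_t=g_{0,t}$ is known by design (the randomized-trial remark following Lemma~\ref{lemma::gradient}). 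So the entire difference is an empirical-process term. I would then decompose $(P_{n,t}-P_{0,\bar O(t-1)})D^*(\hat{\bar Q}^*_t)$ into the ``main'' term $(P_{n,t}-P_{0,\bar O(t-1)})D^*(\bar Q_{0,t})$ plus a ``drift'' term $(P_{n,t}-P_{0,\bar O(t-1)})(D^*(\hat{\bar Q}^*_t)-D^*(\bar Q_{0,t}))$, and argue the drift term is $o_P(n^{-1/2})$ under the stated weak conditions — consistency of the Online Super Learner estimator $\hat{\bar Q}^*_t$ in $L^2(P_{0,\bar O(t-1)})$ at a rate $o_P(n^{-1/4})$, plus an entropy/Donsker-type control on the working-model class, or alternatively a sample-splitting/cross-fitting argument that avoids the Donsker condition by using the independence of $\hat{\bar Q}^*_t$ from the validation fold.

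For the main term, conditionally on $\bar O(t-1)$ the summands $\frac{g^*_{i,t}(A_i(t)\mid C_{A_i}(t))}{g_{i,t}(A_i(t)\mid C_{A_i}(t))}(Y_i(t)-\bar Q_{0,t}(A_i(t),C_{A_i}(t)))$ are mean-zero by the definition of $\bar Q_{0,t}$ and are mutually independent across $i$ by Remark~(1) and Remark~(2) (conditional independence of the $A_i(t)$ and of the $Y_i(t)$ given the shared past). They are not identically distributed, but they are uniformly bounded provided the positivity Assumption~\ref{positivity} gives a uniform lower bound on $g_{i,t}$ (hence a bounded weight $g^*/g$), and $Y_i(t)$ is bounded since the outcome is binary-valued up to the inverse-weighting. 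I would therefore verify a Lindeberg condition for this triangular array and invoke the Lindeberg–Feller CLT, giving
$\sqrt n\,(P_{n,t}-P_{0,\bar O(t-1)})D^*(\bar Q_{0,t}) \xrightarrow{d}\mathcal N(0,\sigma^2_{t,\infty})$
with $\sigma^2_{t,\infty}=\lim_{n\to\infty}\frac1n\sum_{i=1}^n \mathrm{Var}_{P_{0,\bar O(t-1)}}\!\big(D^*_{i,\bar O(t-1)}(\bar Q_{0,t})\big)$, assumed to exist and be finite and positive — this matches the variance statement in the theorem. Combining with the negligible drift and the vanishing remainder and applying Slutsky gives the claimed convergence.

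The main obstacle I expect is the drift term, i.e.\ controlling $(P_{n,t}-P_{0,\bar O(t-1)})(D^*(\hat{\bar Q}^*_t)-D^*(\bar Q_{0,t}))$ in a genuinely dependent-data, single-trajectory setting where the ``samples'' are the $n$ units sharing one realized history $\bar O(t-1)$. The clean route is cross-fitting: train $\hat{\bar Q}^*_t$ on $P^0_{n,t}$ and evaluate the empirical process on units not used in training, so that conditionally on the training data the drift term is a mean-zero average of conditionally-independent bounded terms whose conditional variance is $\le \|\hat{\bar Q}^*_t-\bar Q_{0,t}\|^2_{P_{0,\bar O(t-1)}}=o_P(1)$, hence $o_P(n^{-1/2})$ after scaling — this is the argument that would need to be made precise and is where the ``weak conditions'' (consistency rate of the Online SL, positivity, boundedness, and existence of the limiting variance) are actually consumed. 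A secondary subtlety is that $\psi_{t,g_t}$ and $\psi_{0,t,g_t}$ here are themselves random (they are averages over $n$ of conditional expectations given the realized $\bar O(t-1)$), so all statements must be read as conditional on $\bar O(t-1)$ and the CLT is for the conditional law; I would state this explicitly at the outset.
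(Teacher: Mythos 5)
Your proposal is correct and follows essentially the same route as the paper: the first-order expansion from Lemma~\ref{lemma::gradient} with the remainder vanishing because $g_t$ is known, the fact that the TMLE solves the EIF estimating equation exactly so the difference becomes the empirical process term $(P_{n,t}-P_{0,\bar O(t-1)})D^*_{\bar O(t-1)}(\hat{\bar Q}^*_t)$, a split into the term at $\bar Q_{0,t}$ plus a drift term shown negligible by equicontinuity (the paper cites Theorem~5 and Appendix B2--4 of \cite{laan2012networks}, where you propose an entropy/Donsker or cross-fitting argument), and a CLT for the main term conditional on $\bar O(t-1)$. Your added details --- conditional independence across $i$ via Remarks~(1) and (2), boundedness via positivity, the Lindeberg--Feller triangular-array formulation, and the explicit conditional-on-$\bar O(t-1)$ reading of the statement --- are consistent refinements of the paper's sketch rather than a different argument.
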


\begin{theorem}[Asymptotic Normality of the average over time TMLE]\label{theorem::normality_fin2} Let $\bar{Q}_{0,i,t}=\bar{Q}_{0}$ denote the truth and $\hat{\bar{Q}}_{i,t}^*=\hat{\bar{Q}}^*$ the TMLE which is common across $t$ and $i$ under working model $\mathcal{M}^{tn}$. We denote  
$\psi^{\tau} = \frac{1}{\tau} \sum_{t=1}^{\tau} {\Psi}_{t,g_{t}}(\hat{\bar{Q}}^*)$ and equivalently $\psi_0^{\tau} =  \frac{1}{\tau} \sum_{t=1}^{\tau} {\Psi}_{t,g_{0,t}}(\bar{Q}_{0})$. Under weak conditions we have that 
\begin{equation*}
    \psi^{\tau} - \psi_0^{\tau} \xrightarrow{d} \mathcal{N}(0, \sigma_{\infty}^2),
\end{equation*}
where $\sigma_{\infty}^2$ is the asymptotic variance of a limit distribution $\lim_{\tau \rightarrow \infty} \frac{1}{\tau} \sum_{t=1}^{\tau} D_{\bar{O}(t-1)}^*(\bar{Q}_{0,t})$.
\end{theorem}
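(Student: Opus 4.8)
The plan is to mirror the structure of Theorem \ref{theorem::normality_fin}, but now average the martingale-type expansion over time $t=1,\dots,\tau$ rather than over samples $i=1,\dots,n$, so that the limit is taken as $\tau\to\infty$ under the working model $\mathcal{M}^{tn}$ in which $\bar Q_{0,i,t}=\bar Q_0$ is common across both indices. First I would invoke Lemma \ref{lemma::gradient} at each time $t$ to write the first-order expansion ${\Psi}_{t,g_t}(\hat{\bar Q}^*) - {\Psi}_{t,g_t}(\bar Q_{0}) = -P_{0,\bar O(t-1)} D_{\bar O(t-1)}^*(\hat{\bar Q}^*) + R(\hat{\bar Q}^*,\bar Q_{0},g_t,g_{0,t})$, noting that the remainder term vanishes because $g_t=g_{0,t}$ is known by design (randomized trial). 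By construction of the TMLE update (the logistic tilting with weights $w_t$), the empirical mean of the canonical gradient at the targeted estimate is zero at each step, so after summing over $t$ and dividing by $\tau$ the leading term is, up to the usual $o_P(\tau^{-1/2})$ empirical-process/martingale negligibility, the average $\frac{1}{\tau}\sum_{t=1}^\tau D_{\bar O(t-1)}^*(\bar Q_{0})(O(t))$ of conditionally-centered terms.

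Next I would establish the martingale structure: for each $t$, $\mathbb{E}_{0}\big[D_{\bar O(t-1)}^*(\bar Q_{0})(O(t)) \mid \bar O(t-1)\big] = 0$ by the defining property of the canonical gradient (its conditional mean given the past is zero — this is exactly the validity of the inverse-weighted loss used earlier), so the partial sums $M_T = \sum_{t=1}^T D_{\bar O(t-1)}^*(\bar Q_0)(O(t))$ form a martingale with respect to the filtration $\{\sigma(\bar O(t))\}_t$. I would then apply a martingale central limit theorem (e.g.\ the Lindeberg–type MCLT): the two ingredients to check are (i) convergence of the normalized conditional variance, $\frac{1}{\tau}\sum_{t=1}^\tau \mathbb{E}_0\big[ D_{\bar O(t-1)}^*(\bar Q_0)^2 \mid \bar O(t-1)\big] \xrightarrow{P} \sigma_\infty^2$, which is where the conditional-stationarity content of $\mathcal{M}^{tn}$ (common $\bar Q_0$ across $t$, plus a mixing/ergodicity condition folded into the "weak conditions") does the work, and (ii) a conditional Lindeberg condition, which follows from boundedness of $Y_i(t)\in[0,1]$ together with the positivity Assumption \ref{positivity} bounding the weights $g_{i,t}^*/g_{i,t}$. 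Then $\tau^{-1/2} M_\tau \xrightarrow{d}\mathcal{N}(0,\sigma_\infty^2)$, and hence $\psi^\tau - \psi_0^\tau \xrightarrow{d}\mathcal{N}(0,\sigma_\infty^2)$.

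Finally I would need to control the second-order remainder arising from using the estimated $\hat{\bar Q}^*$ in place of $\bar Q_0$ across the $\tau$ steps — i.e.\ show $\frac{1}{\tau}\sum_{t=1}^\tau (P_{0,\bar O(t-1)} - P_{n,t}) \big(D_{\bar O(t-1)}^*(\hat{\bar Q}^*) - D_{\bar O(t-1)}^*(\bar Q_0)\big) = o_P(\tau^{-1/2})$, which requires a consistency rate for the online Super Learner estimator of $\bar Q_0$ under $\mathcal{M}^{tn}$ plus an entropy/Donsker-type or sequential-complexity control so that the drift term is negligible. I expect this last step to be the main obstacle: unlike the i.i.d.\ case, the summands are dependent and the estimator is updated online, so the standard empirical-process tools must be replaced by sequential/martingale analogues, and one must argue that the cross term between the estimation error of $\bar Q$ and the centered gradient sequence is asymptotically negligible. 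The remaining steps (martingale CLT, vanishing exact remainder by known $g$, plug-in representation) are comparatively routine once the rate condition on $\hat{\bar Q}^*$ is assumed as one of the "weak conditions."
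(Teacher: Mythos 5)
Your proposal follows essentially the same route as the paper's proof: the first-order expansion from Lemma \ref{lemma::gradient} with the remainder negligible because $g_t$ is known, a martingale-difference CLT (bounded terms, stabilizing conditional variances) for the centered gradient sequence at $\bar{Q}_0$, and control of the term involving $\hat{\bar{Q}}^*-\bar{Q}_0$ via sequential empirical-process/maximal-inequality arguments, which the paper handles by citing \cite{handel2009} and Theorem 4 of \cite{malenica2021adaptive}. Your identification of the last step as the main technical burden matches where the paper defers to those external equicontinuity results.
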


We allocate proof and discussion corresponding to Lemma \ref{lemma::gradient}, Theorem \ref{theorem::normality_fin} and Theorem \ref{theorem::normality_fin2} to the Appendix Section \ref{sec:tmle_proofs}. Let $\sigma_{t,s}^2$ denote the efficient influence based variance for adaptive design $g_{t,s}^*$ at time $t$. We can estimate $\sigma_{t,s}^2$ using the empirical variance estimator as follows
\begin{equation*}
    \hat{\sigma}_{t,s}^2 = \frac{1}{n} \sum_{i=1}^n 
    \left(\frac{g_{t,i}^*(A_i(t)|C_{A_i}(t))}{g_{t,i}(A_i(t)| C_{A_i}(t))}(Y_i(t) - \hat{\bar{Q}}^*_{t}(A_i(t),C_{A_i}(t))) \right)^2,
\end{equation*}
Therefore, each adaptive design $g_{t,s}^*$ has a confidence interval for its overall mean outcome given by 
\begin{equation*}
    \frac{1}{n} \sum_{i=1}^n \int_a \hat{\bar{Q}}^*_{t}(a, C_{A_i}(t)) g_{i,t,s}^*(a \mid C_{A_i}(t)) \pm 1.96 \frac{\hat{\sigma}_{t,s}}{\sqrt{n}}.
\end{equation*}

We can define two different selectors based on the TML estimator. First, denoted as the TMLE-based selector, chooses the design $s_t$ among $s \in S$ that maximizes the point TMLE estimate of the target parameter, such that \begin{equation*}
    s_t = \max_{s} {\Psi}_{t,g^{*}_{t,s}}(\hat{\bar{Q}}_t^*).
\end{equation*}
Alternatively, we take advantage of the asymptotic normality of the TMLE. The second selector, denoted TMLE-CI-based selector, maximizes the lower bound of the confidence interval (CI). In particular, it picks the design $s_t$ with the highest minimum value of the confidence interval
\begin{equation*}
   s_t = \max_{s} [{\Psi}_{t,g^{*}_{t,s}}(\hat{\bar{Q}}_t^*) - 1.96 \sigma_{t,s}/\sqrt{n}]. 
\end{equation*}
Using the TMLE-CI-based selector allows us to incorporate uncertainty in the testing design selection, instead of relying solely on a point estimate. Either way, $s_t$ corresponds to a discrete Online Super Learner selector (analogous to the previous subsection on the loss-based selector). At the next time point, we use the design picked at the previous time point in order to assign tests. Once we observe outcomes for samples tested, we obtain a new TML estimate, and generate a new discrete SL which maximizes either the TMLE point estimate or the lower bound of its CI. We repeat the process until no more data is available. 

\section{Simulations}\label{sec::sim_results}
 
Most higher education institutions faced a difficult decision during the COVID-19 pandemic: reopen and conduct in-person instruction, or face financial challenges and negative social and psychological impacts associated with continued closure. The spread of SARS-CoV-2 in a residential college is particularly hazardous for the broader community due to a large percent of younger, potentially asymptomatic individuals, higher likelihood of shared accommodation, and abundant social contacts \citep{matheson2021}. In the absence of pertinent prior experience, most institutions turned to simulation models and sequential testing in order to track, and contain, the spread of COVID-19. A rich literature on different modeling techniques emerged as a consequence --- resulting in variations of compartmental models, contact networks and agent- or individual-based models \citep{Gressman2020, Rennert2021, Paltiel2020, Martin2020, Lopman2021, Muller2021, Ghaffarzadegan2021, Chang2021}. The interest in effective and safe reopening strategy for an university campus extended across continents \citep{Tuells2021, Hill2021}, campus size \citep{Weeden2020,Bahl2021} and urban settings \citep{Hamer2021}. Other groups resorted to empirical proximity networks of college students in order to simulate and study the spread of the virus \citep{hambridge2021}. 
 
In the following, we illustrate the utility of the proposed Online Super Learner for adaptive sequential surveillance by simulating an environment analogous to the University of California, Berkeley in the Fall of 2020. A detailed description of the agent-based model used to depict transmission of SARS-CoV-2 can be found in Appendix subsection \ref{sec::agent-based-model}. In Appendix subsection \ref{tests}, we elaborate on how all testing strategies (\textit{risk-based} and \textit{rule-based}) can be seen as stochastic interventions with different sampling strategies. We compare different proposed selectors (TMLE-, TMLE-CI- and loss-based) while assessing the state of the infection spread during the length of an academic semester. In terms of testing strategies, we investigate \textit{rule-based} (\textit{symptomatic}, \textit{contact tracing} and \textit{random} testing) and \textit{risk-based} testing with risk estimated via a generalized linear model (denoted as \textit{risk-based} with glm) and an Online Super Learner. All designs are further compared to benchmarks, including \textit{no testing} and when true infectious status is known (``perfect"), corresponding respectively to the lower and upper bounds of performance for any intervention. 

All simulations results represent averages over $250$ Monte Carlo draws and trajectory of $t=120$ time points. While the size of the population is set to $n=20,000$, we investigate performance of the proposed methodology under various resource constraints, testing $1\% - 4\%$ of the total population at each time point (corresponding to $k=\{200,400,600,800\}$). In the Appendix Section \ref{sec::add_sims} we also consider different levels of outside transmission, reflected by the risk scale parameter. Higher risk scale scores correspond to a higher role of the latent parts of the network and individual risk on transmission dynamics. All simulations are initiated with $8$ exposed, $2$ temporarily infectious and $2$ symptomatic cases of COVID-19. We intentionally focus on the scenario where simple rule-based strategies might do well (knowing the network of the few infected individuals), and it is difficult to learn one's risk due to a limited number of infections. We also want to mimic a new start of a semester in an environment with a stable number of daily infections, as otherwise the in-person instruction might be omitted. While we only present results with the $(E=8,It=2,Is=2,Ia=0)$ configuration, other random seeds result in a similar design performance and ranking. 

\subsection{Testing Performance}

We evaluate testing performance by the cumulative incidence curve at each time point and the cumulative percent of infected individuals at $t=120$ (final cumulative incidence). The best performing testing strategy keeps the infection rate low over time, and achieves the lowest cumulative incidence at the end of observation. Testing performance is a function of testing strategy, number of available tests, and the number of currently infected individuals. As such, we evaluate multiple different testing designs (from simple \textit{rule-based} and \textit{risk-based}, to Online SL for adaptive sequential surveillance with the TMLE-, TMLE-CI- and loss-based selector) under different resource constraints ($k=\{200,400,600,800\}$, where $k$ is the number of available tests) across the entire trajectory of the infectious disease progression. 

The average cumulative incidence curves at each time point for all considered designs and available resources are shown in Figure \ref{fig::trajectory}. For instance, the \textit{random} strategy performs as well as no testing when $k=200$, but improves as more tests become available; however, it is always outperformed by competing testing strategies in our simulations, no matter the number of available tests or starting conditions. The \textit{symptomatic + contact} and the \textit{risk-based} strategy with risk estimated with a generalized linear model perform better than \textit{random} and no testing strategies. The Online SL for adaptive surveillance achieves the lowest cumulative incidence of infection (i.e., the best epidemic control) compared to competitors across all times and all selectors. Differences between individual selectors occur at the beginning and end of the trajectory. As can be seen in Figure \ref{fig::trajectory}, the loss-based approach achieves a lower cumulative incidence early in the epidemic trajectory. However, as the epidemic evolves, the TMLE-based selectors achieve the optimal epidemic control (``oracle"). While testing only $1\% - 4\%$ of the total population at each time point, no design achieves performance of the oracle that knows the true infectious status. As more testing resources are allocated, epidemic control improves across all strategies. However, the gains are particularly pronounced for when testing allocation is based on the Online SL for adaptive surveillance. 

The average cumulative incidence by time point $t=120$ is shown in Figure \ref{fig::final}; we refer to it as the average final cumulative incidence. The Online SL for adaptive surveillance with TMLE-CI selector outperforms all competing designs across all simulation setups --- with the most stark difference at $k=800$; confidence interval (CI) for the TMLE-CI selector was $(1.8\%,2.1\%)$ vs. $(2.4\%,2.9\%)$ for the second best design, TMLE-based selector. As expected, the performance of the TMLE-CI selector improves as a function of more available tests (CI with $k=200$: $(14.9\%,16.2\%)$; CI with $k=400$: $(7.1\%,8.0\%)$; CI with $k=600$: $(3.6\%,4.2\%)$; CI with $k=800$: $(1.8\%,2.1\%)$). As such, even with testing only $4\%$ of a large campus population, we can achieve good control of the infectious disease spread in our simulations. Compared to simple \textit{rule-based} and \textit{risk-based} competitors, Online SL for adaptive surveillance achieves much lower cumulative incidence by $t=120$, across all proposed selectors. 
When compared to the \textit{symptomatic + contact} scheme, which might be considered standard best practice, mean final cumulative incidence was reduced from $23.8\%$ to $15.5\%$ at $k=200$, and from $3.6\%$ to $1.9\%$ at $k=800$ with the TMLE-CI selector (on average over 250 trajectories). Figure \ref{fig::final} also shows that TMLE-based and loss-based selector have similar performance under higher $k$ values. 
One possible explanation could be that the advantage of smooth transitions across designs achieved by weights in the TMLE-based selector is offset by averaging loss over a recent window (size 5 in simulations) in the loss-based selector. Ultimately, both TMLE- and loss-based strategies perform worse than the TMLE-CI selector in terms of final cumulative incidence. Finally, as also observed in Figure \ref{fig::trajectory}, all designs perform better than no testing (except for \textit{random} at $k=200$) and worse than the oracle which knows the true infectious status at all values of $k$. 

\subsection{Selected Designs}

Designs used as candidates in the Online SL for adaptive surveillance include a combination of \textit{rule-based} and \textit{risk-based} strategies. Figure \ref{fig::gstars} demonstrates the selected designs (also refereed to as ``discrete Super Learner" or ``cross-validation selector" following notation in Section \ref{sec::proposed_design}) over time and across $500$ simulations using the TMLE-CI-based selector. 

As shown in Figure \ref{fig::gstars}, the \textit{symptomatic + contact} design is often selected at the beginning of the trajectory, while limited data are available to adaptively learn how to test, but as more data become available, it is selected less. All the risk-based strategies include Online Super Learners of $\bar{Q}_{0,i,t}$ with candidate algorithms which reflect working models described in Section \ref{working_models}. In particular, some of the candidates train on the full training history collected, windows of past points (window sizes used: $(7,10,14)$), and exponential weights of the form $(1-\text{rate})^\text{lag}$ where rate is $(0.01,0.05,0.1)$. As shown in Figure \ref{fig::gstars}, designs trained on full history are often picked at the beginning of the trajectory. As time progresses, candidates trained on particular extractions of the past are selected under most of the resource constraints studied, converging to a stable allocation of selected designs. This could be explained by the fact that, as the trajectory progresses, the distant past becomes irrelevant in the current state of epidemic due to the rapidly evolving nature of an infectious disease. Once the selector learns that designs trained on the more recent past are effective, it starts selecting them more consistently. Other working models not shown in Figure \ref{fig::gstars} include various network components, corresponding to different working models for the network dependence. 

The Online SL for adaptive surveillance can also pick among different sampling schemes. As described in Section \ref{tests}, one can allocate tests based on a rank or sample/draw based on the estimated risk. As shown in Figure \ref{fig::gstars}, each risk-based strategy was a candidate design based on both sampling and ranking strategy; hence, when the full training data was used to estimate $\bar{Q}_{0,i,t}$, tests were allocated both based on the top ranked samples and sample proportional to the estimated $\bar{Q}_{0,i,t}$. In terms of sampling schemes, the rank (top) based strategy seems to outperform sampling across all simulation scenarios considered. This could be explained by the fact that the sampling strategy introduces more exploration than necessary for this problem, especially as we achieve a good estimate of $\bar{Q}_{0,i,t}$ with more data over time. While a deterministic sampling strategy with a good estimate of $\bar{Q}_{0,i,t}$ is preferable in our simulations, it is important to keep the sampling option as a candidate design for situations where more exploration is necessary.

\FloatBarrier
\vspace{5mm}
\begin{figure*}[!htbp]
\centering \makeatletter\includegraphics[width=1\linewidth]{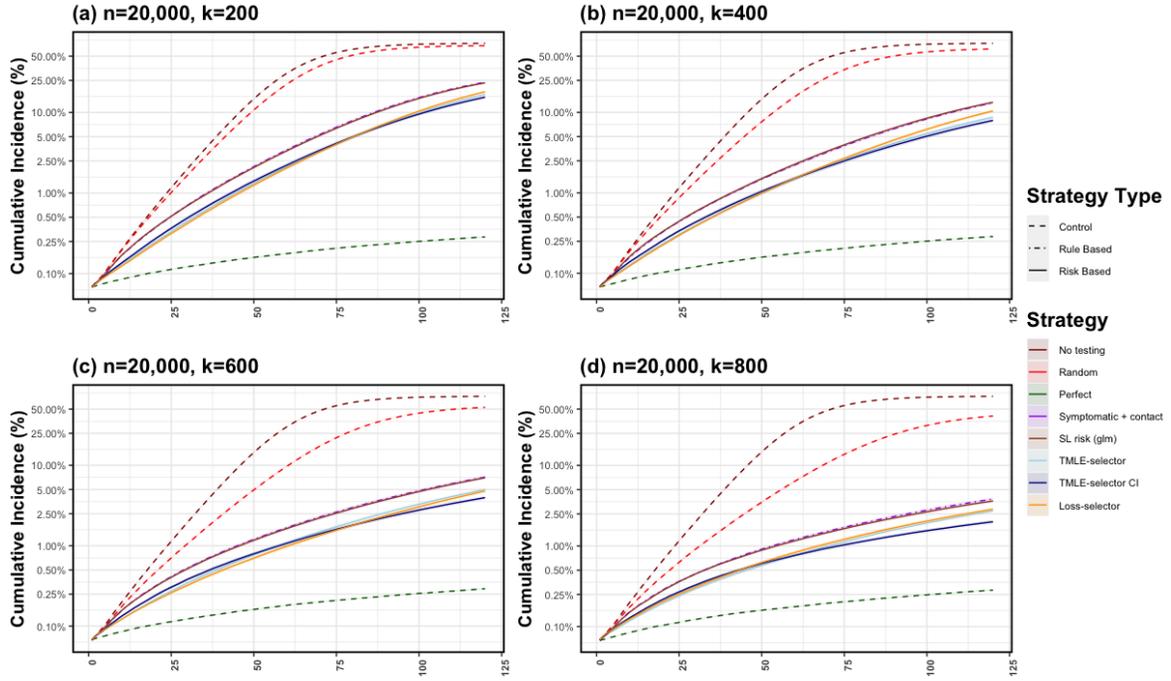}
\makeatother 
\caption{{Average cumulative incidence at each time point over $250$ simulations with $n=20,000$ sample size and testing capacity $k=\{200,400,600,800\}$ using TMLE-based, TMLE-CI-based and loss-based selectors of the testing strategy. We compare different proposed selectors to \textit{symptomatic + contact}, \textit{random} and \textit{glm risk-based} testing, with \textit{perfect} as the upper and \textit{no testing} as the lower bound on performance.}}
\label{fig::trajectory}
\end{figure*}

\FloatBarrier
\vspace{5mm}
\begin{figure*}[!htbp]
\centering \makeatletter\includegraphics[width=1\linewidth]{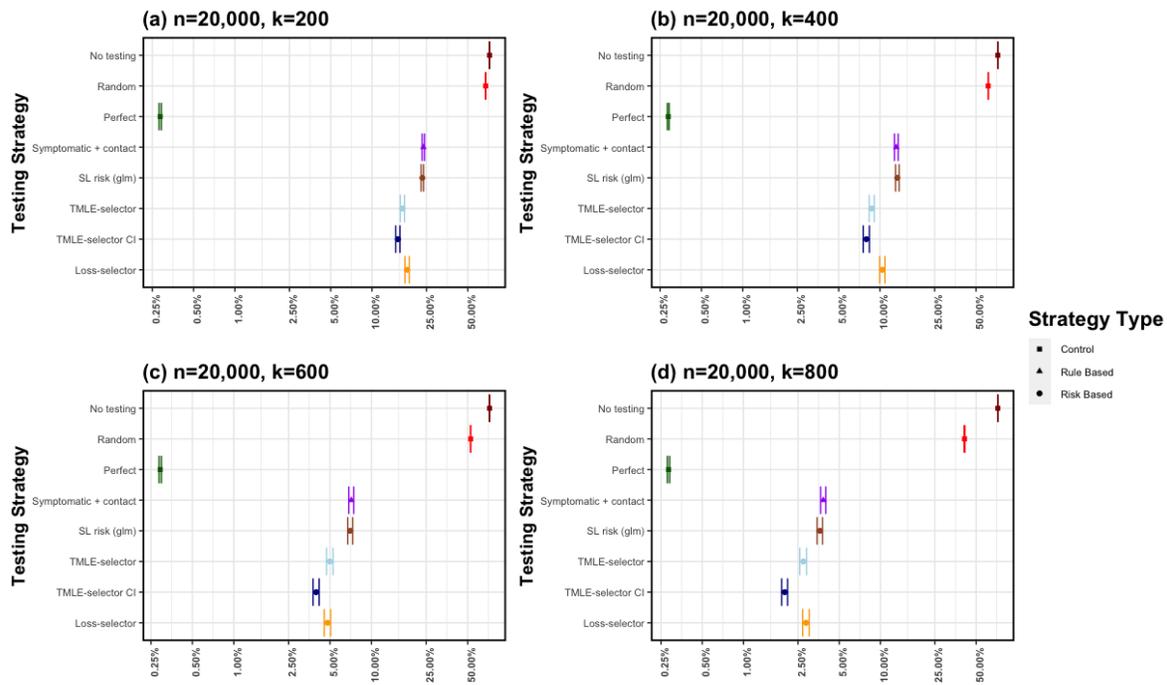}
\makeatother 
\caption{{Average final cumulative incidence at $t=120$ over $250$ simulations, with $n=20,000$ sample size and testing capacity $k=\{200,400,600,800\}$ using TMLE-based, TMLE-CI-based and loss-based selectors. We compare different proposed selectors to \textit{symptomatic + contact}, \textit{random} and \textit{glm risk-based} testing, with \textit{perfect} as the upper and \textit{no testing} as the lower bound on performance.}}
\label{fig::final}
\end{figure*}
\vspace{50mm}

\FloatBarrier
\vspace{5mm}
\begin{figure*}[!htbp]
\centering \makeatletter\includegraphics[width=1\linewidth]{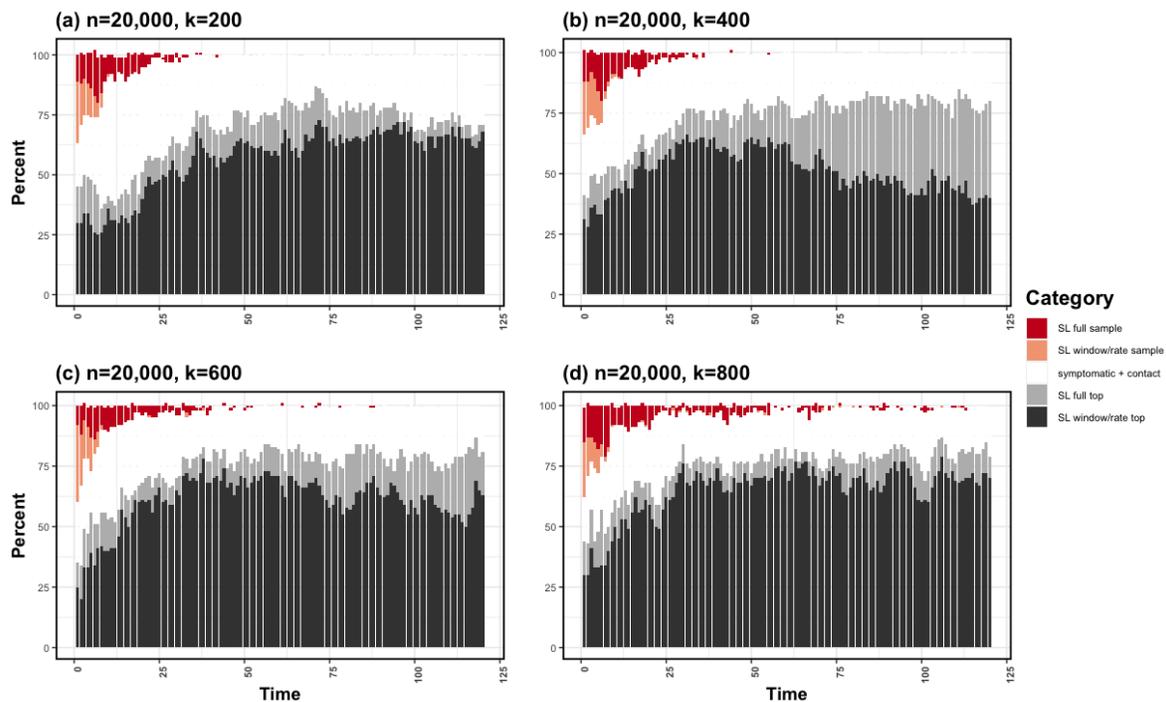}
\makeatother 
\caption{{Percent of times a given candidate testing allocation design is selected over the full trajectory and $250$ simulations with $n=20,000$ sample size and testing capacity $k=\{200,400,600,800\}$ using the TMLE-CI-based selector. Candidate designs include \textit{symptomatic + contact} and various \textit{risk-based} designs where the Super Learner is either trained on the full past, exponentially weighted past, or a window of $t=14$ days. All \textit{risk-based} testing strategies also consider different sampling schemes, including picking the top samples for testing or randomly sampling based on the estimated risk.}}
\label{fig::gstars}
\end{figure*}

\section{Discussion}\label{sec::discussion}

In this work, we develop an Online Super Learner for the adaptive sequential design under an unknown dependence structure. Our proposed method is particularly suited for infectious disease surveillance and control,
and generalizes immediately to any adaptive sequential problem with unknown dependence (across individuals and time) within a fully nonparameteric model. The data structure constitutes a typical longitudinal structure of $n$ individuals over a period of $\tau$ time points. Within each $t$-specific time block, one observes the exposure variable (e.g., indicator of testing), time-varying covariates (e.g., network structure, health status) and outcome (e.g., infectious status) for all $n$ individuals. 
Our causal target parameter is the mean outcome we would have obtained after one time-step, if, starting at time $t$ given the observed past, we had carried out a stochastic intervention $g_{t}^*$. The main goal is to optimize the next time-point outcome under $g_{t}^*$ at each $t$, or as an average of short term outcomes over time, under a possible resource constraint. As such, the history-adjusted optimal choice for a single time point intervention defines the adaptive design over time. In the setting of an infectious disease outbreak, we define exposures of interest as user-defined stochastic interventions, where each $g_{t}^*$ denotes a specific testing design (\textit{symptomatic}, \textit{random}, \textit{contact tracing}, \textit{risk-based} testing, etc). The proposed Online Super Learner for adaptive sequential surveillance then learns the optimal choice of test strategies over time, adapting to the current state of the epidemic. In the application considered here, the optimal testing allocation aims to maximize the number of infectious individuals identified, allowing for prompt isolation and prevention of further spread. 

The infectious disease context, however, presents unique technical challenges. For instance, our target parameter is defined in terms of a latent outcome, as the true infectious status is unknown (unless a person is tested). We present an identification result for the causal target parameter in terms of the observed outcome, defined as a function of the stochastic intervention. In addition, unlike the usual i.i.d. settings, infectious disease propagation induces both network and temporal dependence. The adaptive sequential designs described in the literature typically exploit asymptotics in the number of subjects enrolled in the trial \citep{chambaz2017}, or in the number of time points \citep{malenica2021adaptive}. Unlike these settings, the presence of both network and temporal dependence reduces data to a single observation. Instead of imposing unrealistic assumptions on the statistical model $\mathcal{M}$, we rely on working models in order to estimate the conditional mean function $\bar{Q}_{0,i,t}$ and use an honest benchmark to choose the best performing estimate for the sake of the adaptive design performance. Therefore, the proposed method decides whether to learn across samples, through time, or both, based on the underlying (unknown) structure in the data at each time point of the disease trajectory. 

As part of the Online SL for adaptive sequential surveillance, we propose a method for selecting among different adaptive designs. Namely at each time $t$, we evaluate the performance of a choice $g^*_t$ by proportion of infected individuals detected. We might use as criterion for an adaptive design its average loss over a recent time window (loss-based selector), the TMLE estimate under $g^*_t$ and working model (TMLE-based), or the maximum lower confidence interval under the stochastic intervention (TMLE-CI-based). All of the evaluation strategies aim to provide smooth transitions from discrete SL at time $t$ to the one at $t+1$, by either averaging over a window of recent performances or weighting by the ratio of current and proposed probability of treatment given the past. Therefore, the Online SL for adaptive sequential surveillance is an adaptive design itself, and the discrete Super Learner evolves and changes as a function of the underlying disease dynamics (which is unspecified and often unknown).
The key strength of the proposed method is that it does not depend on a strong statistical model, or imposes unrealistic assumptions. Instead, it relies on different working models to estimate $\bar{Q}_{0,i,t}$, and selects among adaptive designs with a short term performance Online Super Learner. As such, the proposed adaptive design avoids stationarity and independence assumptions, which are unrealistic in an infectious disease setup.

In addition to proposing a new adaptive sequential design suitable for studying infectious disease, we have also developed an agent-based model for a moderate size campus during an epidemic. Here, the model was parameterized to resemble the University of California (Berkeley). However, all the settings and simulations can be easily modified to reflect any residential campus and infectious disease. Within the simulation framework defined by the agent-based model, the Online SL for adaptive sequential surveillance outperforms all considered gold standard testing schemes (\textit{random}, \textit{contact tracing + symptomatic}) including the \textit{risk-based} testing alone. The advantages of the proposed adaptive design are evident over a variety of scenarios, including varying resource constraints and level of problem difficulty (determined by the percent latent component of the network and individual risk). 
As response to COVID-19 pandemic evolved, most universities started to require mandatory vaccination, mask-wearing, social distancing, enhanced cleaning protocols, increased availability of sanitizing products and no large social gatherings. In future work, we plan to investigate performance of the proposed Online SL for adaptive surveillance in addition to a wide array of other transmission-limiting measures.

Our proposed method can be extended in various ways. Instead of using a single selector as done in our simulations, we could instead have an Online SL for adaptive surveillance where each candidate is one of the described methods. As such, it would be possible to pick among TMLE-, TMLE-CI- and loss-based selectors at any time point $t$. This strategy would be particularly advantageous at the beginning of the trajectory, when loss-based methods seems to perform best, but TMLE-CI-based selector minimizes the final cumulative incidence. In addition, we could extend the proposed methodology to consider a convex combination of different designs, instead of focusing on the discrete Online SL. While this extension could provide better performance (in terms of optimizing our target parameter), it might be more difficult to interpret and implement in practice. In addition, while the Online SL for adaptive surveillance outperforms all considered testing schemes, it does not provides inference for our main target parameter. If we were willing to make additional assumptions, or known more about the data generating process  --- for example, assume a known network structure over time or conduct detailed surveillance as done in some countries --- we could analyze the TMLE of our target parameter under one of the working models discussed in Section \ref{working_models}. Alternatively, one could data-adaptively learn the underlying true model by giving up certain statistical properties of the estimator, such as regularity. We intend to explore all of these interesting avenues and extensions in future work.



\section*{Acknowledgments}\label{sec::acknowledgments}

This work was supported by the National Institute of Allergy and Infectious Diseases (NIH R01 AI074345).

\newpage
\section{Appendix}\label{Appendix}
\subsection{Identifiability Results}\label{sec:identify}

\textbf{Theorem 1} \textit{Assume assumptions \ref{randomization} and \ref{positivity} hold. Under consistency, we denote the time $t$ value under the stochastic intervention $g^*_{t}$ as
\begin{align*}
    {\Psi}_{t,g^{*}_{t}}^F(P_{\bar{O}(t-1)}^F) = {\Psi}_{t,g^{*}_{t}}(P_{\bar{O}(t-1)}) &= \int_{a} \frac{1}{n}\sum_{i=1}^n \mathbb{E}_P [{Y}_i^l(t) \mid a, \bar{o}(t-1)] g_{i,t}^*(a \mid \bar{o}(t-1)) d\mu_a(a) \\ \nonumber &= \frac{1}{n}\sum_{i=1}^n \mathbb{E}_{\bar{Q}_{i,t},g_{i,t}^*}[Y_i(t) \mid \bar{o}(t-1)] 
\end{align*}
where the observed outcome is defined as $Y_i(t) = A_i(t)Y_i^l(t)/g_{i,t}(A_i(t) \mid \bar{O}(t-1))$ and $\psi_{t,g^{*}_{t}} = {\Psi}_{t,g^{*}_{t}}(P_{\bar{O}(t-1)})$.}

\vspace{3mm}
\begin{proof}
First, we identify the causal parameter in terms of the conditional distribution of the observed data $P_{\bar{O}(t-1)}$ and latent outcome $Y^l(t)$. 
Under Assumptions \ref{randomization} (A6) and \ref{positivity} (A7), jointly with consistency (A8), we can denote value at $t$ under the stochastic intervention $g^*_{t}$ as
\begin{align}
    &\mathbb{E}\left[ \frac{1}{n} \sum_{i=1}^n Y_{i,g_{i,t}^*}^l(t)\right] = \mathbb{E}[ \bar{Y}_{g_{t}^*}^l(t)] \\ \nonumber
    &\stackrel{def}= \int_{a} \frac{1}{n}\sum_{i=1}^n \mathbb{E} [{Y}_{i,g_{i,t}^*}^l(t) = y^l \mid A_{i,g_{i,t}^*}(t) = a, \bar{o}(t-1)] g_{i,t}^*(a \mid \bar{o}(t-1)) d\mu(a) \\ \nonumber
    &\stackrel{def}= \int_{a} \frac{1}{n}\sum_{i=1}^n \mathbb{E} [{Y}_{i,a}^l(t) = y^l \mid A_{i,g_{i,t}^*}(t) = a, \bar{o}(t-1)] g_{i,t}^*(a \mid \bar{o}(t-1)) d\mu(a) \\ \nonumber
    &\stackrel{A6}= \int_{a} \frac{1}{n}\sum_{i=1}^n \mathbb{E} [{Y}_{i,a}^l(t) = y^l \mid \bar{o}(t-1)] g_{i,t}^*(a \mid \bar{o}(t-1)) d\mu(a) \\ \nonumber
    &\stackrel{A8}= \int_{a} \frac{1}{n}\sum_{i=1}^n \mathbb{E} [{Y}_i^l(t) = y^l \mid A_i(t) = a, \bar{o}(t-1)] g_{i,t}^*(a \mid \bar{o}(t-1)) d\mu(a) \\ \nonumber
    &\stackrel{def}= \int_{a} \frac{1}{n}\sum_{i=1}^n \mathbb{E} [{Y}_i^l(t) = y^l \mid A_i(t) = a, \bar{o}(t-1)] g_{i,t}^*(a \mid \bar{o}(t-1)) d\mu(a). \nonumber
\end{align}

\noindent
Note that, for conditional expectations to be well defined, Assumption \ref{positivity} must hold. The last equality in above expression gives us the identification result in terms of the conditional expectation of the latent outcome. We proceed to define the observed outcome as
\begin{equation}\label{eq::observedY}
    Y_i(t) = A_i(t)Y_i^l(t)/g_{i,t}(A_i(t) \mid \bar{O}(t-1)), 
\end{equation}
with the conditional expectation of the observed outcome as follows
\begin{align*}
    &\mathbb{E}[\bar{Y}(t) \mid \bar{o}(t-1)] =
    \frac{1}{n}\sum_{i=1}^n \mathbb{E}\left[ \frac{A_i(t)}{g_{i,t}(A_i(t) \mid \bar{o}(t-1))} Y_i^l(t) \mid \bar{o}(t-1) \right] \\
    &= \frac{1}{n}\sum_{i=1}^n \int_y \frac{y}{g_{i,t}(A_i(t) \mid \bar{o}(t-1))} P(Y_i^l(t) = y \mid 1, \bar{o}(t-1)) g_{i,t}(A_i(t) \mid \bar{o}(t-1)) d\mu(y) \\
    &=  \frac{1}{n}\sum_{i=1}^n \mathbb{E}_{g_{i,t}}[Y_i^l(t) \mid 1, \bar{O}(t-1)=\bar{o}(t-1)].
\end{align*}
Therefore, the conditional expectation of the observed outcome defined as in equation \eqref{eq::observedY} is equal to the conditional expectation of the latent outcome $Y_i^l(t)$. We can write the final identification results as
\begin{align*}
    &\mathbb{E}\left[ \frac{1}{n}\sum_{i=1}^n Y_{i,g_{i,t}^*}^l(t)\right] = \int_{a} \frac{1}{n}\sum_{i=1}^n \mathbb{E} [{Y}_i^l(t) = y^l \mid A_i(t) = a, \bar{o}(t-1)] g_t^*(a \mid \bar{o}(t-1)) d\mu(a) \\
    &= \int_{a} \frac{1}{n}\sum_{i=1}^n \mathbb{E} [\frac{A_i(t)}{g_{i,t}(A_i(t) \mid \bar{o}(t-1))} y^l \mid A_i(t) = a, \bar{o}(t-1)] g_{i,t}^*(a \mid \bar{o}(t-1)) d\mu(a) \\
    &= \frac{1}{n}\sum_{i=1}^n \mathbb{E}_{\bar{Q}_{i,t}, g_{i,t}^*}[Y_i(t) \mid  \bar{o}(t-1)].
\end{align*}

\end{proof}

\subsection{TMLE Results}\label{sec:tmle_proofs}

In the following, at times, it proves useful to use notation from empirical process theory. Specifically, we define $Pf$ to be the empirical average of the function $f$ w.r.t.~the distribution $P$, that is, $Pf = \int f(o)dP(o)$.
In order to alleviate notation, we define a following centered process $(M_{t}(f) : f)$ for a function $f$ in a class of multivariate real valued functions as follows:
\begin{align*}
    M_{t}(f) &= D_{\bar{O}(t-1)}^*(f)(O(t)) - \mathbb{E}_{0}{D_{\bar{O}(t-1)}^*(f)}.
\end{align*}

\subsubsection{Working model $\mathcal{M}^{n}$}

\textbf{Lemma 1} \textit{Let $\bar{Q}_{i,t}=\bar{Q}_t$ denote the common across $i$ conditional expectation under working model $\mathcal{M}^{n}$. The canonical gradient of $\Psi_{t,g_t}$ w.r.t. $\mathcal{M}^{n}$ at $P_{\bar{O}(t-1)}$ is
\begin{equation*}
    D_{\bar{O}(t-1)}^*(\bar{Q}_t)(o(t)) =  \frac{1}{n} \sum_{i=1}^n \frac{g_{i,t}^*(A_i(t)| C_{A_i}(t))}{g_{i,t}(A_i(t)| C_{A_i}(t))}(Y_i(t) - {\bar{Q}_t}(A_i(t),C_{A_i}(t))).
\end{equation*}
The time-specific target parameter admits the following first order expansion:
\begin{align*}
    {\Psi}_{t,g_t}(\bar{Q}_t) - {\Psi}_{t,g_t}(\bar{Q}_{0,t}) &= -P_{0,\bar{O}(t-1)} D_{\bar{O}(t-1)}^*(\bar{Q}_t) + R(\bar{Q}_t, \bar{Q}_{0,t}, g_t, g_{0,t}),
\end{align*}
where $R$ is a second order remainder that is doubly-robust, with
$R(\bar{Q}_t, \bar{Q}_{0,t}, g_t, g_{0,t})=0$ if either $\bar{Q}_t = \bar{Q}_{0,t}$ or $g_t = g_{0,t}$.}

\begin{proof}
\vspace{2mm}
\noindent
Notes on the expansion: Recall that von Mises expansion results in the following approximation of the estimation error at time $t$:
\begin{align*}
\frac{{\Psi}_{t,g_{0,t}}(\bar{Q}_{0,t})-{\Psi}_{t,g_{t}}(\bar{Q}_t)}{\delta \epsilon} &\approx \mathbb{E}_{P_{\bar{O}(t-1)}}[D_{\bar{O}(t-1)}^*(\bar{Q}_t) s] \\ 
    {\Psi}_{t,g_{t}}(\bar{Q}_t) - {\Psi}_{t,g_{0,t}}(\bar{Q}_{0,t}) &\approx -\mathbb{E}_{P_{\bar{O}(t-1)}}[D_{\bar{O}(t-1)}^*(\bar{Q}_t) s] (1-0) \\
    &= -P_{0,\bar{O}(t-1)} D_{\bar{O}(t-1)}^*(\bar{Q}_t) + R
\end{align*}
where we define $R$ as the second-order remainder $R = ({\Psi}_{t,g_{t}}(\bar{Q}_t) - {\Psi}_{t,g_{0,t}}(\bar{Q}_{0,t})) + P_{0,\bar{O}(t-1)} D_{\bar{O}(t-1)}^*(\bar{Q}_t)$, and the second to last equality is due to the change-of-variables formula and the definition of the score $s$ along a path
\begin{align*}
    \int D_{\bar{O}(t-1)}^*(\bar{Q}_t) \ s \ dP_{\bar{O}(t-1)} = \int D_{\bar{O}(t-1)}^*(\bar{Q}_t) dP_{0,\bar{O}(t-1)} = P_{0,\bar{O}(t-1)} D_{\bar{O}(t-1)}^*(\bar{Q}_t).
\end{align*}

Let $\{P_{\Bar{O}(t-1),\epsilon}^h : \epsilon \in (-\delta, \delta)  \subset \mathcal{M}^{n} \}$ be a class of one-dimensional parametric models indexed by a direction $h$, such that $P_{\Bar{O}(t-1),\epsilon=0}^h = P_{0,\Bar{O}(t-1)}$. Let $s^h = \left.\frac{\delta}{\delta \epsilon} \log P_{\Bar{O}(t-1),\epsilon}^h(o)\right\vert_{\epsilon =0}$ be the corresponding score function of the $h$-specific path. Then the pathwise derivative of parameter $\Psi_{t,g_t}$ at $P_{\Bar{O}(t-1)}$ along the path defined by $P_{\Bar{O}(t-1),\epsilon}^h$ is given by $\frac{\delta}{\delta \epsilon} \Psi(P_{\Bar{O}(t-1),\epsilon}^h)$ at $\epsilon=0$. By the Riesz representation theorem we have that $\left.\frac{\delta}{\delta \epsilon} \Psi(P_{\Bar{O}(t-1),\epsilon}^h)\right\vert_{\epsilon =0} = \mathbb{E}[\phi(o) s^h(o)]$ for some zero mean function $\phi(o)$. In the following, the direction index is implied in the notation and we write $P_{\Bar{O}(t-1),\epsilon}^h = P_{\Bar{O}(t-1),\epsilon}$. A common choice of a submodel is, for some nonzero function $s : \mathcal{O} \rightarrow \mathbb{R}$, $P_{\Bar{O}(t-1),\epsilon}(o) = (1 + \epsilon s)P_{\Bar{O}(t-1)}(o)$. Note that for this submodel the score function is $\left.\frac{\delta}{\delta \epsilon} \log P_{\Bar{O}(t-1),\epsilon}(o)\right\vert_{\epsilon =0} =  s(o)$. It then follows that, 
\begin{align*}
    \left.\frac{\delta \Psi_{t,g_{t}^*}(P_{\Bar{O}(t-1),\epsilon})}{\delta \epsilon}\right\vert_{\epsilon =0} &= \left.\int y \frac{\delta}{\delta \epsilon} \prod_{i} q_{t,\epsilon} \right\vert_{\epsilon =0} \prod_{i} g_{i,t}^* \\
    &= \left.\int y \frac{\delta}{\delta \epsilon}
    \prod_{i}(1 + \epsilon s)q_t \right\vert_{\epsilon =0} \prod_{i} g_{i,t}^* \\
    &= \left.\int y \prod_{i} (q_t g_{i,t}^*) \frac{\delta}{\delta \epsilon} \log \prod_{i} q_{t,\epsilon} \right\vert_{\epsilon =0}\\
    &= \left.\int \prod_{i} (q_t g_{i,t}) \left[ \prod_{i} \frac{g_{i,t}^*}{g_{i,t}} y \right] \frac{\delta}{\delta \epsilon} \log \prod_{i} q_{t,\epsilon} \right\vert_{\epsilon =0} \\
    &= \left.\int \prod_{i} (q_t g_{i,t}) \left[\prod_{i} \frac{g_{i,t}^*}{g_{i,t}} y - \Psi_{i,t,g_{i,t}}(q_t) \right] \frac{\delta}{\delta \epsilon} \log \prod_{i} q_{t,\epsilon} \right\vert_{\epsilon =0} \\
    &= \mathbb{E}_{q_t,g}( \left[\prod_{i} \frac{g_{i,t}^*}{g_{i,t}} y - \Psi_{i,t,g_{i,t}}(q_t) \right] \sum_i s).
\end{align*}
\end{proof}

It's interesting to point out that the asymptotic behavior of the estimator is based on a single draw as $n \rightarrow \infty$. The asymptotic variance is still characterized by the EIF, as is the case for $n$ i.i.d. observations and a regular estimator. In the following, we provide a sketch proof for asymptotic normality of the proposed TMLE estimator.

\vspace{2mm}
\noindent
\textbf{Theorem 2} \textit{Let $\bar{Q}_{0,i,t}=\bar{Q}_{0,t}$ denote the truth and $\hat{\bar{Q}}_{i,t}^*=\hat{\bar{Q}}_{t}^*$ the TMLE which is common across $i$ under working model $\mathcal{M}^{n}$. We denote  
$\psi_{t,g_t} = \frac{1}{n} \sum_{i=1}^{n} {\Psi}_{i,t,g_{i,t}}(\hat{\bar{Q}}_t^*)$ and equivalently $\psi_{0,t,g_t} = \frac{1}{n} \sum_{i=1}^{n} {\Psi}_{i,t,g_{i,t}}(\hat{\bar{Q}}_{0,t})$ . Under weak conditions we have that 
\begin{equation*}
    \psi_{t,g_t} - \psi_{0,t,g_t} \xrightarrow{d} \mathcal{N}(0, \sigma_{t,\infty}^2),
\end{equation*}
where $\sigma_{t,\infty}^2$ is the asymptotic variance of a limit distribution $\lim_{n \rightarrow \infty} \frac{1}{n} \sum_{i=1}^{n} D_{\bar{O}(t-1)}^*(\bar{Q}_{0,t})$.}

\begin{proof}
By Lemma 1 we have the following first order expansion for the time-specific parameter which takes into account all samples at time $t$:
\begin{align}\label{eqn::expansion}
    {\Psi}_{t,g_{t}}(\hat{\bar{Q}}^*_{t}) - {\Psi}_{t,g_{t}}(\bar{Q}_{0,t}) = -P_{0,\bar{O}(t-1)} D_{\bar{O}(t-1)}^*(\hat{\bar{Q}}^*_{t}) + R(\hat{\bar{Q}}^*_{t}, \bar{Q}_{0,t}, g_{t}, g_{0,t}), 
\end{align}
where $\hat{\bar{Q}}^*_{t}$ is the TMLE at time $t$. Due to the randomized setting, the second order remainder is $o_P(1/\sqrt{n})$, and thus negligible. Note that $D_{\bar{O}(t-1)}^*(\hat{\bar{Q}}^*_{t})(O(t)) = \frac{1}{n}\sum_{i}^n D_{\bar{O}(t-1)}^*(\hat{\bar{Q}}^*_{t})(O_i(t)) = 0$, so we can write the Equation \eqref{eqn::expansion} as
\begin{align*}
    {\Psi}_{t,g_{t}}(\hat{\bar{Q}}^*_{t}) - {\Psi}_{t,g_{t}}(\bar{Q}_{0,t}) &= \frac{1}{n}\sum_{i}^n D_{\bar{O}(t-1)}^*(\hat{\bar{Q}}^*_{t})(O_i(t)) -P_{0,\bar{O}(t-1)} D_{\bar{O}(t-1)}^*(\hat{\bar{Q}}^*_{t}) + o_P(1/\sqrt{n}) \\
    &= \frac{1}{n}\sum_{i}^n D_{\bar{O}(t-1)}^*(\bar{Q}_{0,t})(O_i(t)) -P_{0,\bar{O}(t-1)} D_{\bar{O}(t-1)}^*(\bar{Q}_{0,t}) \\ 
    &+ \frac{1}{n}\sum_{i}^n D_{\bar{O}(t-1)}^*(\hat{\bar{Q}}^*_{t})(O_i(t)) -P_{0,\bar{O}(t-1)} D_{\bar{O}(t-1)}^*(\hat{\bar{Q}}^*_{t}) \\
    &- \frac{1}{n}\sum_{i}^n D_{\bar{O}(t-1)}^*(\bar{Q}_{0,t})(O_i(t)) -P_{0,\bar{O}(t-1)} D_{\bar{O}(t-1)}^*(\bar{Q}_{0,t}) + o_P(1/\sqrt{n}) \\
    &= M_f(\bar{Q}_{0,t}) + M_f(\hat{\bar{Q}}^*_{t}) - M_f(\bar{Q}_{0,t}) + o_P(1/\sqrt{n}).
\end{align*} 

Under conditions outlined in Theorem 5 of \cite{laan2012networks}, we can establish asymptotic normality of the above decomposition. As opposed to Theorem 5 of \cite{laan2012networks}, we don't need an orthogonal decomposition of the centered difference, as our target parameter is conditional on the fixed dimensional summary of the past (not just on the known network). Appendix Section B2-4 in \cite{laan2012networks} establish asymptotic equicontinuity results for a process akin to $M_f(\hat{\bar{Q}}^*_{t},\bar{Q}_{0,t})$. Therefore, $M_{t}(\hat{\bar{Q}}^*_{t}) - M_{t}(\bar{Q}_{0,t}) = o_P(1)$, so that $M_{t}(\hat{\bar{Q}}^*_{t})$ behaves as $M_{t}(\bar{Q}_{0,t})$. It remains to investigate the weak convergence of $M_{t}(\bar{Q}_{0,t})$ as $n \rightarrow \infty$, which follows from CLT in our setting.
\end{proof}

\subsubsection{Working model $\mathcal{M}^{tn}$}

\textbf{Lemma 2} \textit{Let $\bar{Q}_{i,t}=\bar{Q}$ denote the common across $i$ and $t$ conditional expectation under working model $\mathcal{M}^{tn}$. The canonical gradient of $\Psi$ w.r.t. $\mathcal{M}^{tn}$ at $P$ is given by 
\begin{equation*}
   D^*(\bar{Q})(o) = \frac{1}{n \tau} \sum_{i=1}^n \sum_{t=1}^{\tau} D_{\bar{O}_i(t-1)}^*(\bar{Q})(o_{i}(t)),
\end{equation*}
where
\begin{equation*}
    D_{\bar{O}(t-1)}^*(\bar{Q})(o(t)) =  \frac{1}{n} \sum_{i=1}^n \frac{g_{i,t}^*(A_i(t)| C_{A_i}(t))}{g_{i,t}(A_i(t)| C_{A_i}(t))}(Y_i(t) - {\bar{Q}}(A_i(t),C_{A_i}(t))).
\end{equation*}
The average over time target parameter admits the following first order expansion:
\begin{align*}
    {\Psi}_{\tau}(\bar{Q}) - {\Psi}_{\tau}(\bar{Q}_{0}) &= -P_{0} D^*(\bar{Q}) + R(\bar{Q}, \bar{Q}_{0}, g, g_{0}) \\
    &= \frac{1}{\tau} \sum_{t=1}^{\tau} (- P_{0, \bar{O}(t-1)} D_{\bar{O}(t-1)}^*(\bar{Q}) + R(\bar{Q}, \bar{Q}_{0}, g_{t}, g_{0,t})),
\end{align*}
where $R$ is a second order remainder that is doubly-robust, with
$R(\bar{Q}, \bar{Q}_{0}, g, g_{0})=0$ if either $\bar{Q} = \bar{Q}_{0}$ or $g = g_{0}$.}

\begin{proof}
\vspace{2mm}
\noindent
The argument follows the same derivation as for Lemma 1. 
\end{proof}



\noindent
\textbf{Theorem 3} \textit{Let $\bar{Q}_{0,i,t}=\bar{Q}_{0}$ denote the truth and $\hat{\bar{Q}}_{i,t}^*=\hat{\bar{Q}}^*$ the TMLE which is common across $t$ and $i$ under working model $\mathcal{M}^{tn}$. We denote  
$\psi^{\tau} = \frac{1}{\tau} \sum_{t=1}^{\tau} {\Psi}_{t,g_{t}}(\hat{\bar{Q}}^*)$ and equivalently $\psi_0^{\tau} =  \frac{1}{\tau} \sum_{t=1}^{\tau} {\Psi}_{t,g_{0,t}}(\bar{Q}_{0})$. Under weak conditions we have that 
\begin{equation*}
    \psi^{\tau} - \psi_0^{\tau} \xrightarrow{d} \mathcal{N}(0, \sigma_{\infty}^2),
\end{equation*}
where $\sigma_{\infty}^2$ is the asymptotic variance of a limit distribution $\lim_{\tau \rightarrow \infty} \frac{1}{\tau} \sum_{t=1}^{\tau} D_{\bar{O}(t-1)}^*(\bar{Q}_{0,t})$.}

\begin{proof}
Note the following TMLE expansion
\begin{align*}
    \psi^{\tau} - \psi_0^{\tau} = T_{1,\tau,n}(\bar{Q}_{0}) + T_{2,\tau,n}(\hat{\bar{Q}}^*,\bar{Q}_{0}) + R(\hat{\bar{Q}}^*, \bar{Q}_{0}), 
\end{align*}
where 
\begin{align*}
    T_{1,\tau,n}(\bar{Q}_{0}) &= \frac{1}{\tau} \sum_{t=1}^{\tau} D_{\bar{O}(t-1)}^*(\bar{Q}_{0,t})(O(t)) - \mathbb{E}_{0} D_{\bar{O}(t-1)}^*(\bar{Q}_{0,t}) \\
    T_{2,\tau,n}(\hat{\bar{Q}}^*,\bar{Q}_{0}) &= \frac{1}{\tau} \sum_{t=1}^{\tau} D_{\bar{O}(t-1)}^*(\hat{\bar{Q}}^*_{t})(O(t)) - \mathbb{E}_{0} D_{\bar{O}(t-1)}^*(\hat{\bar{Q}}^*_{t}) \\
    &- \frac{1}{\tau} \sum_{t=1}^{\tau} D_{\bar{O}(t-1)}^*(\bar{Q}_{0,t})(O(t)) + \mathbb{E}_{0} D_{\bar{O}(t-1)}^*(\bar{Q}_{0,t}),
\end{align*}
and $R(\hat{\bar{Q}}^*, \bar{Q}_{0})$ is the second order remainder that's robust in the usual sense and negligible due to the treatment process being known. The first term $T_{1,\tau,n}(\bar{Q}_{0})$ is a sum of a martingale difference sequence, and the process $\sqrt{n\tau} T_{1,\tau,n}(\bar{Q}_{0})$ converges weakly to a Wiener process by the CLT for martingale triangular arrays. A set of sufficient conditions for the asymptotic normality of the term $T_{1,\tau,n}(\bar{Q}_{0})$ is that (a) the terms $D_{\bar{O}(t-1)}^*(\bar{Q}_{0,t})(O(t))$ remain bounded, and (b) the average of the conditional variances of $D_{\bar{O}(t-1)}^*(\bar{Q}_{0,t})(O(t))$ stabilize. In particular, we denote $\sigma_{\infty}^2$ as the variance under a limit distribution $\lim_{\tau \rightarrow \infty} \frac{1}{\tau} \sum_{t=1}^{\tau} D_{\bar{O}(t-1)}^*(\bar{Q}_{0,t})$. The second term can be bounded by the supremum of the process $\{T_{2,\tau,n}(\hat{\bar{Q}}^*,\bar{Q}_{0}) : \hat{\bar{Q}}^* \in \mathcal{M_{\bar{Q}}} \}$, which is an empirical process generated by the sequence of contexts and past data. To analyze $T_{2,\tau,n}(\hat{\bar{Q}}^*,\bar{Q}_{0})$, we rely on a maximal inequality in \cite{handel2009} under mixing and entropy conditions. For the exact equicontinuity result, we refer the interested reader to Theorem 4 in \cite{malenica2021adaptive} 
\end{proof}

\subsection{Comment on the Oracle Adaptive Sequential Design}

The oracle for the proposed adaptive design aims to maximize the overall number of detected cases at each time step under a resource constraint, such that
\begin{align}\label{eqn::objective}
    \arg \max_{g^{*}_{t} \in \mathcal{G}} ({\Psi}_{t,g^{*}_{t}}(P_{\bar{O}(t-1)})) \ \ \text{subject to} \ E[g^*_{t}(1 \mid Pa(A(t)))] \leq k \ \text{for all } t.
\end{align} 
The proposed Online Super Learner for adaptive sequential surveillance aims to approximate and learn the oracle design in Equation \eqref{eqn::objective}.
Since we are optimizing a short term outcome, this equates to optimizing the mean over time parameter under a resource constraint at each $t$. The optimal strategy $g^{opt}_{t}$ then corresponds to the intervention $g^*_{t}$ that maximizes Equation \eqref{eqn::objective} at $t$ under the $k$ percent constraint. Under the $t$-optimal testing allocation, contagious individuals can be quickly isolated from the general population with the ultimate goal of minimizing transmission at future time-points by prompt detection of active, circulating infections. Consequently, maximizing the number of caught infected individuals at each time step results in minimizing the total number of circulating infectious by time $\tau$. As such, the short term outcome serves as a \textit{surrogate outcome} for the ultimate goal of minimizing the number of active infections at the end of the observed trajectory.

\subsection{Online Super Learner under Working Models}\label{sec::osl}

In the following, we outline the Online Super Learner algorithm under flexible working models described in Subsection \ref{working_models}.
The proposed Online Super Learner is used to estimate $\bar{Q}_{0,i,t}$, the true conditional expectation of ${Y}_i(t)$ given the observed past. 

\subsubsection{Loss-based Parameter Definition and Estimation}\label{loss}

Let $\mathcal{M}_Y$ denote the working model for $\bar{Q}_{0,i,t}$, consisting of all functions $\bar{Q}_{i,t}$ that map $\mathcal{A} \times \mathcal{C_A}$ to $[0,1]$. The working model $\mathcal{M}_{Y}$ reflects a large class of candidate working models described in subsection \ref{working_models}. With that, we define $\mathcal{M}_Y$ as a collection of all conditional expectations with some  possible dependence structure across time and/or network that could have given rise to the observed data.
More specifically, for all $\bar{Q}_{i,t} \in \mathcal{M}_Y$, the conditional expectation of the outcome depends on the past only through a fixed dimensional summary measure $C_{L_{i}}(t)$, with $\mathcal{M}_Y$ satisfying Assumption \ref{fixed_dim_summary}. Under the decomposition of the fixed dimensional summary outlined in Assumption \ref{decompose}, $C_{L_i}(t) = (A_i(t), C_{A_i}(t))$ where $C_{A_i}(t) = h_{A_i}(Pa({A}(t))) = h_{A_i}(Pa({O}(t)))$. In addition, any $\bar{Q}_{i,t} \in \mathcal{M}_Y$ could be common in both samples and times $(i,t)$, in time $(t)$, or across samples $(i)$ --- thereby satisfying some combination of Assumption \ref{common_it}, \ref{common_t} or \ref{common_i}, respectively. By specifying $\bar{Q}_{i,t} \in \mathcal{M}_Y$, we are implying there is some structure to the dependent process, as described by one of the working models in Section \ref{working_models}; we are, however, not specifying the exact structure. 

Let $\bar{Q}_{i,t} \in \mathcal{M}_Y$. Under the working model $\mathcal{M}_Y$, we write the risk-based target parameter $\psi^{\text{risk}}_{0,t}$ corresponding to the \textit{risk-based} testing strategy as
\begin{align*}
    \psi^{\text{risk}}_{0,t}(A_i(t), C_{A_i}(t)) = \Psi^{\text{risk}}_{t}(P_0)(A_i(t), C_{A_i}(t)) \equiv {\bar{Q}}_{0,i,t}(A_i(t), C_{A_i}(t)),
\end{align*}
denoting the true conditional expectation of $Y_i(t)$ given the fixed dimensional summary of the observed past. Note that, in addition to defining the \textit{risk-based} testing strategy, estimating ${\bar{Q}}_{0,i,t}$ is an integral part of the Online SL for adaptive sequential surveillance (both for the loss- and TMLE- based selectors). In the following, we will refer mostly to the risk-based target parameter, but the formulation follows for all applications of ${\bar{Q}}_{0,i,t}$. Let $L$ denote a valid loss function for $\Psi^{\text{risk}}_{t}(P_0)$, and $C(i,m)$ the time $m$- and unit $i$-specific record $C(i,m) = (Y_i(m), A_i(m), \bar{O}(m-1))$. A valid loss $L$ is defined as a function whose true conditional expectation is minimized by the true value of the target parameter; here, the minimizer is therefore ${\bar{Q}}_{0,i,t}$. Further, let $L$ be a function that maps every $\Psi^{\text{risk}}_{t}(P)$ to $L(\Psi^{\text{risk}}_{t}(P)) : C(i,t) \mapsto L(\Psi^{\text{risk}}_{t}(P))(C(i,t))$. As our parameter of interest is a conditional mean, we could use the square error to define the loss, resulting in 
\begin{equation*}
    L(\Psi^{\text{risk}}_{t}(P))(C(i,t)) = w(i,t)\left[Y_i(t) - \Psi^{\text{risk}}_{t}(P)(A_i(t), C_{A_i}(t))\right]^2,
\end{equation*}
for sample $i$ and time $t$, where $w(i,t)$ is the subject and time specific weight. Our accent on appropriate loss functions strives from their multiple use within our framework --- as a theoretical criterion for comparing the estimator and the truth, and as a way to compare multiple estimators of the target parameter \citep{dudoit2003b,dudoit2003a, laan2006oracle, vaart2006}. The loss function $L$ and the working model $\mathcal{M}_Y$ will be used to estimate ${\bar{Q}}_{0,i,t}$.

We define the true risk, $R(P_0,\Psi^{\text{risk}}_{t}(P)) = R(P_0,\psi^{\text{risk}}_{t})$, as the expected value of $L(\Psi^{\text{risk}}_{t}(P))(C(i,t))$ w.r.t the true probability distribution over all samples and times. As $\psi^{\text{risk}}_{0,t} = \Psi^{\text{risk}}_{t}(P_0)$, we note that $\psi^{\text{risk}}_{0,t}$ is the minimizer of the true risk over all evaluated $\psi^{\text{risk}}_{t}$ in the parameter space, such that $\psi^{\text{risk}}_{0,t} = \text{arg}\min_{\psi^{\text{risk}}_{t}} R(P_0,\psi^{\text{risk}}_{t})$. Therefore, the true risk establishes the true measure of performance for $\Psi^{\text{risk}}_{t}(P)$, with $\Psi^{\text{risk}}_{t}(P_0)$ denoting the minimum. Further, we define the estimator mapping, $\hat{\Psi}^{\text{risk}}_{t}$, as a function from the empirical distribution to the parameter space $\bm{\Psi}^{\text{risk}}$. As previously defined, let $P_{n,t}$ be the empirical distribution of $n$ time series collected until time $t$. In particular, $P_{n,t} \mapsto \hat{\Psi}^{\text{risk}}_{t}(P_{n,t})$ represents a mapping from $P_{n,t}$, with $n$ time-series collected until time $t$, into a predictive function $\hat{\Psi}^{\text{risk}}_{t}(P_{n,t})$. Further, the predictive function $\hat{\Psi}^{\text{risk}}_{t}(P_{n,t})$ maps $(A_i(t), C_{A_i}(t))$ into a time- and subject-specific outcome, $\hat{\Psi}^{\text{risk}}_{t}(P_{n,t})(A_i(t), C_{A_i}(t))$. We can write $\psi^{\text{risk}}_{n,t}(A_i(t), C_{A_i}(t)) \coloneqq \hat{\Psi}^{\text{risk}}_{t}(P_{n,t})(A_i(t), C_{A_i}(t))$ as the predicted outcome for unit $i$ of the estimator $\hat{\Psi}^{\text{risk}}_{t}(P_{n,t})$ at time $t$, based on $(A_i(t), C_{A_i}(t))$. 

\subsubsection{Online Cross-validation Selector}\label{cv}

We resort to appropriate CV for dependent data in order to obtain an unbiased estimate of the true risk. To derive a general representation for cross-validation, we define a time $t$ specific split vector $B_t$, where $t$ indicates the final time-point of the currently available data where for all $1 \leq i \leq n$, $B_t(i, \cdot) \in \{-1,0,1\}^t$. Let $v$ be a particular $v$-fold, where $v$ range from $1$ to $V$. A realization of $B_t$ defines a particular split of the learning set into corresponding three disjoint subsets,
\[
  B_t^v(i,m,\cdot)=\begin{cases}
               -1, \ \  C(i,m) \  \text{not used}\\
                \ \ \  0, \ \ C(i,m) \  \text{in the training set}\\
                 \ \ \ 1, \ \ C(i,m) \  \text{in the validation set,}
            \end{cases}
\]
where $B_t^v(i,m)$ reflects the $v$-fold assignment of, at minimum, unit $i$ at time point $m$ for split $B_t^v$ trained on data until time $t$. Then, for each $t$, we define $P_{n,t}^0$ as the empirical distribution of the training sample until time $t$. Similarly, we let $P_{n,t}^1$ denote the empirical distribution of the validation set. Sets $\mathcal{B}_{t,v}^0$ and $\mathcal{B}_{t,v}^1$ contain all $(i,m)$ indexes in the training and validation sets for fold $v$, respectively. 

Suppose we have $K$ candidate estimators $\hat{\Psi}^{\text{risk}}_{k,t}$, $k=1, \ldots, K$, that can be applied to $(A_i(t), C_{A_i}(t))$ for $i \in [n]$ and $t \in [\tau]$. For a given problem, a library of prediction algorithms can be proposed. In particular, the candidate estimators for the outcome regression should include different learners corresponding to the underlying working model(s) in $\mathcal{M}_Y$. The algorithms in the candidate library may range from single time-series learners to networks of time-series, as well as learners that pool data across time, network, or all the data available up to time $t$. The Online Super Learner library can also include algorithms that put decaying weight of different rates on components of the past, or consider learners indexed by subsets of a network. We utilize working models in the estimation procedure without explicit reliance on any of the described working models in particular; we let the cross-validation procedure determine the underlying structure of the process at each time step. In order to evaluate performance of each $\hat{\Psi}^{\text{risk}}_{k,t}$, we use cross-validation for dependent data to estimate the average loss for each candidate. In particular, each $\hat{\Psi}^{\text{risk}}_{k,t}$ is trained on the training set $P_{n,t}^0$ and results in a predictive function $\hat{\Psi}^{\text{risk}}_{k,t}(P_{n,t}^0)$ for $k= 1, \ldots, K$. We define the online cross-validated risk for each candidate estimator as:
\begin{align*}
R_{t,CV}(P_{n,t}^1,\hat{\Psi}^{\text{risk}}_{k,t}(\cdot)) &= \sum_{j=1}^{t} \sum_{v=1}^V \sum_{(i,s) \in \mathcal{B}_{j,v}^1} L(\hat{\Psi}^{\text{risk}}_{k,t}(P_{n,j}^0))(C(i,s)), 
\end{align*}
where $R_{t,CV}(P_{n,t}^1,\hat{\Psi}^{\text{risk}}_{k,t}(\cdot))$ is the cumulative performance of $\hat{\Psi}^{\text{risk}}_{k,t}$ trained on the training sets and evaluated on the corresponding validation samples until time~$t$. For instance, while $\hat{\Psi}^{\text{risk}}_{k,t}(P_{n,t}^0)$ is trained on the training set $P_{n,t}^0$, its performance will be evaluated over the validation set $P_{n,t}^1$. The online cross-validated risk estimates the following true online cross-validated risk, denoted as $R_{t,CV}(P_{0}, \hat{\Psi}^{\text{risk}}_{k,t}(\cdot))$ and expressed as
\begin{align*} 
R_{t,CV}(P_{0},\hat{\Psi}^{\text{risk}}_{k,t}(\cdot)) &= \sum_{j=1}^{t} \sum_{v=1}^V \sum_{(i,s) \in \mathcal{B}_{j,v}^1} E_{0}[L(\hat{\Psi}^{\text{risk}}_{k,t}(P_{n,j}^0))(C(i,s)) | C_{A_i}(s)].
\end{align*}
Note that $R_{t,CV}(P_{0}, \hat{\Psi}^{\text{risk}}_{k,t}(\cdot))$ reflects the true average loss for the candidate estimator with respect to the true conditional distribution. As opposed to the true online cross-validated risk, $R_{t,CV}(P_{n,t}^1, \hat{\Psi}^{\text{risk}}_{k,t}(\cdot))$ gives an empirical measure of performance for each candidate estimator $k$ trained on training data until time $t$. In light of that, we define the discrete online cross-validation selector as:
\begin{align}
k_{n,t} = \text{arg}\min_{k=1,\ldots, K} R_{t,CV}(P_{n,t}^1, \hat{\Psi}^{\text{risk}}_{k,t}(\cdot)),
\end{align}
which is the estimator that minimizes the online cross-validated risk. The discrete (online) Super Learner is the estimator that at each time point uses the estimates from the discrete online cross-validation selector. Since each of the $k$ learners can reflect different working models, the discrete (online) SL picks one of the candidate dependence structures for time point $t$. We emphasize that the discrete SL can switch from one learner to another as $t$ progresses, in response to accumulating more data and detecting changes in the network and trajectory. Note that, if all the candidate estimators are online estimators, the discrete (online) SL is itself an online estimator. 

In order to study performance of an estimator, we construct loss-based dissimilarity measures. In particular, loss-based dissimilarity compares the performance of a particular estimator to the true parameter, defined as
\begin{equation*}
d_{0,t}(\hat{\Psi}^{\text{risk}}_{k,t}, \psi^{\text{risk}}_{0,t})
= \sum_{j=1}^{t} \sum_{v=1}^V \sum_{(s) \in \mathcal{B}_{j,v}^1} E_{0}\bigg[ \Big( L(\hat{\Psi}^{\text{risk}}_{k,t}(P_{n,t}^0)) - L({\psi}^{\text{risk}}_{0,t})\Big) (C(i,s)) \bigg| C_{A_i}(s)\bigg].
\end{equation*}
We define the time $t$ \textit{oracle} selector as the unknown estimator that uses the candidate closest to the truth in terms of the defined dissimilarity measure:
\begin{align}
\overline{k}_{n,t} = \text{arg}\min_{k=1,\ldots,K} d_{0,t}(\hat{\Psi}^{\text{risk}}_{k,t}(P_{n,t}^0), \psi^{\text{risk}}_{0,t}).
\end{align}
Due to it being a function of the true conditional mean, the oracle selector cannot be computed in practice. However, we can utilize it as benchmark in order to describe performance of the online cross-validation based estimator. Following the argument given by \cite{benkeser2018}, it follows that the performance of the discrete Online Super Learner is asymptotically equivalent to that of the oracle selector. The result relies on the martingale finite-sample inequality by \cite{handel2009} to show that, as $t \rightarrow \infty$, 
\begin{align}
\frac{d_{0,t}(\hat{\Psi}^{\text{risk}}_{k_{n,t},t}(P_{n,t}^0),\psi^{\text{risk}}_{0,t})}
{d_{0,t}(\hat{\Psi}^{\text{risk}}_{\overline{k}_{n,t},t}(P_{n,t}^0),\psi^{\text{risk}}_{0,t})} \rightarrow_p 1.
\end{align}

\subsubsection{Ensemble of Candidate Estimators}\label{ensemble}

In this section, we consider a more flexible online learner by considering an ensemble of a given set of estimators. As individual learners reflect different candidate working models for time and network dependence, a weighted combination of candidate estimators reflects a mixture of working models. We define $\hat{\Psi}^{\text{risk}}_{\beta,t}$ as an ensemble of $K$ estimators indexed by a finite-dimensional vector of coefficients $\beta$, where $\beta = (\beta_1, \ldots, \beta_K)$. For example, $\hat{\Psi}^{\text{risk}}_{\beta,t}$ could represent a convex linear combination:
\begin{equation*}
    \hat{\Psi}^{\text{risk}}_{\beta,t} = \sum_{k=1}^K \beta_k \hat{\Psi}^{\text{risk}}_{k,t},
\end{equation*}
such that $\sum_{k=1}^K \beta_k = 1$ with $\beta_k \geq 0$ 
for all $\beta_k$. Let $R_{t,CV}(P_{n,t}^1,\hat{\Psi}^{\text{risk}}_{\beta,t}(\cdot))$ be the online cross-validated risk for $\hat{\Psi}^{\text{risk}}_{\beta,t}$ given by
\begin{align*}
R_{t,CV}(P_{n,t}^1,\hat{\Psi}^{\text{risk}}_{\beta,t}(\cdot)) &= \sum_{j=1}^{t} \sum_{v=1}^V \sum_{(i,s) \in \mathcal{B}_{j,v}^1} L(\hat{\Psi}^{\text{risk}}_{\beta,t}(P_{n,j}^0))(C(i,s)).  
\end{align*}
We denote $\beta_{n,t}$ as the choice of $\beta$ that minimizes the online cross-validated risk,
\begin{equation}
    \beta_{n,t} = \text{arg}\min_{\beta} R_{t,CV}(P_{n,t}^1,\hat{\Psi}^{\text{risk}}_{\beta,t}(\cdot)).
\end{equation}
Note that $\beta_{n,t}$ itself is not an online estimator, since it involves recomputing the minimum for each $t$. 
We can define an oracle selector for this class of estimators as the choice of weights that minimizes the true average of the loss-based dissimilarity:
\begin{equation}
\overline{\beta}_{n,t} = \text{arg}\min_{\beta} d_{0,t}(\hat{\Psi}^{\text{risk}}_{\beta,t}(P_{n,t}^0), \psi^{\text{risk}}_{0,t}).     
\end{equation}
The oracle results extend to an ensemble of candidate estimators, and we can show that
\begin{align}
\frac{d_{0,t}(\hat{\Psi}^{\text{risk}}_{\beta_{n,t},t}(P_{n,t}^0),\psi^{\text{risk}}_{0,t})}
{d_{0,t}(\hat{\Psi}^{\text{risk}}_{\overline{\beta}_{n,t},t}(P_{n,t}^0),\psi^{\text{risk}}_{0,t})} \rightarrow_p 1
\end{align}
as $t$ goes to infinity. As such, the performance of the Online Super Learner is asymptotically equivalent with the optimal ensemble of candidate estimators, which reflect different working models.

\subsection{Algorithms}\label{sec::algs}

\begin{algorithm}[H]\caption{Loss-based Selector}\label{alg::V1}
\SetAlgoLined

\textbf{Input:} $k$, $n$, $\{g^*_{t,1}, \ldots, g^*_{t,S}\} \in \mathcal{G}$ \\

\vspace{2mm}
\For{$t=1$ {\bfseries to} $t=\tau$}{
    \vspace{2mm}
    \eIf{$t < 2$}{
        \vspace{2mm}
        Let each $g^*_{t,s}$ correspond to random testing with success probability $1/n$. \\
        
        \vspace{2mm}
        Pick the $s^{th}$ stochastic intervention randomly, and let $s_t = s$. \\
        
        \vspace{2mm}
        Distribute $k$ tests according to $g^*_{t,s_t}$, and observe $Y(t)$.
        }{
        \vspace{2mm}
        Generate an estimate $\hat{\bar{Q}}_{i,t}$ of $\bar{Q}_{0,i,t}$ based on the training data $P_{n,t}^0$. \\
        (based on various working models for the dependence structure) \\
        
        \vspace{2mm}
        Define interventions $g^*_{t,s}$ for the current time point $t$. \\
        
        \vspace{2mm}
        Assign tests according to $g_{t-1,s_{t-1}}$, and observe $Y(t)$.
        
        \vspace{2mm}
        \For{$s=1$ {\bfseries to} $s=S$}{
        
        \vspace{2mm}
        Estimate the $s$-specific $\psi_{t,g^{*}_{t,s}}$ as \\
        \begin{equation*}
         {\Psi}_{t,g^{*}_{t,s}}(\hat{\bar{Q}}_{t}) = \frac{1}{n}\sum_{i=1}^n \mathbb{E}_{\hat{\bar{Q}}_{i,t},g_{i,t,s}^*}[Y_i(t) \mid \bar{O}(t-1)].
        \end{equation*}
        
        \vspace{2mm}
        Evaluate the online CV risk for the $s^{th}$ design over a window: \\
        \begin{align*}
         \sum_{m=t+1}^{\tau_w} 
          \sum_{i=1}^n \frac{1}{g^*_{i,m,s}(1 \mid \bar{O}_i(t-1))} \left( Y_{i,g^*_{i,m,s}}(m)
          - \hat{\Psi}_{i,m,g^*_{i,m,s}}(\hat{\bar{Q}}_{m})(\bar{O}(m-1)) \right)^2.
         \end{align*}
        }
        
        \vspace{2mm}
        Pick the discrete SL design $s_t$:
   \begin{align*}
     s_t &= \min_s 
     \sum_{m=t+1}^{\tau_w} L(\hat{\Psi}_{m,g^*_{m,s}}(P_{n,m}))(C(m)). 
\end{align*}
        }
    } 
\end{algorithm}

\newpage
\begin{algorithm}[H]\caption{TMLE-Selector}\label{alg::V2}
\SetAlgoLined

\textbf{Input:} $k$, $n$, $\{g^*_{t,1}, \ldots, g^*_{t,S}\} \in \mathcal{G}$ \\

\vspace{2mm}
\For{$t=1$ {\bfseries to} $t=\tau$}{
    \vspace{2mm}
    \eIf{$t < 2$}{
        \vspace{2mm}
        Let each $g^*_{t,s}$ correspond to random testing with success probability $1/n$. \\
        
        \vspace{2mm}
        Pick the $s^{th}$ stochastic intervention randomly, and let $s_t = s$. \\
        
        \vspace{2mm}
        Distribute $k$ tests according to $g^*_{t,s_t}$, and observe $Y(t)$.
        }{
        \vspace{2mm}
        Generate an estimate $\hat{\bar{Q}}_{t}$ of $\bar{Q}_{0,i,t}$ based on the training data $P_{n,t}^0$. \\
        (based on various working models for the dependence structure) \\
        
        \vspace{2mm}
        Define interventions $g^*_{t,s}$ for the current time point $t$. \\
        
        \vspace{2mm}
        Assign tests according to $g_{t-1,s_{t-1}}$, and observe $Y(t)$.
        
        \vspace{2mm}
        \For{$s=1$ {\bfseries to} $s=S$}{
        
        \vspace{2mm}
        Update $\hat{\bar{Q}}_{t}$ by fitting a logistic model 
        \begin{equation*}
        \text{logit} \ \hat{\bar{Q}}^*_{t}(A(t),\bar{O}(t-1))  = \text{logit} \ \hat{\bar{Q}}_{t}(A(t),\bar{O}(t-1)) + \epsilon,
        \end{equation*}
        
        \noindent
        with weights defined as $w_t = g_{t}^*(A(t) \mid \bar{O}(t-1))/g_{t}(A(t) \mid \bar{O}(t-1))$. \\
        
        \vspace{2mm}
        The updated estimate solves the EIF estimating equation
        \begin{equation*}
        \frac{1}{n}\sum_{i=1}^n \frac{g_{t,i}^*(A_i(t) \mid \bar{O}(t-1))}{g_{t,i}(A_i(t) \mid \bar{O}(t-1))}(Y_i(t) - \hat{\bar{Q}}^*_{t}(A_i(t),\bar{O}(t-1))) = 0.
        \end{equation*} 
        
        \vspace{2mm}
        The TMLE of the $s$-specific stochastic intervention is
        \begin{equation*}
        {\Psi}_{t,g^{*}_{t,s}}(\hat{\bar{Q}}^*_{t}) = \frac{1}{n}\sum_{i=1}^n \mathbb{E}_{\hat{\bar{Q}}^*_{t},g_{i,t,s}^*}[Y_i(t) \mid \bar{O}(t-1)].
        \end{equation*}
        
        \noindent
        with estimated variance $\hat{\sigma}_{t,s}^2$.
        }
        
        \vspace{2mm}
        Pick the discrete SL design $s_t$ which maximizes: \\
        \vspace{2mm}
        \noindent
        1) Point estimate: $\max_{s} {\Psi}_{t,g^{*}_{t,s}}(\hat{\bar{Q}}^*_{t})$ \\
        
        \noindent
        2) Lower bound of the CI: $\max_{s} [{\Psi}_{t,g^{*}_{t,s}}(\hat{\bar{Q}}^*_{t}) - 1.96 \hat{\sigma}_{t,s}/\sqrt{n}].$ \\
        }
    } 
\end{algorithm}

\newpage

\subsection{Agent-based model of the university campus}\label{sec::fullmodel}

\subsubsection{Model Description}\label{sec::agent-based-model}
We have developed a dynamic, agent-based model of transmission of SARS-CoV-2 on a residential university campus. The model parameters reflect transmission dynamics among students and faculty on a medium-size public university with on- and off-campus living arrangements, class number and size compatible with the University of California, Berkeley. In particular, we present an agent-based model for a setting with 20,000 university-affiliated individuals with varying age, social network, class size, risk-level and living accommodations in a U.S. college town. While we parameterized the model according to U.C. Berkeley, our simulations can be easily modified to reflect any residential higher-education institution. In the following subsections, we describe the core features, dynamics and assumptions underlying the agent-based model. The main features of the infectious disease dynamics are illustrated in Figure 4,
with the key parameter values provided in Table \ref{table:population} and \ref{table:parameters}.

\vspace{3mm}
\noindent
\textbf{Population and Network Structure} \\
We focus on a residential campus community as the target population, broadly reflecting students and faculty/staff at a medium size university. During the pandemic, most universities implemented a wide array of measures aimed at limiting transmission on campus --- including mandatory vaccination, mask-wearing, social distancing, enhanced cleaning protocols, increased availability of sanitizing products and cancelling large social gatherings. While each of these interventions has benefits, we focus solely on the optimal testing strategy. By not including other interventions which can alleviate the spread of the infectious disease, we aim to investigate performance of the proposed method under the worst-case scenario. We don't model any breaks or high-risk events during the semester, such as holiday travel. Most importantly, we assume the modeled population is constant, where individuals can only be removed due to isolation. While we model the campus population as a closed community, infections induced by interactions with individuals outside the campus population are allowed - thus providing a steady stream of new cases.


We assume the campus population is divided into distinct groups with different collective behavior, dictated by their covariates. For example, we model three sub-populations --- students living on-campus, students living off-campus, and faculty/staff, as shown in Table \ref{table:population}. The distinct groups are differentiated by baseline covariates, underlying risk of infection, and different degrees of interaction within one's network. The age distribution is modeled in order to reflect a predominately younger campus population, with 6 age categories ($<18$, $18-28$, $29-38$, $39-48$, $49-68$, $>68$) sampled with probabilities $(0,1, 0.5, 0.2, 0.07, 0.07, 0.06)$, respectively. The baseline risk of infection is sampled from a $\textit{beta(1, n/36000)}$ distribution, reflecting dependence on the size of the observed population. The modeled network structure consists of several components, including off-campus living for both students and faculty, on-campus housing, in person classes, and random exposure. Each network type has an unique probability of transmission within a graph, with highest being for students in a communal housing. Below, we describe each component of the network structure in more details. 

\begin{enumerate}
    \item \textbf{Off-campus housing for students:} We model off-campus housing separately for students and faculty. Students of similar age category are grouped in off-campus housing units, with household sample size drawn from $(1-8)$ range from a \textit{negative binomial} distribution with $\mu = 2$. While we concentrate on the university campus population, this is not exclusively represented in the observed housing structure population, with students being vastly dispersed in the region. With that in mind, we assign housemates randomly to a total of $n*0.01$ off-campus houses of varying household size. The probability of transmission among the same household is $0.03$.
     
    \item \textbf{Off-campus housing for faculty and staff:} We model off-campus housing for faculty/staff similar to the student off-campus accommodations. Individuals older that $28$ years old are candidates for faculty/staff housing. We sample the size of a faculty/staff household from $(0-2)$ range from a \textit{negative binomial} distribution with $\mu = 0.5$. We assign housemates randomly with no age preference (accounting for possible family ties). We distribute sampled household individuals to a total of $n*0.005$ faculty/staff houses. The probability of transmission among the same household is $0.03$, as in off-campus living arrangements for students.
    
    \item \textbf{On-campus housing:} The probability of a student being part of on-campus housing available at U.C. Berkeley is derived based on the $25$ communal living buildings with $8908$ students at maximum occupancy. During the regular school year, $14\%$ of the student population is living in a university provided dorms and Greek housing. In the Summer of 2020, approximately $5\%$ were projected to return to communal housing in the Fall of 2020, with an average occupancy of 2 students per room. We sample the size of a household from a \textit{negative binomial} distribution with $\mu = 2$, and allocate people to the $25$ available communal buildings with equal probability. We assume that the risk of transmission from the community to off-campus students is lower than the risk of transmission from the community to on-campus students, based on the evidence from a campus outbreak of H1N1 in 2009 \citep{guh2011}. We further assume that on-campus students living in communal settings are considered a higher risk for transmission. If a student is part of communal housing (e.g. dorms or Greek housing), the risk of infection increases by $0.01$ in addition to the regular housing risk. 
    
    \item \textbf{In-person classes:} We modeled in-person, online and hybrid available classes in Fall of 2020 based on the suggested schedule announced by the U.C. Berkeley administration in the Summer of 2020. This consisted of $111$ hybrid and $98$ flexible classes, and up to $314$ in-person lectures with a maximum of $25$ students per class. This is in accordance with the wide-spread university policy to strive for a majority of online classes, with few small-in-size in-person lectures and staggered class times in order to decrease student contact. We sample the in-person class size from a \textit{negative binomial} distribution with a minimum of $15$ and maximum of $25$ students per class. We further assume $18\%$ of total student and faculty/staff population would return to U.C. Berkeley for in-person classes and university-based responsibilities. The probability of transmission for an in-person attendance among class members and people that frequent the same classroom/building is $0.01$.
    
    \item \textbf{Random:} Finally, we account for random exposure from people being in close contact during their regular day-to-day activities (e.g., taking public transportation). We model random exposure as the number of people outside one's network that come in close contact with the individual in question. We sample the number of nodes in a random graph from the \textit{negative binomial} distribution with a minimum of $3$ and maximum of $25$ encounters per day. This allocation is dependent on individual risk, and corresponds to the latent part of the network structure. The probability of transmission in a random graph is $0.005$. In Appendix Section \ref{sec::add_sims}, we investigate how the proposed method responds to increased individual risk.

\end{enumerate}

\vspace{3mm}
\noindent 
\textbf{Individual Disease Progression} \\
We assume that the disease evolution always progresses through set stages in each infected individual, as exemplified in Figure 4
and Table \ref{table:parameters}. We separate the overall student and faculty/staff populations into the following compartments at each time step: susceptible ($S$), exposed ($E$), detectable infectious ($It$), symptomatic infectious ($Is$), asymptomatic infectious ($Ia$), recovered ($R$) and isolated ($I$).
The probability of a new infection, conditional on being susceptible, depends on the hazard at the current time-step $t$. In particular, the hazard function takes into account the individual's current stage, time spent in the state, one's full network, individual risk and covariates, and the current state of the epidemic; the more infectious people and the more advanced the epidemic, the greater the risk of a new infection. If an individual is never infected, they remain susceptible until the end of the observation period. Infected individuals advance to the next compartment, or remain in their current one, stochastically at each time step. The transition probabilities and average length of stay for each compartment are modeled based on the literature available in Summer of 2020, and described in more details in the following. We initiate the infectious disease trajectory with $8$ exposed, $2$ temporarily infectious and $2$ symptomatic infectious cases of SARS-CoV-2 infection at the start of data collection.

\begin{enumerate}
    \item \textbf{Susceptible (S):} Except for the seeded samples, all individuals start as susceptible. The $S$ stage has no time to next state, as all units remain susceptible until infection or end of the semester. 
    
    \item \textbf{Exposed (E):} If infected, a susceptible individual transitions to an exposed status. This stage is not yet infectious due to a low viral load, and it is not detectable via testing. Exposed units spend 4.5 days on average as exposed, with the number of days in the $E$ compartment sampled from
    $\textit{gamma}(9,2)$ distribution. 
    
    \item \textbf{Detectable Infectious (It):} From the $E$ compartment, exposed individual transitions to a temporary state $It$. Each individual spends 1 day on average as $It$, with the number of days in the detectable infectious compartment sampled from the $\textit{gamma}(1,1)$ distribution. During this transition period, individuals are not symptomatic, but are infectious. The compartment $It$ is also the first stage of the disease trajectory at which the infection can be detected via a test. We model the ability to infect others as reduced while $It$, but increasing with each time step, with a peak at the transition to the next compartment. Since a significant fraction of COVID-19 patients are asymptomatic, especially within the younger population, we divided the next infectious compartment into symptomatic and asymptomatic with distinct transition probabilities and duration of the infectious state. Determination whether one is asymptomatic or symptomatic is a Bernoulli with success probability depending on the age of the exposed individual. For instance, all samples within the age gap $<18$ to $28$ were symptomatic with success probability 0.4, $29-48$ with probability 0.6, and the oldest members of the campus probability were symptomatic with probability 0.8. 
    
    \item \textbf{Symptomatic (Is):} Infected individual can  transition to a symptomatic compartment following the temporarily infectious state. One spends 13 days on average as a symptomatic patient, with the number of days in the $Is$ stage sampled from the $\textit{gamma}(13,1)$ distribution. The symptomatic compartment encompasses any COVID-19 symptom, or combination of, that warrants a test; we assume a patient that is symptomatic exhibits symptoms the entire time while $Is$. Consequently, a test can be requested and administered at any time point during the symptomatic stage, not just at the onset of symptoms. As a $Is$, a person is infectious for the duration of the state; however, their infectiousness decreases the longer they are in the current stage.
    
    \item \textbf{Asymptomatic (Ia):} Infected individual can also transition to an asymptomatic infectious state following the $It$ compartment. If asymptomatic, one spends 7.5 days on average as $Ia$, with the number of days in the current stage sampled from the $\textit{gamma}(7.5,1)$ distribution. While there is still emerging research in this area, we assume relative infectiousness for the asymptomatic class to be less than for the symptomatic class; the reduction in transmission probability is assumed at $39\%$ \citep{Perkins2020}. We let asymptomatic individuals remain asymptomatic for the duration of the infection, assuming no transition from $Ia$ to $Is$ compartment. As the campus population is predominately young, the percentage of asymptomatic individual is higher than in the general population. This makes detecting active infections a more difficult task for the campus community.
    
    \item \textbf{Recovered (R):} An individual recovers by going through the full disease cycle, $(S,E,It,Is/Ia,R)$. As the campus population is predominately young and healthy, we don't model death as an outcome. Instead, all individuals eventually reach a terminal state $R$. We assume recovered individuals obtain at least a temporary immunity that lasts the length of the semester. Therefore, recovered individuals do not become susceptible again. Studies are still ongoing regarding the duration, if any, of temporary immunity \citep{wu2021}. 
    
    \item \textbf{Isolated (I):} Finally, an infected member of the campus population can be identified via testing at any point of the disease trajectory, except for the $E$ and $R$ state. Positive test results at time $t$ lead to possible contact tracing at $t+1$ for the complete known network. Individuals administered a test and diagnosed positive for COVID-19 are isolated. Quarantine conditions are modeled on a continuum via the isolation factor, in an attempt to mimic realistic conditions. In particular, if the isolation factor is $0$, quarantine is modeled as isolation with complete reduction in one's contact rate for the duration of the infection. For higher isolation factor, detected infectious individual is still able to infect others at a reduced rate. If caught, the infected individual goes though all the usual disease stages with a varying isolation level, until it reaches a terminal $R$ state. As mentioned previously, isolated, and eventually recovered patients, do not become susceptible again.
\end{enumerate}



\begin{table}[htbp]\caption{Simulated university population during COVID-19 pandemic \label{table:population}}
        \centering
        \small
        \setlength\tabcolsep{4pt}
        \begin{tabular}{|c|c|c|c|}
            \toprule
               \textbf{Parameter} &  \textbf{Population}       &    \textbf{On-campus housing}       &  \textbf{In-person class }\\
            \midrule
                Value (total or $\%$) & 20,000 & $5\%$ & $18\%$ \\
            \bottomrule
        \end{tabular}
\end{table}

\begin{table}[htbp]\caption{Simulation model parameters for the COVID-19 pandemic
\label{table:parameters}}
        \centering
        \small
        \setlength\tabcolsep{4pt}
        \begin{tabular}{|c|c|c|c|c|}
            \toprule
               \textbf{Parameter} & \textbf{Symbol} &    \textbf{Average (days)} & \textbf{Range (Q1,Q3)} & \textbf{Distribution} \\
            \midrule Latent Period & \textbf{E} & 4.5 & (3.4, 5.4) & \textit{gamma}(9,2) \\
            Detectable and Infectious & \textbf{It} & 1 & (0.7, 1.4) & \textit{gamma}(1,1) \\
            Asymptomatic & \textbf{Ia} & 7.5 & (5.5, 9.1) & \textit{gamma}(7.5,1) \\ 
            Symptomatic & \textbf{Is} & 13 & (10.4, 15.2) & \textit{gamma}(13,1) \\
            \bottomrule
        \end{tabular}
\end{table}

\subsubsection{Testing Strategies as Stochastic Interventions}\label{tests}

We can describe a wide range of testing schemes by a stochastic intervention $g^*_{t}$. The familiar case of static interventions --- defined by setting $A_i(t)$ to a value $a$ in its support $\mathcal{A}$ --- can be recovered by choosing degenerate candidate distributions which place all mass on just a single value \citep{stock1989nonparametric,diaz2012}. In particular, we can conceptualize a testing strategy as an intervention that assigns all individuals testing allocations $\{A_1(t), \ldots, A_n(t)\} \in \{0,1\}^n$ at time $t$ in a two-step procedure. Operationally, we delineate between the probability of receiving a test based on the available history, and how the said probabilities are used in order to output a final testing decision. 
For example, $g^*_{t}$ could assign tests based on the 
ranking of the probability of receiving a test given the past, or based on the probability itself. Alternatively, $g^*_{t}$ could be a deterministic intervention conditional on the observed past (e.g., \textit{rule-based}, static interventions). Let $f$ denote a function that takes the probability of being assigned a test given the past, and outputs either a stochastic or deterministic rule as to how such probabilities are to be used to assign a test. We formally define the stochastic intervention $g^*_{t}$ as a function that maps every $f$ to $g^*_{t}(f) : (Pa({A}(t))) \mapsto g^*_{t}(f)(1 \mid Pa({A}(t))) = g^*_{t}(f)(1 \mid \bar{O}(t-1))$. If $f$ is an identity function, and each sample is to receive a test just based on its conditional probability, we simply write $g^*_{t}(f)(1 \mid Pa({A}(t))) = g^*_{t}(1 \mid Pa({A}(t)))$.


In the following, we define several testing strategies used to pick $k$ percent of the population to be tested --- from a new \textit{risk-based} testing scheme to commonly allocated strategies, such as \textit{symptomatic} testing and \textit{contact tracing}. In connection to our general stochastic intervention framework, we emphasize that the testing allocation could be a static rule depending on just the current knowledge of the network, recovering the \textit{contact tracing} testing scheme. On the other hand, a static rule depending on the reported symptoms results in \textit{symptomatic} testing. Instead of relying on simple rules, $g^*_{t}$ could depend on the current estimate of ${\bar{Q}}_{0,i,t}$, in which case we can incorporate the current risk of being infected as part of the test allocation strategy. Further, we can then sample based on the current estimate of ${\bar{Q}}_{0,i,t}$, or pick the top $k$ percent ranked samples. We compare all proposed testing strategies to benchmarks where we either know the true unknown status of each individual or the true ${\bar{Q}}_{0,i,t}$. 
We denote as ``realistic'' all testing strategies that can be implemented in the general population during an epidemic. 

\vspace{3mm}
\noindent
\textbf{Testing allocation functions} \\
Let $f$ denote any function that takes as input the conditional probability of being assigned a test given the observed past, and outputs either a stochastic or deterministic rule as to how such probabilities are to be used to assign a test. The function $f$, together with the collected past, defines a stochastic intervention $g_{t}^*$ at time $t$. We study two such functions: $f=f_S$ and $f=f_R$, defining sample and rank functions, respectively. We define $f_S$ as an identity function such that
\begin{equation}\label{rule_sample}
    g^*_{t}(f_S)(1 \mid \bar{O}(t-1)) \equiv g^*_{t}(1 \mid \bar{O}(t-1)) = p^*_{a_i(t)}(A_i(t) \mid \bar{O}(t-1)).
\end{equation}
With that, $g^*_{t}(f_S)(1 \mid \bar{O}(t-1))$ is a stochastic intervention which assigns tests according to the probability of being tested given the past. On the other hand, $f_R$ ranks the current estimate of the conditional probability of being tested and allocates $k$ percent of tests to the top ranked samples. In particular, let $S_P$ denote the survival function of $g^*_{t}(1 \mid \bar{O}(t-1))$ such that $c \mapsto P(g^*_{t}(1 \mid \bar{O}(t-1)) > c)$. Then, we can define $c^* \equiv \inf \{c : S_P(c) \leq k\}$ as the cutoff at which at most $k$ percent of individuals get tested based on $S_P$. We define the rank-based stochastic intervention as 
\begin{equation}\label{rule_rank}
    g^*_{t}(f_R)(1 \mid \bar{O}(t-1)) \equiv \mathbb{I}(g^*_{t}(1 \mid \bar{O}(t-1)) > c^*),
\end{equation}
which allocates tests to the top $k$ percent of individuals with the highest ranked probability of testing given the observed history. In the following subsections, we describe and compare a range of testing allocation schemes based on $g^*_{t}(f)(1 \mid \bar{O}(t-1))$ for each $i \in [n]$ and $t \in [\tau]$, with $f=f_S$ and $f=f_R$.

\vspace{3mm}
\noindent
\textbf{Realistic Testing Strategies}\label{realistic} \\
The ``realistic'' testing strategies include test allocations often described in the literature, including \textit{symptomatic} and \textit{contact tracing}. In addition, it includes the new \textit{risk-based} strategy based on the current estimate of the conditional expectation of ${Y}_i(t)$ given the observed past. In the following, we assume $L(t)$ includes covariates that describe the current presence of symptoms associated with the infectious disease in question ($L^{\text{symp}}(t)$) and the time $t$ network of each individual ($F(t)$). 

\begin{enumerate}
    \item \textbf{Symptomatic Testing} \\
    One of the most commonly described test allocation strategy in the literature is \textit{symptomatic} testing. Briefly, \textit{symptomatic} testing entails giving a test to $k$ percent of individuals with reported symptoms, where $L_i^{\text{symp}}(t)=1$ describes a symptomatic patient at time $t$. We can denote such testing strategy as the following deterministic intervention $g^*_{t}$ at $t$:
\[
   g^*_{t}(f)(1 \mid \bar{O}(t-1))=\begin{cases}
               1, \ \  L_i^{\text{symp}}(t)=1 \\
               0, \ \ L_i^{\text{symp}}(t)=0,
            \end{cases}
\]
    where $E[g^*_{t}(f)(1 \mid \bar{O}(t-1))] \leq k$.    

\item \textbf{Contact Tracing} \\
Another commonly described and implemented testing strategy is based on \textit{contact tracing}. Here, each individual with a current positive test has their entire known network tested for the same cause as well, while respecting the testing resource constraint. The sample $i$'s network consists of family, friends, colleges and all other individuals who came in close contact with an infected individual within a specified time frame. Typically the more comprehensive one's network, the more effective contact tracing is as a testing strategy. We write \textit{contact tracing} as the following deterministic intervention at time $t$:
\[
   g^*_{t}(f)(1 \mid \bar{O}(t-1))=\begin{cases}
               1, \ \  F_i(t)=1 \\
               0, \ \  F_i(t)=0.
            \end{cases}
\]

\item \textbf{Random} \\
    \textit{Random} testing corresponds to assigning tests with equal probability to $k$ proportion of available individuals. Here, no information on the samples, or the current state of the epidemic trajectory, is used to assign tests. The stochastic intervention $g^*_{t}$ corresponds to assigning uniform testing weights
    \begin{equation*}
        g^*_{t}(f)(1 \mid \bar{O}(t-1)) = 1/n,
    \end{equation*}
    at each time $t \in [\tau]$ and for all samples $i \in [n]$.

\vspace{1em}
\item \textbf{Risk-Based Testing} \\
Instead of relying solely on the current symptoms, known network of each patient or the combination of both --- we can instead incorporate an estimate of the current risk of being infected into the testing scheme. In particular, we use the current estimate of ${\bar{Q}}_{0,i,t}$ fit on the training set and available covariates in order to assign tests at the next time step. 
We can define the \textit{risk-based} testing scheme as the following intervention at time $t$:
\begin{equation*}
    g^*_{t}(f)(1 \mid \bar{O}(t-1)) \equiv f(\hat{\bar{Q}}_{i,t}(Pa(L(t)))
\end{equation*}
which assigns tests based on the current estimate of one's risk of being infected, given their observed past. 
\end{enumerate}

\vspace{3mm}
\noindent
\textbf{Benchmark Testing Strategies}\label{benchmark} \\
In the following, we define several benchmark testing strategies used to evaluate performance of above described ``realistic'' testing schemes. As previously noted, most benchmark allocations are impossible to implement in practice due to ethical concerns, dependence on unknown statistical quantities, or latent variables.

\begin{enumerate}
    \item \textbf{No Testing} \\
     The first benchmark strategy includes the natural progression of the epidemic when no individuals are isolated, and no tests are allocated. As such, we define $g^*_{t}$ at each time $t \in [\tau]$ and for all samples $i \in [n]$ as $g^*_{t}(f)(1 \mid \bar{O}(t-1)) = 0$.
     
    \item \textbf{True Risk} \\
    The true risk benchmark entails using the true risk function in order to assign tests at each time point $t \in [\tau]$, instead of the current estimate of ${\bar{Q}}_{0,i,t}$. The true risk scheme is based on unknown components of the process, and thus is not possible to implement in practice (unless we \textit{a priori} know ${\bar{Q}}_{0,i,t}$). Nevertheless, it is a useful benchmark for \textit{risk-based} testing strategies, setting an upper limit on their performance over time. We can denote the true risk testing benchmark as the following intervention at time $t$:
    \begin{equation*}
        g^*_{t}(f)(1 \mid \bar{O}(t-1)) \equiv f({\bar{Q}}_{0,i,t}(Pa(L(t)))
        \end{equation*}
    which assigns tests based on the true risk of being infected at time $t$, given the observed past. 
    
     \item \textbf{Perfect (True Status)} \\
    The true status benchmark entails testing $k$ proportion of individuals with current active infection. As $Y^l(t)$ is a latent variable, it is not observed. While impossible to implement in practice, the true status benchmark serves as an upper bound on the effectiveness of testing for infection control. We define it 
    as the following deterministic intervention $g^*_{t}$ at time $t$:
    \[
        g^*_{t}(f)(1 \mid \bar{O}(t-1))=\begin{cases}
               1, \ \  Y^l_i(t)=1 \\
               0, \ \  Y^l_i(t)=0.
            \end{cases}
    \]
\end{enumerate}


\tikzstyle{decision} = [diamond, draw=red, fill=white!20, 
    text width=4em, text badly centered, node distance=3.5cm, inner sep=0pt, font=\bfseries]
\tikzstyle{block} = [rectangle, draw, fill=white!20, 
    text width=8em, text centered, rounded corners, minimum height=6em, font=\bfseries]
\tikzstyle{line} = [draw, -latex']
\tikzstyle{cloud} = [draw, ellipse,fill=white!20, node distance=4cm, minimum height=6em]

\begin{center} 
\begin{figure}[!htbp]\label{figure::transition}
\begin{tikzpicture}[node distance = 3cm, auto]
    \node [block, draw=red] (S) {Susceptible (S)};
    \node [block, draw=red, below of=S] (E) {Exposed (E)};
    \node [block,draw=red, below of=E] (D) {Detectable Infection (It)};
    \node [decision, below of=D] (Symp) {Onset (Is/Ia)};
    \node [cloud, below of=Symp] (Is) {Symptomatic};
    \node [cloud, left of=Symp] (Ia) {Asymptomatic};
    \node [decision, right of=Symp] (F) {Final Stage (R/I)};
    \node [cloud, right of=F] (R) {Recovered};
    \node [cloud, below of=F] (Iso) {Isolated};
    
    \path [line] (S) -- (E);
    \path [line] (E) -- node{4.5 days}(D);
    \path [line] (D) -- node{1 day}(Symp);
    \path [line,dashed] (Symp) -- (Is);
    \path [line,dashed] (Symp) -- (Ia);
    \path [line] (Symp) -- (F);
    \path [line,dashed] (F) -- (R);
    \path [line,dashed] (F) -- (Iso);
    
\end{tikzpicture}

\caption{{Disease progression schematics of SARS-CoV-2 in an infected individual in the agent-based model used for all simulations. Average time to transition from Exposed status (E) to Detectable Infection (It), and from It to either Asymptotic (Ia) or Symptomatic (Is) status is depicted in-between relevant states.}}
\end{figure}
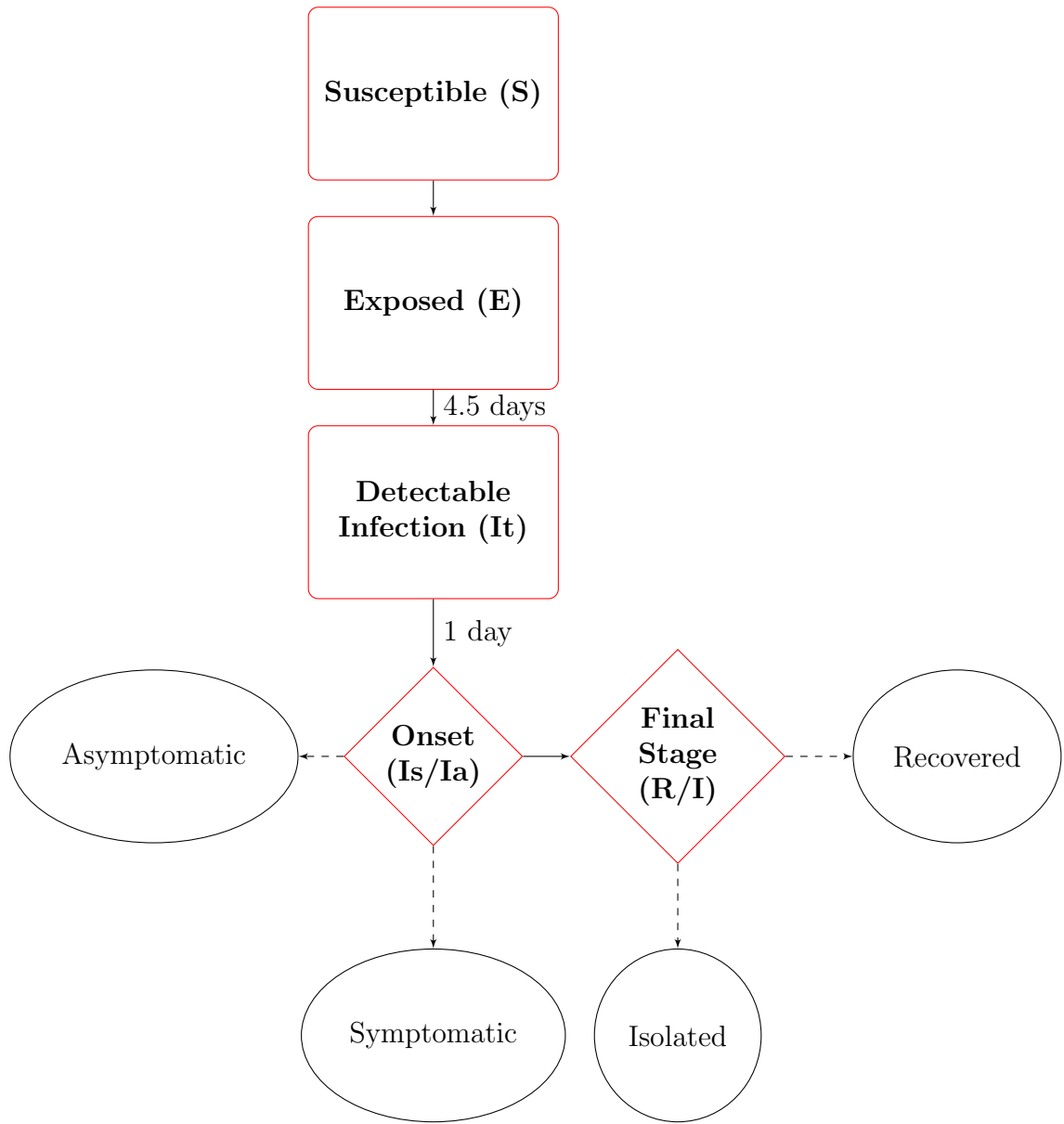  
\end{center}
\newpage

\subsection{Additional Simulations}\label{sec::add_sims}

We investigate design performance under different levels of outside transmission governed by the risk scale parameter. In our agent-based model, higher values of the risk parameter correspond to higher weighted latent parts of the network and individual risk on the transmission dynamics. Intuitively, as the risk parameter increases, it becomes more difficult to identify infectious individuals. In the following, we show the cumulative incidence curve at each time point (Figure \ref{fig::trajectory_rs}) and the final cumulative incidence by $t=120$ (Figure \ref{fig::final_rs}). Similarly to previously reported results for the risk scale parameter (value 0.5) in Section \ref{sec::sim_results}, the Online SL for adaptive surveillance with TMLE-CI-based selector outperforms all other designs over a grid of risk parameters (values considered: 0.4, 0.6, 0.7). In fact, as the problem becomes harder, TMLE-CI-based selector seems to gain even more advantage over the second best design (CI for risk parameter 0.4: (2.1\%,2.5\%) vs. (2.8\%,2.9\%); CI for risk parameter 0.5: (3.6\%,4.2\%) vs. (4.3\%,5.0\%); CI for risk parameter 0.6: (5.6\%,6.5\%) vs. (8.4\%,9.9\%); CI for risk parameter 0.7: (10.5\%,12.1\%) vs. (16.2\%,18.1\%)). In comparison to the standard testing practice, \textit{symptomatic + contact}, Online SL for adaptive surveillance with TMLE-CI-based selector vastly outperforms across all risk scale parameters (CI for risk parameter 0.4: (2.1\%,2.5\%) vs. (3.5\%,4.1\%); CI for risk parameter 0.5: (3.6\%,4.2\%) vs. (6.8\%,7.7\%); CI for risk parameter 0.6: (5.6\%,6.5\%) vs. (12.1\%,13.8\%); CI for risk parameter 0.7: (10.5\%,12.1\%) vs. (21.9\%,23.4\%)). Comparison with the \textit{risk-based} testing scheme remains parallel to \textit{symptomatic + contact}, with high advantages over \textit{random} testing as well, as shown in Figure \ref{fig::final_rs}. Finally, as expected, higher risk scale parameter values correspond to a higher average final cumulative incidence under any design. Even for the TMLE-CI-based selector, as the risk scale jumps to 0.7, average final cumulative incidence at $k=600$ is $11.3\%$ vs. $3.9\%$ at a value of 0.5. Nevertheless, no matter the level of difficulty of the problem, the ranking of the designs in terms of testing performance remains the same. 
\FloatBarrier

\vspace{5mm}
\begin{figure*}[!htbp]
\centering \makeatletter\includegraphics[width=1\linewidth]{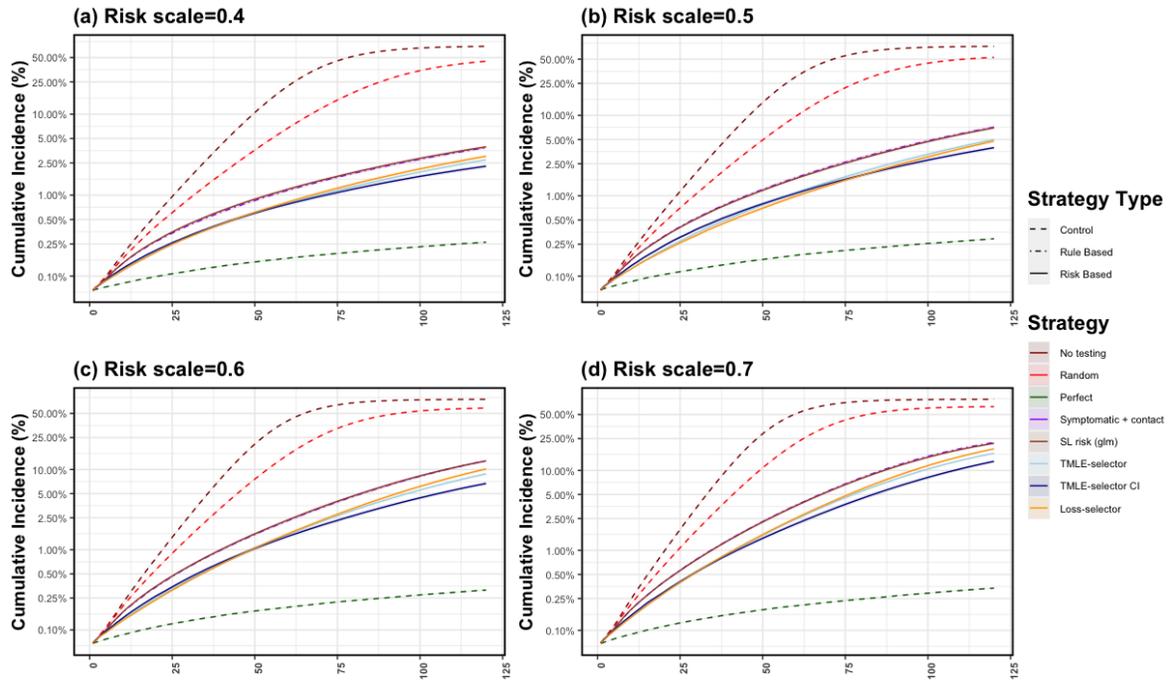}
\makeatother 
\caption{{Average cumulative incidence at each time point over $250$ simulations with $n=20,000$ sample size and testing capacity $k=600$ using TMLE-based, TMLE-CI-based and loss-based selectors of the testing strategy. Performance is evaluated at different risk scale values $\{0.4,0.5,0.6,0.7\}$, where higher risk scale score corresponds to higher individual risk. We compare different proposed selectors to \textit{symptomatic + contact}, \textit{random} and \textit{glm risk-based} testing, with \textit{perfect} as the upper and \textit{no testing} as the lower bound on performance.}}
\label{fig::trajectory_rs}
\end{figure*}

\FloatBarrier
\vspace{5mm}
\begin{figure*}[!htbp]
\centering \makeatletter\includegraphics[width=1\linewidth]{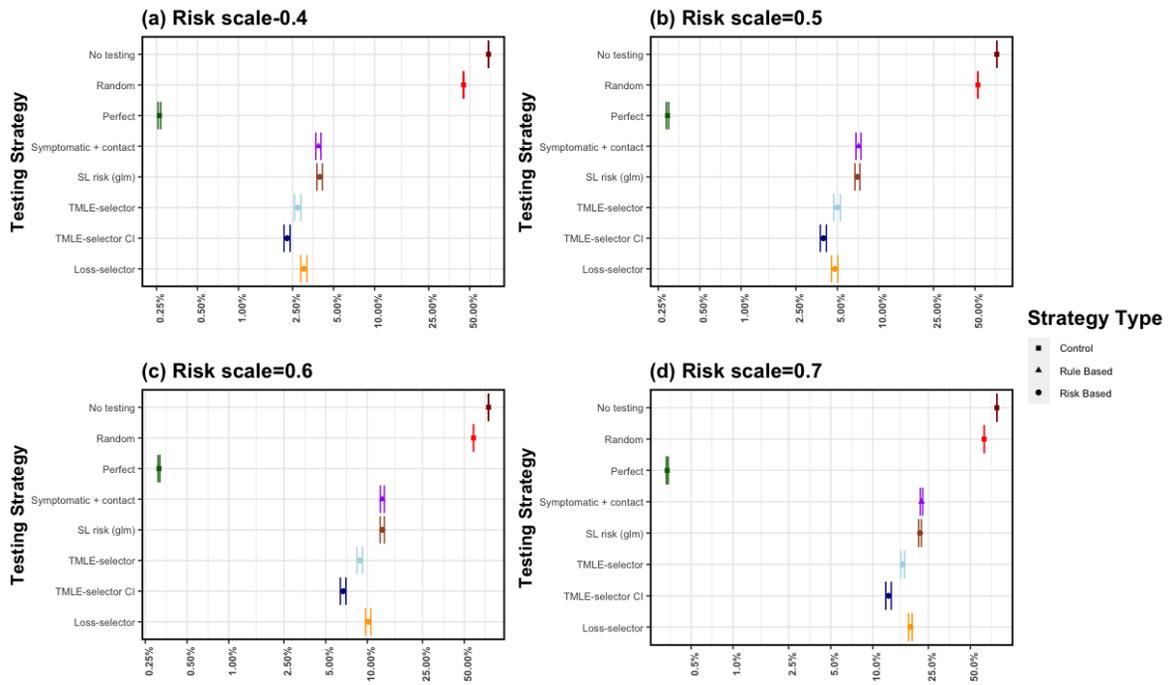}
\makeatother 
\caption{{Average final cumulative incidence at $t=120$ over $250$ simulations, with $n=20,000$ sample size and testing capacity $k=600$ using TMLE-based, TMLE-CI-based and loss-based selectors. Performance is evaluated at different risk scale values $\{0.4,0.5,0.6,0.7\}$, where higher risk scale score corresponds to higher individual risk. We compare different proposed selectors to \textit{symptomatic + contact}, \textit{random} and \textit{glm risk-based} testing, with \textit{perfect} as the upper and \textit{no testing} as the lower bound on performance.}}
\label{fig::final_rs}
\end{figure*}

\FloatBarrier
\vspace{5mm}
\begin{figure*}[!htbp]
\centering \makeatletter\includegraphics[width=1\linewidth]{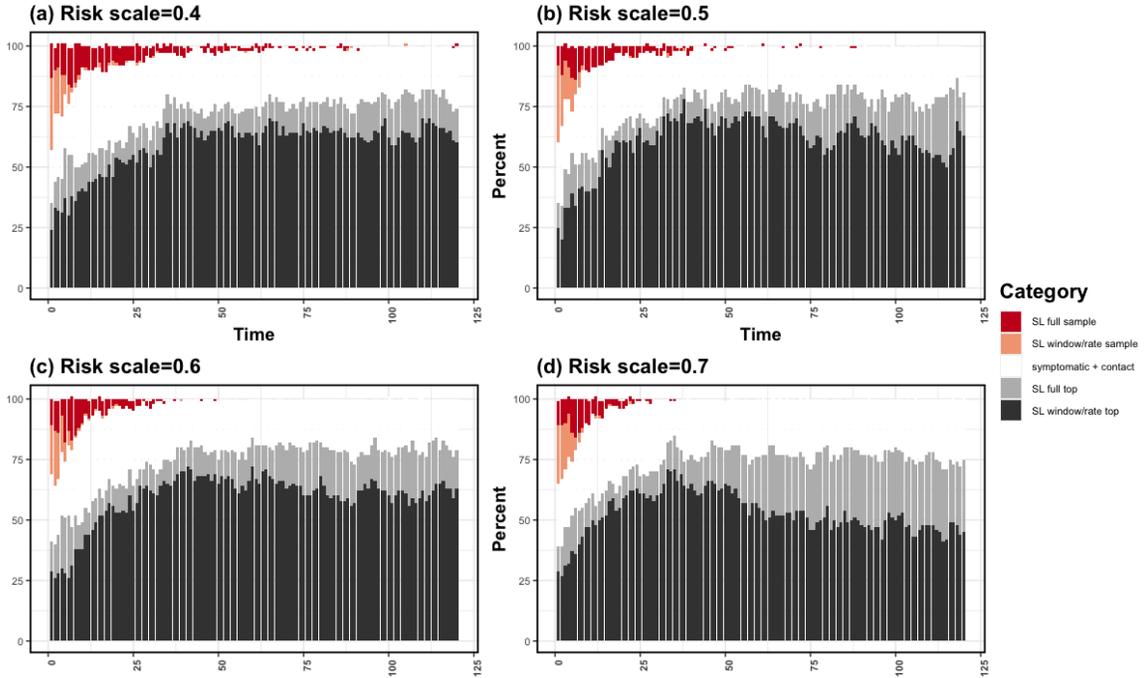}
\makeatother 
\caption{{Percent of times a given candidate testing allocation design is selected over the full trajectory and $250$ simulations with $n=20,000$ sample size and testing capacity $k=600$ using the TMLE-CI-based selector. Performance is evaluated at different risk scale values $\{0.4,0.5,0.6,0.7\}$, where higher risk scale score corresponds to higher individual risk. Candidate designs include \textit{symptomatic + contact} and various \textit{risk-based} designs where the Super Learner is either trained on the full past, exponentially weighted past, or a window of $t=14$ days. All \textit{risk-based} testing strategies also consider different sampling schemes, including picking the top samples for testing or randomly sampling based on the estimated risk.}}
\label{fig::gstars_rs}
\end{figure*}

\newpage
\bibliography{online_sl_adapt_testing}%
 
\end{document}